\renewcommand{\algorithmicrequire}{\textbf{Input:}}
\newtheorem{theorem}{Theorem}
\newtheorem{lemma}[theorem]{Lemma}
\newtheorem{definition}[theorem]{Definition}
\newtheorem*{remark}{Remark}
\newtheorem{observation}[theorem]{Observation}
\newcommand{\e}{\varepsilon}
\newcommand{\J}{\mathcal{J}}
\newcommand{\M}{\mathcal{M}}
\newcommand{\Sc}{\mathcal{S}}
\newcommand{\LPT}{\textnormal{LPT}}
\newcommand{\OPT}{\textnormal{OPT}}
\newcommand{\Alg}{\mathcal{A}}
\newcommand{\UB}{\textnormal{\footnotesize{UB}}}
\newcommand{\tp}{\tilde{p}}
\newcommand{\tP}{\tilde{P}}
\newcommand{\load}{\textnormal{load}}
\newcommand{\lmin}{\ell_{\min}}
\newcommand{\Me}{\mathcal{M}^{=}}
\newcommand{\Mne}{\mathcal{M}^{\neq}}
\newcommand{\Je}{\mathcal{J}^{=}}
\newcommand{\jvcom}[1]{}
\newcommand{\jscom}[1]{}
\newcommand{\wgcom}[1]{}
\algrenewcommand\algorithmiccomment[2][\normalsize]{{#1\hfill\(\triangleright\) \texttt{#2}}}
\title{Symmetry exploitation for Online Machine Covering with Bounded Migration\footnote{This work was partially supported by FONDECYT project 11140579, FONDECYT project 11130266, and by Nucleo Milenio Informaci\'on y Coordinaci\'on en Redes ICM/FIC RC130003.}}
\author{Waldo G\'alvez\footnote{IDSIA, USI-SUPSI, Lugano, Switzerland. E-mail: \texttt{waldo@idsia.ch}} \and Jos\'e A. Soto\footnote{Departamento de Ingeniería Matemática \& CMM, Universidad de Chile, Santiago, Chile. E-mail: \texttt{jsoto@dim.uchile.cl}} \and Jos\'e Verschae\footnote{Facultad de Matemáticas \& Escuela de Ingeniería, Pontificia Universidad Católica de Chile, Santiago, Chile. E-mail: \texttt{jverschae@uc.cl}}}
\begin{document}
	\maketitle
	\begin{abstract}
	
	Online models that allow recourse are highly effective in situations where classical models are too pessimistic. One such problem is the online machine covering problem on identical machines. 
	In this setting, jobs arrive one by one and must be assigned to machines with the objective of maximizing the minimum machine load. 
	When a job arrives, we are allowed to reassign some jobs as long as their total size is (at most) proportional to the processing time of the arriving job. The proportionality constant is called the \emph{migration factor} of the algorithm.
	
	Using a rounding procedure with useful structural properties for online packing and covering problems, we design first a simple $(1.7 + \e)$-competitive algorithm using a migration factor of $O(1/\e)$ which maintains at every arrival a locally optimal solution with respect to the Jump neighborhood. After that, we present as our main contribution a more involved $(4/3+\e)$-competitive algorithm using a migration factor of $\tilde{O}(1/\e^3)$. At every arrival, we run an adaptation of the \emph{Largest Processing Time first} ($\LPT$) algorithm. Since the new job can cause a complete change of the assignment of smaller jobs in both cases, a low migration factor is achieved by carefully exploiting the highly symmetric structure obtained by the rounding procedure.
\end{abstract}

\section{Introduction} 
We consider a fundamental load balancing problem where $n$ jobs need to be assigned to $m$ identical parallel machines. Each job $j$ is fully characterized by a non-negative processing time $p_j$. Given an assignment of jobs, the load of a machine is the sum of the processing times of jobs assigned to it. The \emph{machine covering problem} asks for an assignment of jobs to machines maximizing the load of the least loaded machine.

This problem is well known to be strongly $\text{NP}$-hard and allows for a polynomial-time approximation scheme (PTAS)~\cite{W97}. A well studied algorithm for this problem is the \emph{Largest Processing Time First} rule ($\LPT$), that sorts the jobs non-increasingly and assigns them iteratively to the least loaded machine. Deuermeyer et al. \cite{DFL82} show that $\LPT$ is a $\frac{4}{3}$-approximation and that this factor is asymptotically tight; later, Csirik et al. \cite{CKW92} refine the analysis giving a tight bound for each~$m$.

In the online setting jobs arrive one after another, and at the moment of an arrival, we must decide on a machine to assign the arriving job. This natural problem does not admit a constant competitive ratio. Deterministically, the best possible competitive ratio is~$m$~\cite{W97}, while randomization allows for a $\tilde{O}(\sqrt{m})$-competitive algorithm, which is the best possible up to logarithmic factors~\cite{AE98}.

\paragraph*{\textbf{Dynamic model}.} The previous negative facts motivate the study of a relaxed online scenario with \emph{bounded migration}. Unlike the classic online model, when a new job $j$ arrives we are allowed to reassign other jobs. More precisely, given a constant $\beta>0$, we can migrate jobs whose total size is upper bounded by $\beta p_j$. The value $\beta$ is called the \emph{migration factor} and it accounts for the robustness of the computed solutions. In one extreme, we can model the usual online framework by setting $\beta=0$. In the other extreme, setting $\beta=\infty$ allows to compute the optimal offline solution in each iteration. Our main interest is to understand the exact trade-off between the migration factor $\beta$ and the competitiveness of our algorithms. Besides being a natural problem with an interesting theoretical motivation, its original purpose was to find good algorithms for a problem in the context of Storage Area Networks~(SAN)~\cite{SSS09}. 

\paragraph*{\textbf{Local search and migration.}} The local search method has been extensively used to tackle different hard combinatorial problems, and it is closely related to online algorithms where recourse is allowed. This comes from the fact that simple local search neighborhoods allow to get considerably improved solutions while having accurate control over the recourse actions needed, and in some cases even a bounded number of local moves leads to substantially improved solutions (see \cite{MSVW16,GGK16,Lacki2015} for examples in network design problems).

\paragraph*{\textbf{Related Work.}}

Sanders et al.~\cite{SSS09} develop online algorithms for load balancing problems in the migration framework. For the makespan minimization objective, where the aim is to minimize the maximum load, they give a $(1+\e)$-competitive algorithm with migration factor $2^{\tilde{O}(1/\e)}$. A mayor open problem in this area is to determine whether a migration factor of $\text{poly}(1/\e)$ is achievable. 

The landscape for the machine covering problem is somewhat different. Sanders et al.~\cite{SSS09} give a $2$-competitive algorithm with migration factor $1$, and this is until now the best competitive ratio known for any algorithm with constant migration factor. On the negative side, Skutella and Verschae \cite{SV10} show that it is not possible to maintain arbitrarily near optimal solutions using a constant migration factor, giving a lower bound of $20/19$ for the best competitive ratio achievable in that case. The lower bound is based on an instance where arriving jobs are very small, which do not allow to migrate any other job. This motivated the study of an amortized version, called  \emph{reassignment cost model}, where they develop a $(1+\e)$-competitive algorithm using a constant reassignment factor. They also show that if all arriving jobs are larger than $\e\cdot\OPT$, then there is a $(1+\e)$-competitive algorithm with constant migration factor. 

Similar migration models have been studied for other packing and covering problems. For example, Epstein \& Levin \cite{EL09} design a $(1+\e)$-competitive algorithm for the online bin packing problem using a migration factor of $2^{\tilde{O}(1/\e^2)}$, which was improved later by Jansen \& Klein~\cite{JK13} to $\text{poly}(1/\e)$ migration factor, and then further refined by Berndt~ et al.~\cite{BJK15}. Also, for makespan minimization with preemption and other objectives, Epstein \& Levin \cite{EL14} design a best-possible online algorithm using a migration factor of~$\left(1-\frac{1}{m}\right)$.

Regarding local search applied to load balancing problems, many neighborhoods have been studied such as \emph{Jump}, \emph{Swap}, \emph{Push} and \emph{Lexicographical Jump} in the context of makespan minimization on related machines \cite{SV07}, makespan minimization on restricted parallel machines \cite{RRSV10}, and also multi-exchange neighborhoods for makespan minimization on identical parallel machines \cite{FNS04}. For the case of machine covering, Chen et al.~\cite{CEKvS13} study the Jump neighborhood in a game-theoretical context, proving that every locally optimal solution is $1.7$-approximate and that this factor is tight.

\paragraph*{\textbf{Our Contribution.}} Our main result is a $(4/3+\e)$-competitive algorithm using $\text{poly}(1/\e)$ migration factor. This is achieved by running a carefully crafted version of $\LPT$ at the arrival of each new job. We would like to stress that, even though $\LPT$ is a simple and very well studied algorithm in the offline context, directly running this algorithm in each time step in the online context yields an unbounded migration factor; see Figure~\ref{fig:LPT_no_red} for an illustrative example and Lemma~\ref{lm:nonConstantMigration} in Appendix~\ref{app:nonConstant} for a proof. 

\begin{figure}
	\centering
	\captionsetup[subfigure]{justification=centering}
	\begin{subfigure}[b]{.5\textwidth}
		\centering
\begin{tikzpicture}[xscale=0.5,yscale=0.4]


\draw (7.5,-0.5) rectangle (12.5,4.5);
\draw (8,2) node {$1$};
\draw (9,2) node {$2$};
\draw (10,2) node {$3$};
\draw (11,2) node {$4$};
\draw (12,2) node {$5$};
\draw (12.5,-0.5) rectangle (13.5,2.75);
\draw (13,1.125) node {$6$};
\draw (13.5,-0.5) rectangle (14.5,2.5);
\draw (14,1) node {$7$};
\draw (14.5,-0.5) rectangle (15.5,2.25);
\draw (15,0.875) node {$8$};
\draw (15.5,-0.5) rectangle (16.5,2);
\draw (16,0.75) node {$9$};
\draw (12.5,2.75) rectangle (13.5,4.25);
\draw (13,3.5) node {${13}$};
\draw (13.5,2.5) rectangle (14.5,4.25);
\draw (14,3.375) node {${12}$};
\draw (14.5,2.25) rectangle (15.5,4.25);
\draw (15,3.25) node {${11}$};
\draw (15.5,2) rectangle (16.5,4.25);
\draw (16,3.125) node {${10}$};
\draw (12.5,4.25) rectangle (13.5,5.75);
\draw (13,5) node {${14}$};
\draw (13.5,4.25) rectangle (14.5,5.75);
\draw (14,5) node {${15}$};
\draw (14.5,4.25) rectangle (15.5,5.75);
\draw (15,5) node {${16}$};
\draw (15.5,4.25) rectangle (16.5,5.75);
\draw (16,5) node {${17}$};

\draw (17,-0.5) rectangle (18,3);
\draw (17.5,1.25) node {${j^*}$};


\draw (7.5,7) -- (7.5,-0.5) -- (16.5,-0.5) -- (16.5,7) ;
\draw (8.5,7) -- (8.5,-0.5);
\draw (9.5,7) -- (9.5,-0.5);
\draw (10.5,7) -- (10.5,-0.5);
\draw (11.5,7) -- (11.5,-0.5);
\draw (12.5,7) -- (12.5,-0.5);
\draw (13.5,7) -- (13.5,-0.5);
\draw (14.5,7) -- (14.5,-0.5);
\draw (15.5,7) -- (15.5,-0.5);

\end{tikzpicture}
		\caption{$\LPT$ for the original instance \\and arriving job $j^*$.}
		\label{fig:LPT_no_red1}
	\end{subfigure}%
	\begin{subfigure}[b]{.5\textwidth}
		\centering
\begin{tikzpicture}[xscale=0.5,yscale=0.4]


\draw (7.5,-0.5) rectangle (12.5,4.5);
\draw (8,2) node {$1$};
\draw (9,2) node {$2$};
\draw (10,2) node {$3$};
\draw (11,2) node {$4$};
\draw (12,2) node {$5$};
\draw (12.5,-0.5) rectangle (13.5,3);
\draw (13,1.25) node {${j^*}$};
\draw (13.5,-0.5) rectangle (14.5,2.75);
\draw (14,1.125) node {$6$};
\draw (14.5,-0.5) rectangle (15.5,2.5);
\draw (15,1) node {$7$};
\draw (15.5,-0.5) rectangle (16.5,2.25);
\draw (16,0.875) node {$8$};
\draw[draw=black,ultra thick] (12.5,3) rectangle (13.5,4.75);
\draw (13,3.875) node {${12}$};
\draw[draw=black,ultra thick] (13.5,2.75) rectangle (14.5,4.75);
\draw (14,3.75) node {${11}$};
\draw[draw=black,ultra thick] (14.5,2.5) rectangle (15.5,4.75);
\draw (15,3.625) node {${10}$};
\draw[draw=black,ultra thick] (15.5,2.25) rectangle (16.5,4.75);
\draw (16,3.5) node {${9}$};
\draw[draw=black,ultra thick]  (7.5,4.5) rectangle (12.5,6);
\draw (8,5.25) node {${17}$};
\draw (9,5.25) node {${16}$};
\draw (10,5.25) node {${15}$};
\draw (11,5.25) node {${14}$};
\draw (12,5.25) node {${13}$};
\draw[draw=black,ultra thick]  (8.5,4.5) -- (8.5,6);
\draw[draw=black,ultra thick]  (9.5,4.5) -- (9.5,6);
\draw[draw=black,ultra thick]  (10.5,4.5) -- (10.5,6);
\draw[draw=black,ultra thick]  (11.5,4.5) -- (11.5,6);


\draw (7.5,7) -- (7.5,-0.5) -- (16.5,-0.5) -- (16.5,7) ;
\draw (8.5,7) -- (8.5,-0.5);
\draw (9.5,7) -- (9.5,-0.5);
\draw (10.5,7) -- (10.5,-0.5);
\draw (11.5,7) -- (11.5,-0.5);
\draw (12.5,7) -- (12.5,-0.5);
\draw (13.5,7) -- (13.5,-0.5);
\draw (14.5,7) -- (14.5,-0.5);
\draw (15.5,7) -- (15.5,-0.5);

\end{tikzpicture}
		\caption{$\LPT$ for the new instance. Thick items correspond to migrated jobs.}
		\label{fig:LPT_no_red2}
	\end{subfigure}
	\caption{$\Omega(m)$ migration factor needed to maintain $\LPT$ at the arrival of $j^*$.}
	\label{fig:LPT_no_red}
\end{figure}
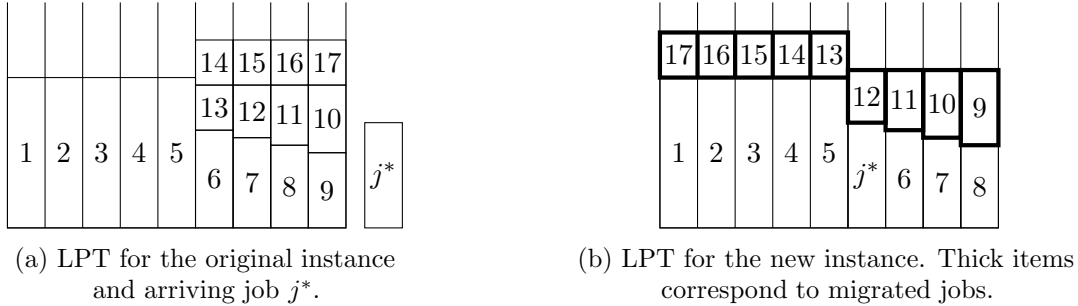

To overcome this barrier, we first adapt a less standard rounding procedure to the online framework. Roughly speaking, the rounding reduces the possible number of sizes of jobs larger than $\Omega(\e\OPT)$ (where $\OPT$ is the offline optimum value) to $\tilde{O}(1/\e)$ many numbers, and furthermore these values are multiples of a common number $g\in \Theta(\e^2\OPT)$. 
This implies that the number of possible loads for machines having only big jobs is constant since they are multiples of $g$ as well. Unlike known techniques used in previous work that yield similar results (see e.g.~\cite{JKV16}), our rounding is well suited for online algorithms and helps simplifying the analysis as it does not depend on $\OPT$ (which varies through iterations). 

In order to show the usefulness of the rounding procedure, we first present a simple $(1.7+\e)$-competitive algorithm using a migration factor of $O(1/\e)$. This algorithm maintains through the arrival of new jobs a locally optimal solution with respect to Jump for large jobs and a greedy assignment for small jobs on top of that. Although for general instances this can induce a very large migration factor as discussed before, for rounded instances we can have a very accurate control on the jumps needed to reach a locally optimal solution by exploiting the fact that there are constant many possible processing times for large jobs. 

In the second part of the paper we proceed with the analysis of our $(4/3+\e)$-competitive algorithm. Here we crucially make use of the properties obtained by the rounding procedure to create symmetries. After a new job arrival we re-run the LPT algorithm for the new instance. While assigning a job to a current least loaded machine, since there is a constant number of possible machine loads, there will usually be multiple least loaded machines to assign the job. All options lead to different (but symmetric) solutions in terms of job assignments, all having the same load vector and thus the same objective value. Broadly speaking, the algorithm will construct one of these symmetric schedules, trying to maintain as many machines with the same assignments as in the previous time step. The analysis of the algorithm will rely on monotonicity properties implied by LPT which, coupled with rounding, implies that for every job size the increase in the number of machines with different assignments (w.r.t the solution of the previous time step) is constant. This finally yields a migration factor that only grows polynomially in $1/\e$. 
Finally, we give a lower bound of $17/16$ for the best competitive ratio achievable by an algorithm with constant migration, improving the previous bound from Skutella \& Verschae~\cite{SV10}. 

\section{Preliminaries}\label{LPTsec}

Consider a set of $n$ jobs $\J$ and a set of $m$ machines $\M$. In our problem, a solution or schedule $\Sc:\J\rightarrow \M$ corresponds to an assignment of jobs to machines. The set of jobs assigned to a machine $i$ is then $\Sc^{-1}(i)\subseteq J$. The load of machine $i$ in $\Sc$ corresponds to $\ell_i(\Sc) = \sum_{j\in \Sc^{-1}(i)} p_j$. The minimum load is denoted by $\lmin(\Sc)=\min_{i\in \M} \ell_i(\Sc)$, and a machine $i$ is said to be \emph{least loaded} in $\Sc$ if $\ell_i(\Sc) = \lmin(\Sc)$.	

For an algorithm $\Alg$ and an instance $(\J,\M)$, we denote by $\Sc_{\Alg}(\J,\M)$ the schedule returned by $\Alg$ when run on $(\J,\M)$. Similarly, $\Sc_{\OPT}(\J,\M)$ denotes the optimal schedule, being $\OPT(\J,\M)$ its minimum load. When it is clear from the context, we will drop the dependency on $\J$ or $\M$.

\subsection{Algorithms with robust structure}

An important fact used in the design of the robust PTAS for makespan minimization from Sanders et al.~\cite{SSS09} is that small jobs can be assigned greedily almost without affecting the approximation guarantee. This is however not the case for machine covering; see, e.g.~\cite{SV10} or Section~\ref{sec:lowerbound}. One way to avoid this inconvenience is to develop algorithms that are oblivious to the arrival of small jobs, that is, algorithms where the assignment of big jobs is not affected when a new small job arrives.

\begin{definition}\label{EstRob} Let $h\in\mathbb{R}_+$. An algorithm $\Alg$ has \textbf{robust structure at level $h$} if, for any instance $(\J,\M)$ and $j^*\notin \J$ such that $p_{j^*}< h$, $\Sc_{\Alg}(\J,\M)$ and $\Sc_{\Alg}(\J\cup\{j^*\},\M)$ assign to the same machines all the jobs in $\J$ with processing time at least $h$. \end{definition}

This definition highlights also the usefulness of working with the $\LPT$ rule, since the addition of a new small job to the instance does not affect the assignment of larger jobs. Indeed, it is easy to see the following.

\begin{remark} 
	For any $h\in \mathbb{R}_+$, $\LPT$ has robust structure at level $h$.
\end{remark}

We proceed now to define \emph{relaxed} solutions where, roughly speaking, small jobs are added greedily on top of the assignment of big jobs.

\begin{definition}\label{k-rel} Let $\Alg$ be an $\alpha$-approximation algorithm for the machine covering problem, with $\alpha$ constant, $k_1, k_2\in \mathbb{R}_+$ constants, $1\le k_1\le k_2$ and $\e>0$. Given a machine covering instance $(\J,\M)$, a schedule $\Sc$ is a \textbf{$(k_1,k_2)$-relaxed version of $\Sc_{\Alg}$} if: 
	\begin{enumerate} 
		\item jobs with processing time at least $k_1\e\OPT$ are assigned exactly as in $\Sc_\Alg$, and
		\item for every machine $i\in \M$, if $\Sc$ assigns at least one job of size less than $k_1\e\OPT$ to $i$, then $\ell_i(\Sc) \le \lmin(\Sc) + k_2\e\OPT$.
	\end{enumerate}
\end{definition}

The following lemma shows that we can consider relaxed versions of known algorithms or solutions while almost not affecting the approximation factor. This will be helpful to control the migration of small jobs.

\begin{lemma}\label{RobGree} Let $\Alg$ be an $\alpha$-approximation, $\alpha\ge1$ constant, $k_1, k_2\in\mathbb{R}_+$ constants, $1\le k_1\le k_2$, $0<\e<\frac{1}{2k_2\alpha}$ and $(\J,\M)$ a machine covering instance. Every $(k_1,k_2)$-relaxed version of $\Sc_{\Alg}$ is an $(\alpha + O(\e))$-approximate solution. \end{lemma}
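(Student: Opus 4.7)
The plan is to establish the lower bound $\lmin(\Sc) \ge \OPT/\alpha - k_2\e\,\OPT$ and then convert it, using the hypothesis $\e < 1/(2 k_2 \alpha)$, into the claimed approximation ratio $\alpha + O(\e)$. The degenerate case where no job has size below $k_1\e\,\OPT$ is handled directly: condition~(1) of Definition~\ref{k-rel} then forces $\Sc = \Sc_{\Alg}$, so $\lmin(\Sc) \ge \OPT/\alpha$ and there is nothing to prove. For the main case, I would decompose each machine's load into its big-job part $b_i$ (jobs with $p_j \ge k_1\e\,\OPT$) and its small-job part, writing $\ell_i(\Sc) = b_i + s_i$ and $\ell_i(\Sc_{\Alg}) = b_i + \tilde{s}_i$; by condition~(1), the $b_i$'s are common to both schedules. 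Letting $M_1 \subseteq \M$ be the set of machines to which $\Sc$ assigns at least one small job, and noting that the total small-job mass is identical in both schedules while in $\Sc$ it can only reside on $M_1$, I would record the identity $\sum_{i \in M_1} s_i = \sum_{i \in \M} \tilde{s}_i \ge \sum_{i \in M_1} \tilde{s}_i$.

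The core step is a contradiction argument. Suppose $\lmin(\Sc) < \OPT/\alpha - k_2\e\,\OPT$. Condition~(2) of Definition~\ref{k-rel} then gives $b_i + s_i = \ell_i(\Sc) \le \lmin(\Sc) + k_2\e\,\OPT < \OPT/\alpha$ for every $i \in M_1$, while the $\alpha$-approximation of $\Alg$ guarantees $b_i + \tilde{s}_i \ge \OPT/\alpha$ on every machine; subtracting, we get $\tilde{s}_i > s_i$ for every $i \in M_1$. Summing strictly over $M_1$ yields $\sum_{i \in M_1} \tilde{s}_i > \sum_{i \in M_1} s_i$, contradicting the mass identity from the previous paragraph and establishing the claimed lower bound on $\lmin(\Sc)$.

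The wrap-up is routine arithmetic: from $\lmin(\Sc) \ge \OPT(1/\alpha - k_2\e)$ together with $\alpha k_2\e < 1/2$, I would invert and use $1/(1-x) \le 1 + 2x$ for $x \in [0, 1/2]$ to obtain $\OPT/\lmin(\Sc) \le \alpha/(1 - \alpha k_2\e) \le \alpha + 2\alpha^2 k_2\e = \alpha + O(\e)$. The main conceptual point is the bookkeeping of small-job mass on $M_1$: it is precisely what turns the per-machine inequality $\tilde{s}_i > s_i$ into a global contradiction, and it explains why condition~(2) of Definition~\ref{k-rel} is imposed only on machines that actually receive a small job rather than on every machine.
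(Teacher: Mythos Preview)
Your proof is correct and follows essentially the same route as the paper's. Both argue by contradiction from $\lmin(\Sc) < \OPT/\alpha - k_2\e\,\OPT$, exploit that all small-job mass in $\Sc$ sits on the set $M_1$ (the paper calls it $\M_s$), and derive that $\Sc_{\Alg}$ would have a machine of load below $\OPT/\alpha$; the paper phrases the last step as an averaging argument over $\M_s$, while you phrase it as a per-machine strict inequality $\tilde s_i > s_i$ summed over $M_1$, but these are the same bookkeeping and yield the same constant $\alpha + 2k_2\alpha^2\e$.
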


\begin{proof} Suppose by contradiction that there exists a $(k_1,k_2)$-relaxed version of $\Sc_{\Alg}$, say $\Sc$, which is not $(\alpha + 2k_2\alpha^2\e)$-approximate. This implies that $\lmin(\Sc) < \frac{1}{\alpha+2k_2\alpha^2\e}\OPT \le \left( \frac{1}{\alpha} - k_2\e\right) \OPT$.
	
	Let $\M_s$ the set of machines where $\Sc$ assigns at least one job of size less than $k_1\e\OPT$. Notice that $\M_s \neq \emptyset$ and actually the least loaded machine in $\Sc$ belongs to $\M_s$, because otherwise $\lmin(\Sc_{\Alg}) = \lmin(\Sc) < \left( \frac{1}{\alpha} - k_2\e\right) \OPT$, which contradicts that $\Sc_\Alg$ is $\alpha$-approximate. Since $\Sc$ and $\Sc_{\Alg}$ assign to the same machines jobs of size at least $k_1\e\OPT$, we have that the total processing time of jobs assigned by $\Sc$ to $\M_s$ is at most $\lvert \M_s \rvert(\lmin(\Sc) + k_2\e\OPT)$. Thus, \begin{equation*} \lmin(\Sc_{\Alg}) \le \displaystyle\min_{i\in\M_{s}}{\ell_i(\Sc_{\Alg})} \le \lmin(\Sc) + k_2\e\OPT < \frac{1}{\alpha}\OPT, \end{equation*} which contradicts that $\Sc_\Alg$ is $\alpha$-approximate. \end{proof}

The described results allow us to significantly simplify the analysis of the designed algorithms. For example, consider $\LPT$ and suppose that at the arrival of jobs with processing time at least some specific value $h=\Theta(\varepsilon \OPT)$ we can construct relaxed versions of solutions constructed by $\LPT$. Dealing with an arriving job of size smaller than $h$ becomes a simple task since assigning it to the current least loaded machine does not affect the assignment of big jobs, and we can prove that, for suitable constants $k_1, k_2$, a $(k_1,k_2)$-relaxed version of a solution constructed by $\LPT$ is maintained that way, almost preserving then its approximation ratio. It is important to remark that this approach is useful only if the algorithm has robust structure as, in general, the arrival of small jobs does not allow migration of big jobs and their structure may need to be changed because of these arrivals in order to maintain the approximation factor (see for example Section~\ref{sec:lowerbound}). 

\subsection{Rounding procedure}\label{sec:Rounding}

Another useful tool is rounding the processing times to simplify the instance and create symmetries while affecting the approximation factor only by a negligible value. Let us consider $0<\e<1$ such that $1/\e \in \mathbb{Z}$. We use the following rounding technique which is a slight modification of the one presented by Hochbaum and Shmoys in the context of makespan minimization on related machines~\cite{HS88}. For any job $j$, let $e_j\in \mathbb{Z}$ be such that $2^{e_j}\le p_j< 2^{e_j+1}$. We then round down $p_j$ to the previous number of the form $2^{e_j} + k\e 2^{e_j}$ for $k\in \mathbb{N}$, that is, we define $\tilde{p}_j := 2^{e_j} + \left\lfloor \frac{p_j-2^{e_j}}{\e2^{e_j}} \right\rfloor \e 2^{e_j}.$

Observe that $p_j\ge \tilde{p}_j \ge p_j - \e2^{e_j} \ge (1-\e)p_j$. Hence, an $\alpha$-approximation algorithm for a rounded instance has an approximation ratio of $\alpha/(1-\e)=\alpha+O(\varepsilon)$ for the original instance. From now on we work exclusively with the rounded processing times.   

Consider an upper bound $\UB$ on $\OPT$ such that $\OPT\le \UB \le 2\OPT$. This can be computed using any 2-approximation for the problem, in particular $\LPT$. Consider the index set
\begin{equation}
\label{eq:deftP1}	
\tilde{I}(\UB) := \left\lbrace i\in\mathbb{Z}:\e\UB \le 2^i < \UB\right \rbrace= \{\ell,\dots,u\}.
\end{equation} 

We classify jobs as \emph{small} if $\tilde{p}_j < 2^{\ell}$, \emph{big} if $\tp_j\in [2^{\ell},2^{u+1})$, and \emph{huge} otherwise. Notice that small jobs have size at most $2\e\UB$ and huge jobs have size at least $\UB$. As we will see, our main difficulty will be given by big jobs; small and huge jobs are easy to handle. Notice that in every solution $\Sc$ constructed using $\LPT$, if we ignore small jobs, huge jobs are assigned to a machine on their own and every machine $i\in \M$ without huge jobs has load at most $2\UB$.  This is because $i$ either has a big job alone, which has size at most $2\UB$, or it has load at most $\lmin(\Sc) + \tp_j \le 2\lmin(\Sc) \le 2 \UB$, where $j$ is the smallest job assigned to $i$. Let 
\begin{equation}
\label{eq:deftP2} \tP=\left\lbrace2^{i}+k\e2^{i}: i\in\{\ell,\ldots,u\}, k\in\{0,1,\ldots,(1/\e)-1\}\right\rbrace
\end{equation}
be the set of all (rounded) processing times that a big job may take. The next lemma highlights the main properties of our rounding procedure.

\begin{lemma}
	\label{lm:rounding}
	Consider the rounded job sizes $\tp_j$ for all $j$. Then it holds that,
	\begin{enumerate}
		\item $|\tP|\in O((1/\e)\log (1/\e))$, and
		\item for each big and huge job $j$ it holds that $\tp_j=h\cdot \e 2^{\ell}$ for some $h\in \mathbb{N}_0$. 
	\end{enumerate}
\end{lemma}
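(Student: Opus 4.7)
The plan is to verify the two claims directly from the definitions \eqref{eq:deftP1} and \eqref{eq:deftP2}, with essentially no algorithmic ingredient required.

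For the first claim, I would observe that $\tilde{P}$ is indexed by pairs $(i,k)$ with $i\in\{\ell,\ldots,u\}$ and $k\in\{0,1,\ldots,(1/\e)-1\}$, hence $|\tilde{P}|\le (1/\e)(u-\ell+1)$. It remains to bound $u-\ell$. By the definition of $\tilde{I}(\UB)$, the largest element $2^u$ satisfies $2^u<\UB$, while the smallest $2^\ell$ satisfies $2^\ell\ge \e\UB$. Dividing, $2^{u-\ell}<1/\e$, so $u-\ell<\log_2(1/\e)$. Plugging this back gives $|\tilde{P}|=O((1/\e)\log(1/\e))$, as desired.

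For the second claim, let $j$ be a big or huge job and write $\tp_j=2^{e_j}+k\,\e\, 2^{e_j}$ with $k\in\{0,1,\ldots,(1/\e)-1\}$, as in the definition of the rounding. I would argue that $e_j\ge \ell$ in both cases: if $j$ is big then $\tp_j\ge 2^\ell$ forces $e_j\ge \ell$ because $2^{e_j}\le \tp_j$; if $j$ is huge then $\tp_j\ge \UB>2^u\ge 2^\ell$, yielding the same conclusion. Writing
\[
\frac{\tp_j}{\e\,2^\ell}=\frac{2^{e_j}}{\e\,2^\ell}+k\cdot 2^{e_j-\ell}=\frac{2^{e_j-\ell}}{\e}+k\cdot 2^{e_j-\ell},
\]
both summands are non-negative integers since $e_j-\ell\ge 0$ and $1/\e\in\mathbb{Z}$ by assumption. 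Hence $h:=\tp_j/(\e\,2^\ell)\in \mathbb{N}_0$, which is exactly the statement.

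Neither part is technically delicate; the only point that could trip up a quick reading is ensuring $e_j\ge\ell$ for huge jobs (since they are defined by $\tp_j\ge 2^{u+1}$ rather than by an explicit lower bound on $e_j$), so I would make this explicit in the write-up to avoid ambiguity. Other than that, the argument is a one-line unfolding of the definitions, and the lemma follows.
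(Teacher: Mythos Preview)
Your proof is correct and follows essentially the same approach as the paper: both arguments bound $|\tP|$ by counting index pairs and bounding $u-\ell$ via $2^{u-\ell}<1/\e$, and both prove the divisibility claim by writing $\tp_j=(1/\e+k)\,2^{e_j-\ell}\cdot \e 2^{\ell}$ and using $e_j\ge \ell$ together with $1/\e\in\mathbb{Z}$. One tiny slip: to conclude $e_j\ge \ell$ from $\tp_j\ge 2^{\ell}$ you should invoke $\tp_j<2^{e_j+1}$ rather than $2^{e_j}\le \tp_j$, since the latter inequality points the wrong way; the fix is immediate.
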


\begin{proof} From the definition of $\tP$, we have that $|\tP| = \frac{1}{\e}(u-\ell-1)-1$. Since $\ell \ge \log\left(\e\UB\right)$ and $u\le \log(\UB)$, then $(u-\ell-1) \le \log \left( \frac{1}{\e} \right)$. Altogether, $|\tP|\in O((1/\e)\log (1/\e))$. Also, if $j$ is a big or huge job, then $\tp_j = 2^{i}+k\e2^{i}$ for some $i\ge \ell$ and $k\in \{0,\dots, \frac{1}{\e}-1\}$. We conclude by noticing that $\tp_j = \left( \frac{1}{\e} + k \right)2^{i-\ell} \cdot \e2^{\ell} = h \cdot \e 2^{\ell}$ for $h = \left( \frac{1}{\e} + k \right)2^{i-\ell} \in \mathbb{N}_0$. \end{proof}

Unlike other standard rounding techniques (e.g. \cite{SV10, JKV16}), the rounded sizes do not depend on $\OPT$ (or $\UB$). This avoids possible migrations provoked by new rounded values, greatly simplifying our techniques. 

\section{A simple $(1.7+\e)$-competitive algorithm with $O(1/\e)$ migration.}\label{sec:JOpt}

In this section we will adapt a local search algorithm for Machine Covering to the online context with migration, using the properties of instances rounded as described in Section~\ref{sec:Rounding} to bound the migration factor.

In the context of online load balancing with migration, it is a good strategy to look for local search algorithms with good approximation guarantees and efficient running times. The main reason is that the migrated load corresponds to the sum of the migrated jobs in each local move, and for simplified instances (rounded, for example) the number of local moves until a locally optimal solution is found is usually a constant. That is the case for two natural neighborhoods used in local search algorithms for load balancing problems: \emph{Jump} and \emph{Swap}. 
Two solutions $\Sc, \Sc'$ are \emph{jump-neighbors} if they assign the jobs to the same machines (up to relabeling of machines or jobs of equal size) except for at most one job, and \emph{swap-neighbors} if they assign the jobs to the same machines (up to relabeling of machines or jobs of equal size) except for at most two jobs and, if they differ in exactly two jobs $j_1,j_2$ then they are in swapped machines, i.e., $\Sc(j_1)=\Sc'(j_2)$ and $\Sc(j_2)=\Sc'(j_1)$. The \emph{weight} of a solution is defined through a two-dimensional vector having the minimum load of the schedule as first coordinate and the number of non-least loaded machines as second one. We compare the weight of two solutions lexicographically\footnote{Just using the minimum load does not lead to good approximation ratios: think for example of $m>2$ machines and $m$ jobs of size $1$; it is swap-optimal to assign all of them to the same machine.}. In other words, a solution is jump-optimal (respectively swap-optimal) if the migration of a single job (resp.~the migration of a job or the swapping of two jobs) does not increase the minimum load and, if it maintains the minimum load, then it does not reduce the number of least loaded machines. The following lemma characterizes jump-optimal solutions for machine covering. \jvcom{No se si el ``up to symmetry'' se entiende todavía... que simetrías? Maquinas? Trabajos del mismo tamaño? Aquí hay que explicar.}\wgcom{Lo especifiqué mejor}

\begin{lemma}\label{CarOL} Given $(\J,\M)$ a machine covering instance, a schedule $\Sc$ is jump-optimal if and only if for any machine $i\in\M$ and any job $j\in\Sc^{-1}(i)$, we have that $\ell_i(\Sc)-p_j \le \lmin(\Sc)$. \end{lemma}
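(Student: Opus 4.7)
The plan is to prove both implications by direct case analysis, using the fact that the weight of a schedule is the pair $(\lmin(\Sc), \#\text{non-least-loaded machines})$ compared lexicographically, so a jump improves $\Sc$ iff it strictly increases $\lmin$, or it preserves $\lmin$ while strictly decreasing the number of least loaded machines. Let $L := \lmin(\Sc)$. Throughout, we may assume $p_j > 0$ (zero-size jumps do nothing), and we focus on jumps between distinct machines.

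For the ($\Leftarrow$) direction, assume the stated condition and consider a jump of $j$ from $i$ to $i'$. The new load of $i$ becomes $\ell_i(\Sc) - p_j \le L$, while the other non-$i'$ loads are unchanged and machine $i'$ only grows. If $\ell_i(\Sc) - p_j < L$, the minimum load strictly decreases and the jump is not improving. If $\ell_i(\Sc) - p_j = L$, the new minimum load is still $L$, so we must check the secondary criterion: $i$ joins the set of least loaded machines (before it had load $L+p_j > L$), and if $i'$ was least loaded before then its new load $L+p_j$ removes it; in every case the number of least loaded machines does not decrease, so no jump is improving. This establishes jump-optimality.

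For the ($\Rightarrow$) direction, suppose the condition fails: there exist $i$ and $j \in \Sc^{-1}(i)$ with $\ell_i(\Sc) - p_j > L$. Pick any least loaded machine $i^\ast$; note $i \neq i^\ast$ since $\ell_i(\Sc) > L$. Consider the jump of $j$ from $i$ to $i^\ast$. Afterwards, $i$ has load $\ell_i(\Sc) - p_j > L$, $i^\ast$ has load $L + p_j > L$, and all other loads are unchanged and hence $\ge L$. If $i^\ast$ was the unique least loaded machine, the new minimum load is strictly larger than $L$, so the jump strictly improves the primary criterion. Otherwise, the minimum load remains $L$, but $i^\ast$ leaves the set of least loaded machines while no other machine joins (since $i$ still has load $> L$ and the remaining machines are unchanged), so the number of least loaded machines strictly decreases, improving the secondary criterion. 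In either case $\Sc$ is not jump-optimal, contradicting the hypothesis.

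Neither direction is genuinely hard; the only subtlety is keeping track of which machine enters or leaves the least-loaded set in the tie case $\ell_i(\Sc) - p_j = L$ of the ($\Leftarrow$) direction, so that one correctly concludes the number of least loaded machines weakly grows rather than shrinks.
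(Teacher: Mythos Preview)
Your proof is correct and follows essentially the same approach as the paper's own proof; the only cosmetic difference is that you prove the ($\Leftarrow$) direction directly (assume the load condition and show no jump improves) whereas the paper argues by contrapositive (assume some jump improves and derive a violation of the load condition), but the underlying case analysis is identical.
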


\begin{proof} If $\Sc$ is jump-optimal and there is a job not satisfying the inequality, then moving it to a least loaded machine either increases the minimum load of the schedule or reduces the number of least loaded machines, which is a contradiction.
	
On the other hand, if $\Sc$ is not jump-optimal then there is a job $j$ whose migration improves the weight of the solution or, if not, a job whose migration to a least loaded machine decreases the number of least loaded machines. Consider first the case in which moving $j$ from a machine $i$ increases the minimum load. This means that the new load of machine $i$, $\ell_i(\Sc)-p_j$, is at least the new minimum load, which is strictly larger than $\lmin(\Sc)$, proving the needed inequality. Consider now the case in which moving $j$ from a machine $i$ to machine $i'$ maintains the minimum load while reducing the number of least loaded machines. Then $i'$ must have been a non-unique least loaded machine in $\Sc$. Furthermore, machine $i$ cannot become a minimum loaded machine in the new schedule (otherwise the number of least loaded machines would not change). This means that $\ell_i(\Sc)-p_j$ is strictly larger than $\lmin(\Sc)$. \end{proof}

Chen et al.~\cite{CEKvS13} proved tight bounds for the approximability of jump-optimal solutions. Their result is stated in a game theoretical framework, where jump-optimal solutions are equivalent to pure Nash equilibria for the Machine Covering game (see for example~\cite{V07}). In this game, each job is a selfish agent trying to minimize the load of its own machine and the minimum load is the welfare function to be maximized. Through a small modification these bounds can be generalized to swap-optimal solutions as well (notice that a swap-optimal solution is jump-optimal by definition). We summarize the result in the following theorem which will be useful for our purposes (refer to Appendix~\ref{sec:chenetal} for details).

\begin{theorem}[from~\cite{CEKvS13}]\label{thm:1.7apx} Any locally optimal solution with respect to Jump (resp. Swap) for Machine Covering is $1.7$-approximate. Moreover, there are instances showing that the approximation ratio of jump-(resp. swap-)optimality is at least $1.7$. \end{theorem}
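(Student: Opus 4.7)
The plan is to build directly on the analysis of Chen et al.~\cite{CEKvS13}, which establishes the statement for the Jump neighborhood, and then argue that both the upper and the lower bound transfer to the Swap neighborhood with only small adjustments.

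For the upper bound I would first observe that every swap-optimal schedule is automatically jump-optimal: a single-job jump is the special case of a swap in which one of the two jobs has size $0$ (equivalently, is absent), so our lexicographic notion of swap-optimality rules out any improving jump by definition. Hence it suffices to prove $1.7$-approximation for jump-optimal solutions. Here I would follow the argument of~\cite{CEKvS13}: fix a jump-optimal $\Sc$ and an optimum $\Sc^*$, normalize so that $\OPT = 1$, and assume for contradiction that $\lmin(\Sc) < 1/1.7$. Using Lemma~\ref{CarOL}, every job $j$ sitting on a non-least-loaded machine $i$ of $\Sc$ satisfies $p_j \ge \ell_i(\Sc) - \lmin(\Sc)$, so heavy machines can only carry ``large'' jobs. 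Picking a least-loaded machine $i^*$ of $\Sc$ and looking at the jobs that $\Sc^*$ puts on $i^*$ (whose total size is at least $1$), a careful case analysis on how those jobs are distributed in $\Sc$ yields, via pigeonhole, a job that could be jumped onto $i^*$ while lexicographically improving the weight vector, contradicting jump-optimality. Chen et al.\ carry out this case analysis and identify $1.7$ as the worst-case ratio.

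For the lower bound I would reuse the family of jump-tight instances of~\cite{CEKvS13}, where a single machine has load close to $1$ and all other machines have load close to $1.7$, with jobs chosen so that Lemma~\ref{CarOL} holds with equality on every heavy machine. It then needs to be checked that the same schedule is swap-optimal: any swap of $j_1 \in \Sc^{-1}(i_1)$ with $j_2 \in \Sc^{-1}(i_2)$ can only be improving if it either strictly increases $\lmin(\Sc)$ or preserves $\lmin(\Sc)$ while decreasing the number of non-least-loaded machines, and in Chen et al.'s construction only one machine realizes $\lmin$, the gaps to the other loads are smaller than any single job size, and the sizes on each heavy machine are already calibrated to be minimal among the ones allowed by Lemma~\ref{CarOL}, so any swap that removes mass from a heavy machine moves it below the minimum load rather than producing a new minimum.

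The main obstacle is precisely this last verification: swap-optimality is strictly stronger than jump-optimality, so one must argue that \emph{no} pair of exchanges improves the schedule, not merely no single move. If the bare Chen et al.\ instance admitted an improving swap, the fix would be a small perturbation of job sizes on each heavy machine, breaking ties so that any prospective swap strictly over- or under-shoots on one side while preserving both jump-optimality and the $1.7-o(1)$ ratio. This tie-breaking, together with the formal verification that no pair of jobs is swappable in the lexicographic sense, is exactly what is deferred to Appendix~\ref{sec:chenetal}, and I would adopt the same template there.
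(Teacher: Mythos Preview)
Your overall plan matches the paper's: derive the upper bound from Chen et al.\ via the implication swap-optimal $\Rightarrow$ jump-optimal, and for the lower bound take (a variant of) their tight family and verify swap-optimality. Two points are worth sharpening.

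For the upper bound, the paper does not re-run the case analysis. It observes (Lemma~\ref{lem:lextojump}) that the characterization of jump-optimality in Lemma~\ref{CarOL}, namely $\ell_i(\Sc)-p_j\le \lmin(\Sc)$ for every job $j$ on every machine $i$, is already the Nash-equilibrium (lex-jump-optimal) condition. Hence jump-optimal $\Rightarrow$ lex-jump-optimal, and Chen et al.'s price-of-anarchy bound of $1.7$ applies directly. Your sketch uses the same inequality, so this is a matter of packaging, but the one-line reduction is cleaner than re-deriving the bound.

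For the lower bound, your hedge is the operative case: the paper states explicitly that Chen et al.'s original family is jump-optimal but \emph{not} swap-optimal, so the perturbation you float as a contingency is in fact required. The fix is not a generic tie-break but a specific $\delta$-scaled family (with $\delta=1/(30n_k)$ and sizes $a_i=\tfrac12+(n_i-1)\delta$, $b_i=\tfrac12-(n_i-1)\delta$, $c_i=\tfrac15+4n_{i-1}\delta$, $d_i=\tfrac15-n_i\delta$, plus a last class $d_k=1/(6|\M_k|-1)$). The swap-optimal schedule puts all $d_k$'s on a single machine of load $1+d_k=\lmin$, and every other machine is built so that removing its \emph{smallest} job leaves load exactly $1$, while removing any larger job leaves load strictly below $1$. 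Since the only jobs available on the least-loaded machine are the tiny $d_k$'s, any swap with that machine either keeps the minimum unchanged (smallest-job case) or drops the other machine below $\lmin$ (larger-job case). Swaps among non-least-loaded machines cannot reduce the number of least-loaded machines, so they are never improving. Your heuristic ``any swap that removes mass from a heavy machine moves it below the minimum'' is the right picture, but it fails for the unperturbed instance; the $\delta$-calibration is exactly what makes it hold.
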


\subsection{Online jump-optimality.}

Using the rounding procedure developed in Section \ref{sec:Rounding}, jump-optimality can be adapted to the online context using migration factor $O\left(\frac{1}{\e}\right)$. In order to do this, we need an auxiliary algorithm called \emph{Push} (Algorithm~\ref{push}) to assign a job $j$ to a given machine. This procedure inserts a given job to a given machine, and then iteratively removes the jobs that break jump-optimality according to Lemma~\ref{CarOL}, storing them in a special set $Q$ which is part of its output. This algorithm is the base of the Push neighborhood analyzed by Schuurman and Vredeveld \cite{SV07}.

Our algorithm, described in detail in Algorithm~\ref{OLSonline}, is called every time a new job $j^*$ arrives to the system, and receives as input the current solution $\Sc$ for $(\J,\M)$, initialized as empty if $\J=\emptyset$. It will output a $(k,2k)$-relaxed \jvcom{Que significa $4$-relaxed? La versión actual necesita dos constantes en la definión...}\wgcom{Corregido} version of a jump-optimal solution for some $k\le 4$. We use the concept of a \emph{list-scheduling} algorithm, that refers to assigning jobs iteratively (in any order) to some machine of minimum load. Given a schedule $\Sc$, $\Sc_B$ denotes the restriction of schedule $\Sc$ to big jobs.

\begin{algorithm}[h!t]
	\caption{Push}
	\label{push}
	\begin{algorithmic}[1]
		\Require
		\text{ Schedule $\Sc$ for $(\J,\M)$, $i\in \M$, $j\notin \J$}
		\Ensure \text{ $Q\subseteq \J$, schedule $\Sc'$ for $((\J\cup\{j\})\setminus Q,\M)$}
		\State $Q \gets \emptyset$.
		\State $\Sc' \gets \Sc$.
		\State assign $j$ to machine $i$ in $\Sc'$.
		\For{$k \in \Sc^{-1}(i)$}
		\If{$\ell_i(\Sc')-\tp_k>\lmin(\Sc')$}
		\State take out $k$ from $i$ in $\Sc'$.
		\State $Q \gets Q \cup \{k\}$.
		\EndIf
		\EndFor
		\State \Return $Q$, $\Sc'$.
	\end{algorithmic}
\end{algorithm}

The general idea of Algorithm~\ref{OLSonline} is to first round the instance, and assign the incoming job to a least loaded machine using Algorithm~\ref{push}. Jobs removed by Algorithm~\ref{push} need to be reassigned, which we do by iteratively applying Algorithm~\ref{push} on each one of them which is big until only small jobs are left to be assigned. At each iteration jump-optimality is preserved in a relaxed way, and as a last step all remaining small jobs are reassigned using list-scheduling.
Notice that, since Algorithm~\ref{push} only removes jobs of size strictly smaller than the inserted job, each job is migrated at most once.

\begin{algorithm}[h!]
	\caption{Online jump-optimality}
	\label{OLSonline}
	\algorithmicrequire{ \parbox[t]{12.75cm}{ 
			Instances $(\J,\M)$ and $(\J',\M)$ such that $\J'=\J\cup\{j^*\}$; a schedule $\Sc(\J,\M)$.}}
	\begin{algorithmic}[1]
		
		\State{run LPT on input $\J'$ and let $\tau$ be the minimum load. Set $\UB\gets 2\tau$. Define $\tP, \ell$, and $u$ based on this upper bound $\UB$ using \eqref{eq:deftP1} and \eqref{eq:deftP2}.}
		\State{set $\Sc' \gets \Sc$}
		\If{$\tp_{j^*} < 2^\ell$}. 
		\State{assign $j^*$ to a least loaded machine in $\Sc'$.}
		\Else
		\State{set $Q_B \gets \{j^*\}$.}\Comment{Set with unassigned big jobs.}
		\State{set $Q_s \gets \emptyset$.} \Comment{Set with unassigned small jobs.}
		\While{$Q_B \neq \emptyset$}
		\State{let $j$  be the largest job in $Q_B$. Set $Q_B \gets Q_B \setminus\{j\}$.}
		\State{in $\Sc_B'$, use Push (Algorithm \ref{push}) to assign $j$ to a least loaded machine $m^*$, obtaining its output set $Q$. Update $\Sc_B'$ to be the output solution of this procedure.}
		\State reassign jobs in $\Sc'$ such that the assignment of (big) jobs in $\Sc'$ and $\Sc_B'$ coincides.
		\While{$m^*$ contains a small job w.r.t. $\UB$ and $\ell_{m^*}(\Sc')>\ell_{\min}(\Sc')+2^\ell$}
		\State remove the smallest job in $\Sc'^{-1}(m^*)$ and add it to $Q_s$.
		\EndWhile
		\State $Q_B \gets Q_B \cup Q$.
		\EndWhile
		\State assign the jobs in $Q_s$ to $\Sc'$ using list-scheduling.
		\EndIf
		\State \Return $\Sc'$.
	\end{algorithmic}
\end{algorithm}
\jvcom{Falta definir $\Sc_B$ en el algoritmo.}\wgcom{corregido}

\begin{lemma}\label{lm:1.7-competitive}
	For any $h\in \mathbb{R}^+$, Algorithm~\ref{OLSonline} has robust structure at level $h$. Furthermore, Algorithm~\ref{OLSonline} is $(1.7+O(\e))$-competitive and has polynomial running time.
\end{lemma}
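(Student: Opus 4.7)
The plan is to split the proof into three parts: robust structure, competitiveness on the rounded instance (which transfers to the original via the $(1-\e)$ loss of the rounding), and polynomial running time.

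For robust structure at level $h$, I would fix an arriving job $j^*$ with $p_{j^*}<h$ and show that every migrated job has size strictly less than $p_{j^*}$. If $\tp_{j^*}<2^\ell$ the algorithm routes $j^*$ to a least loaded machine and nothing else moves, so assume $\tp_{j^*}\ge 2^\ell$. The central invariant, which I would establish by induction on the iterations of the outer while-loop, is that $\Sc_B'$ is jump-optimal on big jobs. Combined with the fact that Push inserts the current job $j$ at a least loaded machine $m^*$ of $\Sc_B'$, this invariant implies that Push never displaces a $k$ with $\tp_k\ge \tp_j$: indeed $\ell_{m^*}(\Sc_B')-\tp_k=\ell_{m^*}(\Sc_B)+\tp_j-\tp_k\le \ell_{m^*}(\Sc_B)=\lmin(\Sc_B)\le\lmin(\Sc_B')$, so the displacement condition fails. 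Hence every job that ever enters $Q_B$ has size strictly less than $\tp_{j^*}$, and the inner cleanup loop only removes small jobs (size $<2^\ell\le\tp_{j^*}$); no job with processing time at least $h$ is touched.

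For competitiveness I would establish two structural facts about the final $\Sc'$: (i) $\Sc_B'$ is jump-optimal on the big-job subinstance, which follows since the outer loop terminates only when $Q_B=\emptyset$ and each Push call restores the inequality of Lemma~\ref{CarOL} on the machine it touches while preserving it on the others; (ii) every machine carrying a small job in $\Sc'$ has load at most $\lmin(\Sc')+2^\ell\le\lmin(\Sc')+O(\e\,\OPT)$, because the inner cleanup loop enforces exactly this bound on touched machines and list-scheduling $Q_s$ places each remaining small job on a currently least loaded machine. Facts (i) and (ii) together identify $\Sc'$ as a $(k_1,k_2)$-relaxed version, for appropriate constants $k_1,k_2\in O(1)$, of a jump-optimal full schedule $\Sc^*$; invoking Theorem~\ref{thm:1.7apx} on $\Sc^*$ and Lemma~\ref{RobGree} on $\Sc'$ yields $(1.7+O(\e))$-competitiveness on the rounded instance, and the rounding step costs only an additional $(1-\e)^{-1}$ factor.

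For polynomial running time, each Push call is linear and I would argue that the outer loop runs polynomially many times because each job enters $Q_B$ at most once: once a job $k$ is pushed, every subsequent Push processes an item of size at most $\tp_k$ (we extract $Q_B$ in non-increasing order), and by the computation above the displacement condition on a least loaded target can only be triggered by an incoming job strictly larger than the displaced one. The delicate point I expect to be the main obstacle is the construction of the jump-optimal witness $\Sc^*$ used in the competitiveness argument: the algorithm only maintains jump-optimality of $\Sc_B'$ on the big-job subinstance, so I would obtain $\Sc^*$ by running jump local search restricted to small-job moves starting from $\Sc'$, and then use (ii) together with the size gap between big ($\ge 2^\ell$) and small ($<2^\ell$) jobs to verify that no subsequent big-job move improves the weight, giving an $\Sc^*$ that is jump-optimal on the full instance and coincides with $\Sc'$ on big jobs.
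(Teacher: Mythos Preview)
Your robust-structure and running-time arguments match the paper's: Push on a least loaded machine can only displace jobs strictly smaller than the inserted one, so each job is migrated at most once. Your construction of the jump-optimal witness $\Sc^*$ via small-job local search from $\Sc'$ is a clean alternative to the paper's construction (which instead runs the algorithm on a hypothetical jump-optimal schedule for the old instance and then repairs small jobs via further Push calls); your verification that big-job jumps cannot improve $\Sc^*$ --- using $\tp_k\ge 2^\ell$ for big $k$ and $\ell_i(\Sc^*)<\lmin(\Sc^*)+2^\ell$ on machines carrying a small job --- is correct and in fact does not even require~(ii), only~(i) and small-job local optimality.

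The real gap is in your justification of fact (ii) itself. You only argue that the inner cleanup loop bounds machines that were chosen as $m^*$ and that list-scheduling $Q_s$ bounds machines receiving from $Q_s$; you omit machines that are \emph{never} touched by Push in the current arrival and carry small jobs inherited from $\Sc$. For such a machine $i$ you have $\ell_i(\Sc')=\ell_i(\Sc)$, and nothing in your plan controls this relative to $\lmin(\Sc')$ without invoking structure of $\Sc$. The paper closes this by induction on $|\J|$: it assumes $\Sc$ is already a $(k^{(1)},2k^{(1)})$-relaxed version of some jump-optimal schedule with $k^{(1)}=2^{\ell^{(1)}}/(\e\,\OPT)$ for the \emph{previous} threshold $\ell^{(1)}\le\ell$, and then performs a three-case analysis on untouched machines (those with a job $<2^{\ell^{(1)}}$; those with all jobs $\ge 2^{\ell^{(1)}}$ but some $<2^\ell$; and touched machines). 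Because the threshold can strictly increase between arrivals, the bound that propagates is $\lmin(\Sc')+2\cdot 2^\ell$ rather than $\lmin(\Sc')+2^\ell$, which is why the paper proves $(k,2k)$-relaxed with $k\le4$. Your plan needs this inductive layer (and the weaker constant) to close.
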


\begin{proof}
	First of all, Algorithm~\ref{OLSonline} has, for any $h\in\mathbb{R}^+$, robust structure to level $h$ because each time that Push is called, it moves a total load of jobs smaller than the processing time of the inserted job (otherwise, the machine would not be a least loaded machine), and if the job is small, then nothing is migrated. This also directly implies that the running time of the algorithm is polynomial because every job is migrated at most once, so the while loop is executed only a polynomial number of times.
	
	In order to show that the competitive ratio is $(1.7+O(\e))$, we just need to show that the schedule constructed by Algorithm \ref{OLSonline} is a $(k_1,k_2)$-relaxed version of a jump-optimal solution for some constants $k_1, k_2$. Having that, the result follows from Theorem~\ref{thm:1.7apx} and Lemma~\ref{RobGree}.
	
	Let $k=\frac{2^\ell}{\e\OPT'}$. We will prove that the constructed schedule is a $(k,2k)$-relaxed version of a jump-optimal schedule by induction on $|\J|$ (notice that $k$ depends on $\OPT'$ and $\ell$, hence on the instance $(\J,\M)$). The base case when $\J=\emptyset$ is trivial. Let $\ell^{(1)}$ be the lower bound computed for $\OPT$ and $k^{(1)}=\frac{2^{\ell^{(1)}}}{\e\OPT}$, and let us assume that $\Sc$ is a $(k^{(1)},2k^{(1)})$-relaxed version of some jump-optimal solution $\Sc^*$ for $(\J,\M)$ (recall that $\OPT\le \OPT'$ and $\ell^{(1)}\le \ell$). This means that $\Sc$ and $\Sc^*$ assign to the same machines jobs of size at least $2^{\ell^{(1)}}$, and machines in $\Sc$ containing at least one job of size smaller than $2^{\ell^{(1)}}$ have load at most $\ell_{\min}(\Sc)+2\cdot2^{\ell^{(1)}}$. Our goal is to prove that the output $\Sc'$ of Algorithm~\ref{OLSonline} when run on $\Sc(\J,\M)$ plus an arriving job $j^*$ is a $(k,2k)$-relaxed version of some jump-optimal solution $\Sc^{**}$ for $(\J\cup\{j^*\},\M)$.
	
	Notice first that for this $k$ we have that big jobs have processing time at least $k\e\OPT'$. If $\tp_{j^*}<2^\ell$, it is easy to see that the conditions are fulfilled since it is assigned to the least loaded machine. Assume from now on that $j^*$ is big. Suppose that we run Algorithm~\ref{OLSonline} on $\Sc^*(\J,\M)$ and arriving job $j^*$, getting a solution $\Sc^*_{aux}$. First of all it is not difficult to see that the minimum load does not decrease when applying Algorithm~\ref{OLSonline}. Thanks to the jump-optimality of $\Sc^*$ we have that, for every machine $i$ where no job was assigned using Push and any job $j$ assigned to $i$, $\ell_i(\Sc^*_{aux})-p_j<\lmin(\Sc^*_{aux})$, and hence the jobs breaking jump-optimality in $\Sc^*_{aux}$ can only belong to the remaining machines. In these machines we either have only big jobs or they have load at most $\lmin(\Sc^*_{aux})+2^\ell$, implying that the jobs breaking jump-optimality are small thanks to Lemma~\ref{CarOL}. If we take out from the solution such jobs and reassign them using Push until no job is left to be assigned (i.e. reassigning also the jobs which are pushed out) we get a jump-optimal solution $\Sc^{**}$. Since this procedure moves only small jobs (as pushed jobs are always smaller than the assigned job), the assignment of big jobs in $\Sc'$ and $\Sc^{**}$ is the same, proving the first part of being a $(k,2k)$-relaxed version of some jump-optimal solution.
	
	We will now prove that if a machine has at least one job of size at most $2^\ell$ then its load is at most $\lmin(\Sc')+2\cdot 2^\ell$. To this end we will consider three cases:
	\begin{itemize} \item If $i$ is a machine where no job was assigned using Push and it has a job of size smaller than $2^{\ell^{(1)}}$, since $\Sc$ is a $(k^{(1)},2k^{(1)})$-relaxed version of some jump-optimal solution, the load of $i$ is at most $\lmin(\Sc)+2\cdot 2^{\ell^{(1)}} \le \lmin(\Sc')+2\cdot 2^\ell$.
		\item If $i$ is a machine where no job was assigned using Push, it has only jobs of size at least $2^{\ell^{(1)}}$ and has at least one job of size smaller than $2^\ell$ (implying that $\ell^{(1)}<\ell$), since $\Sc$ is a $(k^{(1)},2k^{(1)})$-relaxed version of some jump-optimal solution $\Sc^*$, the load of $i$ is at most $\lmin(\Sc)+2\cdot2^{\ell^{(1)}} \le \lmin(\Sc^*)+2^{\ell}$. From the proof of Lemma~\ref{RobGree}, we have that $\lmin(\Sc^*)\le \lmin(\Sc')+2\cdot 2^{\ell^{(1)}}$. Putting everything together, we have that \begin{eqnarray*} \ell_i(\Sc') & \le & \lmin(\Sc^*)+2^\ell \\ & \le & \lmin(\Sc')+2\cdot 2^{\ell^{(1)}} + 2^\ell \\ & \le & \lmin(\Sc')+2\cdot 2^\ell, \end{eqnarray*} where the last inequality comes from the fact that $\ell^{(1)}<\ell$.
		\item If $i$ is a machine where some job was assigned using Push and it has at least one job of size smaller than $2^\ell$, the algorithm enforces its load to be at most $\lmin(\Sc')+2^\ell$.
	\end{itemize}
	
	This proves that $\Sc'$ is a $(k,2k)$-relaxed version of some jump-optimal solution, and we conclude the proof by noticing that $1\le k = 2^{\ell}/(\e\OPT')\le 2\e\UB/(\e \OPT') \le 4$.
\end{proof}

Now we will bound the migration factor, and also construct an instance showing that the analysis of the migration factor is essentially tight.

\begin{lemma}\label{OLS_migration} Algorithm~\ref{OLSonline} has migration factor $O\left(1/\e\right)$.\end{lemma}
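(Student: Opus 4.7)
The migration factor is defined as the ratio of the total migrated processing time to $\tp_{j^*}$, and the goal is to show this ratio is $O(1/\e)$. If $\tp_{j^*} < 2^\ell$, the algorithm assigns $j^*$ to a least-loaded machine without migrating anything, so assume that $j^*$ is big or huge. I plan to split the migrated load into the total size of migrated big jobs $T_B$ (those that enter $Q_B$ at some point) and the total size of migrated small jobs $T_s$ (those that enter $Q_s$). Recalling that each job is migrated at most once and that the final list-scheduling step does not create new migrations, it suffices to show $T_B, T_s = O(\tp_{j^*}/\e)$.

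For $T_B$, the key structural fact is that each call to Push with inserted job of size $s$ removes big jobs of total size at most $s$. Indeed, inserting $j$ into the least-loaded machine $m^*$ of $\Sc_B'$ raises $\ell_{m^*}(\Sc_B')$ by exactly $s$, and the removal loop stops once this load drops to the new minimum big-load, which is at least the old minimum. Moreover, every kicked-out job is strictly smaller than the inserted one, and since the outer loop always processes the largest available job in $Q_B$, the sequence of inserted sizes across Push calls is non-increasing. Using Lemma~\ref{lm:rounding}, which forces big sizes to come from a small set $\tP$ of multiples of $\e 2^\ell$, I plan a generational charging argument: the kick-outs of a Push form the next generation of Pushes, with strictly smaller sizes, and their totals telescope; the sparse structure of $\tP$ then controls the depth of the cascade and yields $T_B = O(\tp_{j^*}/\e)$.

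For $T_s$, the inner while loop only removes a small job from $m^*$ while $\ell_{m^*}(\Sc') > \lmin(\Sc') + 2^\ell$. After a Push of size $s$, the excess of $\ell_{m^*}(\Sc')$ over $\lmin(\Sc') + 2^\ell$ is at most $s$, so the small jobs removed in that invocation have total size at most $s$, plus at most one extra small job of size below $2^\ell$. Summing over all Push calls, whose inserted sizes total $\tp_{j^*} + T_B$, and using that $\tp_{j^*} \geq 2^\ell$ together with the bound on $T_B$, I obtain $T_s = O(\tp_{j^*}/\e)$. Adding the two gives the desired migration factor bound.

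The main obstacle is the cascade analysis for $T_B$: in principle a single Push can produce several kick-outs, each of which triggers its own Push with further kick-outs, and a naive bound blows up with the number of distinct sizes in $\tP$. The crux is to exploit the fact that $\tP$ consists of few, well-separated values all divisible by $\e 2^\ell$, so that the generational totals can be charged against $\tp_{j^*}$ with only an $O(1/\e)$ overhead rather than polynomially many generations of full-size contributions.
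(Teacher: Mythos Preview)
Your overall framework---splitting the migrated load into big jobs $T_B$ and small jobs $T_s$, organizing the big-job Pushes into a tree rooted at $j^*$, and bounding $T_s$ by a constant multiple of $\tp_{j^*}+T_B$---matches the paper's approach and is sound. The treatment of $T_s$ is essentially the paper's argument (the paper gets at most $\tp_j+2^{\ell}$ small load removed per Push of a job $j$, hence total small load at most $2\sum_i w_i$).

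The genuine gap is in the analysis of $T_B$. Your two ingredients, ``generational totals telescope'' and ``the sparse structure of $\tP$ controls the depth,'' do not by themselves yield $O(\tp_{j^*}/\e)$. The level totals $W_0=\tp_{j^*}\ge W_1\ge W_2\ge\ldots$ decrease but do not telescope: each Push of size $s$ ejects big load at most $s-\e 2^{\ell}$ (all sizes are multiples of $\e 2^{\ell}$), so $W_{i+1}\le W_i-n_i\,\e 2^{\ell}$ where $n_i$ is the number of level-$i$ nodes; since $n_i$ can be $1$, this only gives $\sum_i W_i=O\bigl(\tp_{j^*}^2/(\e 2^{\ell})\bigr)=O(\tp_{j^*}/\e^2)$. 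The depth bound you invoke (sizes strictly decrease along every root-to-leaf path, hence depth $\le|\tP|$) gives $|\tP|=O((1/\e)\log(1/\e))$, and combining with $W_i\le\tp_{j^*}$ yields only $T_B=O((1/\e)\log(1/\e))\,\tp_{j^*}$. Neither bound is $O(\tp_{j^*}/\e)$, and Lemma~\ref{lm:lbOLS_migration} shows that $O(1/\e)$ is tight, so the extra factors are not just slack in the statement.

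The paper closes this gap by a case distinction you have not identified. Levels at which some node has at least two children are rare: the migration tree has at most $\tp_{j^*}/2^{\ell}=O(1/\e)$ leaves (total leaf weight $\le\tp_{j^*}$, each leaf $\ge 2^{\ell}$), hence at most $O(1/\e)$ branching levels, contributing at most $O(\tp_{j^*}/\e)$ in total. For the remaining ``chain'' levels (every node has at most one child), each job is replaced by a strictly smaller job in $\tP$, and the paper proves a quantitative claim: if the level total lies in $(q_{s+1},q_s]$, then the next level total drops by at least $\tfrac{\e}{4}q_{s+1}$. This forces the sequence of chain-level totals to spend only $O(1)$ steps in each interval $(q_{s+1},q_s]$, so their sum is $O\bigl(\sum_{q\in\tP,\,q\le\tp_{j^*}} q\bigr)=O(\tp_{j^*}/\e)$. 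Your final paragraph correctly senses that ``well-separated values'' are the key, but the argument requires both the leaf-count bound on branching levels and the per-interval drop claim for chain levels; without these, your sketch stalls at $O((1/\e)\log(1/\e))$.
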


\begin{proof} To analyze the migration factor, we define the \emph{migration tree} of the algorithm as a node-weighted tree $G=(V,E)$, where $V$ is the set of migrated jobs together with the incoming job $j^*\notin\J$, and the weight of each $v\in V$ is the processing time of the corresponding job $\tp_v$. The tree is constructed by first adding $j^*$ as root. For each node (job) $v$ in the tree, its children are defined as all the jobs migrated at the insertion of $v$. It is easy to see that this process does not create any loops as each job is migrated at most once. By definition, the leaves of the tree are the jobs not inducing migration, and thus any small job in the tree is a leaf. In the context of local search, the number of nodes in the tree corresponds to the number of iterations of the specific local search procedure. 
	
Let $w_i$ be the total processing time of nodes corresponding to big jobs in level $i$ of the migration tree. Assume that $\tp_{j^*} = q_{\kappa} =2^g + h\e2^g$ for some $\kappa \in \{1,\dots,|\tP|\}$, $g\in\{\ell,\dots,u\}$ and $h\in\{0,\dots,\frac{1}{\e}-1\}$. Every time a job $j$ is inserted using Push, the total load of jobs in the output $Q$ of the algorithm is strictly less than $\tp_{j}$, which means that $w_i$ is strictly decreasing, and also that at each level $i$ of the tree there are at most $\frac{w_i}{2^\ell}$ nodes corresponding to big jobs. Since the second condition of being a $(k_1,k_2)$-relaxed version (Definition~\ref{k-rel}) of a jump-optimal solution is maintained through the iterations, the small jobs that need to be migrated because of insertion of a big job $j$ have total load at most $\tp_{j}+2^\ell$. This implies that the total load of small jobs at each level $i\ge 1$ of the tree is at most $w_{i-1} + \frac{w_{i-1}}{2^\ell}\cdot 2^\ell = 2w_{i-1}$, and hence the total processing time of nodes corresponding to small jobs is at most twice the total processing time of nodes corresponding to big jobs. Because of that, from now on we will assume that the migration tree contains only nodes corresponding to big jobs. 
	
	We categorize each level $i\ge 1$ of the migration tree according to the following two cases: if there is a node in level $i-1$ having at least two children, we say that level $i$ falls in case $1$, and it falls in case $2$ otherwise. We first show that there are at most $\frac{\tp_{j^*}}{2^\ell}\le1/\e$ levels of the tree falling in case $1$. Because of the way the migration tree is constructed, it is not difficult to see that the total weight of the leaves in the tree is at most $\tp_{j^*}$ (this property is maintained inductively through the executions of Algorithm Push). Because of this, since each big job has processing time at least $2^\ell$, every migration tree has at most $\tp_{j^*}/2^\ell$ leaves, which is also an upper bound for the number of nodes that have more than one children in the tree (each one of them induces at least one extra leaf), and hence for the number of levels falling in case $1$.
	
	There can be more than $1/\e$ levels falling in case $2$ along the tree, but we will show that in that case $w_{i}$ quickly decreases based on the following claim.
	
	\noindent\textbf{Claim:} Let $q_{i_1}, \dots, q_{i_k} \in \tP$ such that $\displaystyle\sum_{j=1}^k{q_{i_j}} \in (q_{s+1}, q_s]$ for some $s\in \{1,\dots,|\tP|\}$. Then $\displaystyle\sum_{j=1}^k{q_{i_j+1}} \le \displaystyle\sum_{j=1}^k{q_{i_j}} - \frac{\e}{4}q_{s+1}$, where we assume that $q_{|\tP|+1} = 0$.
	
	Notice that the claim implies that for a level $i$ falling in case $2$, if $w_{i-1} \in (q_{s+1}, q_s]$ for some $s\in\{1,\dots,|\tP|\}$, then $w_i \le w_{i-1} - \frac{\e}{4}q_{s+1}$. To compute the total processing time of the nodes in the migration tree, we will bound the total weight of the levels corresponding to each case separately. Since there are at most $1/\e$ levels falling in case~$1$, each one of them having total weight at most $\tp_{j^*}$, we can bound the total weight of those levels by $\frac{1}{\e}\tp_{j^*}$. Let us now relabel the levels of the tree where the second case occurs by just $\{1,2,\dots,L_2\}$ (i.e. we ignore the levels falling in case $1$). Thanks to the claim, for every $i\in \{1,2,\dots,L_2\}$, if $w_{i-1} \in (q_{s+1},q_s]$ for some $s \in \{1,\dots,|\tP|\}$, then $\displaystyle\sum_{j=i-1}^{i+2}{w_j}\le 4q_s$ and $w_{i+3}\le q_{s+1}$ (because $q_s - \e q_{s+1} \le q_{s+1}$), and we can restart the process for $i+3$ with the correct value $q_{s'}\le q_{s+1}$. If we use this argument starting with $w_0 \in (q_{\kappa+1},q_{\kappa}]$, we can conclude that $\displaystyle\sum_{i=0}^{L_2}{w_i} \le 4\displaystyle\sum_{i=\kappa}^{|\tP|}{q_i}$, which, recalling that $\tp_{j^*} = 2^g + h\e2^g$, is at most
	\begin{align*} &
	4\sum_{i=\ell}^{g}{ \sum_{b=0}^{\frac{1}{\e}-1}{(2^i + b\e2^i)}} = 4\sum_{i=\ell}^{g}{2^{i-1}\left(\dfrac{3}{\e}-1\right) } \le 4\tp_{j^*}\left(\dfrac{3}{\e}-1\right).
	\end{align*}
	These two bounds, together with the fact that the total load of small migrated jobs is at most twice this value, implies that the migration factor is at most $O\left(\frac{1}{\e}\right)$.
	
	To prove the claim, notice that $\displaystyle\sum_{j=1}^k{q_{i_j+1}} \le \displaystyle\sum_{j=1}^k{q_{i_j}} - \displaystyle\sum_{j=1}^k{\e 2^{\lfloor\log(q_{i_j})\rfloor-1}} \le \displaystyle\sum_{j=1}^k{q_{i_j}} - \frac{\e}{2}\displaystyle\sum_{j=1}^k{2^{\lfloor\log(q_{i_j})\rfloor}}$. Also, since $\displaystyle\sum_{j=1}^k{2^{\lceil\log(q_{i_j})\rceil}}\ge \displaystyle\sum_{j=1}^k{q_{i_j}}>q_{s+1}$, we have that $\displaystyle\sum_{j=1}^k{2^{\lfloor\log(q_{i_j})\rfloor}}>\frac{q_{s+1}}{2}$. This concludes the proof of the claim. \end{proof}

\begin{lemma}\label{lm:lbOLS_migration}There are instances for which Algorithm \ref{OLSonline} uses a migration factor of at least $\Omega\left( \frac{1}{\e}\right)$. 	
\end{lemma}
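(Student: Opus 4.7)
The plan is to exhibit an instance and a distinguished arrival for which Algorithm~\ref{OLSonline} incurs a cascade of $\Omega(1/\e)$ big-job migrations, each of size $\Theta(2^\ell)$, so that the arriving job of size $\Theta(2^\ell)$ forces migrated load $\Omega(2^\ell/\e)$. The construction is meant to realize precisely the worst case of the upper-bound analysis, where every level of the migration tree falls in case~2 (a single child, slightly smaller than its parent), consecutive children differ in size by exactly $\e 2^\ell$, and the chain extends across a whole rounding scale.

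The sizes and target pre-arrival schedule. Fix $K=\lfloor 1/\e\rfloor -1$ and, for $k=0,1,\dots,K$, define
\[
q_k \;:=\; 2^\ell + (K-k)\,\e\, 2^\ell,
\]
so that $q_0>q_1>\dots>q_K$ all lie in the single scale $[2^\ell,2^{\ell+1})$, are pairwise separated by exactly $\e 2^\ell$, and are valid rounded big-job sizes. Consider $K+1$ machines $M_0,M_1,\dots,M_K$, together with additional prior arrivals (huge or small jobs placed on a disjoint collection of ``neutral'' machines) chosen so that the upper bound $\UB$ produced by Algorithm~\ref{OLSonline}'s preliminary $\LPT$ call puts these $q_k$ in the big regime. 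In the pre-critical schedule $\Sc$, each machine $M_k$ has the same load $T:=q_0+q_1$ in $\Sc_B$: for $k=0,1,\dots,K-1$, machine $M_k$ holds the big job $q_{k+1}$ together with a ``blocker'' consisting of one or two big jobs of size at least $q_0$ summing to $T-q_{k+1}$, while $M_K$ holds only big jobs of size at least $q_0$ summing to $T$. Since $L_{\min}(\Sc_B)=T$ and removing any single big job from any machine leaves load at most $T$, this schedule is jump-optimal by Lemma~\ref{CarOL}.

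Triggering the cascade. Let the critical job $j^\star$ of size $\tp_{j^\star}=q_0$ arrive. All machines are tied least-loaded in $\Sc_B$, and with a suitable tie-breaking rule Algorithm~\ref{OLSonline} calls \textsc{Push} to insert $q_0$ on $M_0$; the only big job on $M_0$ of size smaller than $q_0$ is $q_1$ (the blockers are too large), so exactly $q_1$ is pushed out. After this, $M_0$ has load $T+\e 2^\ell$, while $M_1,\dots,M_K$ are still at $T$, so the next tie is broken to $M_1$, where \textsc{Push} inserts $q_1$ and pushes out exactly $q_2$. Iterating, step $k$ inserts $q_k$ into $M_k$ and pushes out $q_{k+1}$ for $k=0,1,\dots,K-1$; the final insertion of $q_K$ on $M_K$ terminates the cascade because $q_K=2^\ell$ is the smallest big job and every remaining big job has size at least $q_0>q_K$. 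The total migrated big-job weight is
\[
\sum_{k=1}^{K} q_k \;=\; K\cdot 2^\ell + \e 2^\ell\cdot\frac{K(K-1)}{2} \;=\; \Omega\!\left(\frac{2^\ell}{\e}\right),
\]
while $\tp_{j^\star}=q_0<2^{\ell+1}$. Dividing, the migration factor used on this arrival is $\Omega(1/\e)$, matching Lemma~\ref{OLS_migration} up to constants.

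Main obstacle. The genuinely delicate step is realising the target schedule $\Sc$ as an honest output of Algorithm~\ref{OLSonline} on a sequence of earlier arrivals, and ensuring that at every step of the cascade the algorithm selects exactly the machine we want among those tied as least-loaded in $\Sc_B'$. The plan is to specify a prior arrival order (e.g.\ all blockers first, in decreasing order of size, then $q_K,q_{K-1},\dots,q_2$) together with a canonical tie-breaking rule on machine indices, and to check inductively that each \textsc{Push} call in the construction phase places jobs exactly according to the prescribed layout and does not trigger any migration (blockers being too big to be displaced by anything of size at most~$q_0$). A small perturbation obtained by placing a few jobs of size below $2^\ell$ on selected machines, permitted by the $(k_1,k_2)$-relaxation slack and invisible to $\Sc_B$, can be used to break ties deterministically in favour of the desired sequence $M_0,M_1,\dots,M_K$ during the cascade, while preserving jump-optimality of the big-job part. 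Once these bookkeeping conditions are verified, the computation of the migration factor given above yields the claimed lower bound.
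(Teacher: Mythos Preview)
Your overall strategy---build an initial jump-optimal schedule in which the arrival of one big job triggers a chain of $\Theta(1/\e)$ single-job pushes of size $\Theta(2^{\ell})$---is exactly the mechanism the paper exploits, and the arithmetic you do at the end is correct once the chain exists. The difficulty is that the concrete instance you describe cannot be realised with rounded processing times, so the chain never gets off the ground.

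Concretely, you require each machine $M_k$ (for $k=0,\dots,K-1$) to carry, besides $q_{k+1}$, a blocker of one or two big jobs of size at least $q_0$ summing to
\[
T-q_{k+1}\;=\;q_0+q_1-q_{k+1}\;=\;2^{\ell}\bigl(1+(K+k)\e\bigr).
\]
For $k=0$ and $k=1$ this equals $q_0$ and $2^{\ell+1}$ respectively, both valid rounded values. But already for $k=2$ it equals $2^{\ell+1}+\e\,2^{\ell}=2^{\ell+1}(1+\e/2)$, which is \emph{not} in $\tP$ (in the scale $[2^{\ell+1},2^{\ell+2})$ the admissible sizes are the multiples $2^{\ell+1}(1+j\e)$). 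One job therefore cannot work, and two jobs each of size at least $q_0$ would sum to at least $2q_0>2^{\ell+1}+\e\,2^{\ell}$, so two jobs cannot work either. The same obstruction appears for every $k\ge 2$, so the single-scale construction collapses after two steps. If you relax the blockers to smaller big jobs (say of size $q_K$ and $q_{K-1}$), Push will evict a blocker rather than $q_{k+1}$ and the intended chain is destroyed.

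A second, smaller point: your proposed tie-breaking device---placing sub-$2^{\ell}$ jobs on selected machines---has no effect, because Step~10 of Algorithm~\ref{OLSonline} chooses the least-loaded machine in $\Sc_B'$, where small jobs are invisible. This is not fatal (the algorithm's tie-breaking is unspecified, so for a lower bound you may simply stipulate the adversarial choice, as the paper also does), but your stated remedy does not do what you claim.

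The paper sidesteps the blocker-size obstruction by working across \emph{all} scales $\ell,\dots,u$ rather than a single one: each machine carries one job of size $t_i=2^{k}+j\e 2^{k}$ together with the complementary jobs $2^{k}(2-j\e)$, $2^{k}$, $2^{k+2},\dots,2^{u}$, all of which are legitimate rounded values summing to $2^{u+1}$; the arriving job has size $2^{u}$ and the cascade runs through every $t_i<2^{u}$. Adopting that multi-scale layout (or any other in which the complementary load on every machine is a sum of valid rounded sizes, each at least as large as the job meant to be ejected) is what is missing from your argument.
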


\begin{proof} Consider an instance with $\OPT= 2^{u+1}$ and $\e\OPT= 2^{\ell}$ for some integers $\ell, u$, and assume for simplicity that $\UB= \OPT$. This way, $\tilde{I}(\UB) = \{\ell,\dots,u\}$. The instance, consisting of $m\in O\left( \frac{1}{\e}\log\frac{1}{\e}\right)$ machines, is constructed in the following way: Consider the possible processing times sorted non-increasingly $t_1,\dots,t_h$. For each $i$ such that $t_i < 2^u$, the schedule has a machine with a job of size $t_i$ assigned, and it is completed with jobs until having load $2^{u+1}$: if $t_i = 2^k + j\e2^k$, this can be done adding a job of size $2^k + \left(\frac{1}{\e}-j\right)\e2^k$, a job of size $2^k$ and for each $k' = k+2,\dots,u$, a job of size $2^{k'}$ (if $i=u-1$, the machine will not have any of these last jobs). By doing so, the load of the machine is \begin{equation*} 2^i + j\e2^i + 2^i + \left(\frac{1}{\e} - k\right)\e 2^i + 2^{i} + \displaystyle\sum_{i'=i+2}^{u}{2^{i'}} = 2^{i+2} + 2^{u+1}-2^{i+2} = 2^{u+1}. \end{equation*}
	
	Now, if a job of size $2^u$ arrives to the system, it can be inserted using Push in the machine with the largest job of size less than $2^u$ (i.e., with processing time $2^{u-1} + \left(\frac{1}{\e}-1\right)\e2^{u-1}$), taking out such job because it breaks jump-optimality. If Algorithm \ref{OLSonline} takes the decision in the same way iteratively, then at least one job of each possible size $t_i<2^u$ is migrated, being then the total migrated load at least \begin{equation*} \displaystyle\sum_{i=\ell}^{u-1}{\displaystyle\sum_{j=1}^{\frac{1}{\e}-1}{2^i + j\e2^i}} = \left(\dfrac{1}{\e}-1\right)(2^u-2^{\ell+1}) + \dfrac{1}{2}\left(\dfrac{1}{\e}-1\right)2^u \in \Omega\left(\dfrac{1}{\e}2^u\right), \end{equation*} and hence the migration factor needed for this instance is $\Omega\left( \frac{1}{\e} \right)$. \end{proof}

By putting together Lemmas~\ref{lm:1.7-competitive}, \ref{OLS_migration} and \ref{lm:lbOLS_migration} we can conclude the following result.

\begin{theorem}\label{hinfOnl} Given $\e>0$, Algorithm \ref{OLSonline} is a polynomial time $(1.7+\e)$-competitive algorithm and uses migration factor $O\left(1/\e\right)$. Moreover, there are instances for which this factor is $\Omega\left(1/\e\right)$. \end{theorem}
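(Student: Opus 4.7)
The plan is simply to assemble the three preceding lemmas, which together cover all three claims of the theorem. From Lemma~\ref{lm:1.7-competitive} I immediately get two of the four ingredients: Algorithm~\ref{OLSonline} runs in polynomial time (since Push migrates only jobs strictly smaller than the inserted one and every job migrates at most once, so the total number of Push calls is polynomial in $n$), and the produced schedule is $(1.7+O(\e))$-competitive (because it is a $(k_1,k_2)$-relaxed version of a jump-optimal schedule with $k_1,k_2$ bounded by constants, then apply Theorem~\ref{thm:1.7apx} combined with Lemma~\ref{RobGree}). To absorb the $O(\e)$ slack into a clean $(1.7+\e)$ bound, I would just invoke the algorithm with an appropriately scaled parameter $\e' = \Theta(\e)$, which does not change the asymptotic migration factor.

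For the migration upper bound, Lemma~\ref{OLS_migration} directly shows that the total migrated load after the insertion of any job $j^*$ is $O((1/\e)\tp_{j^*}) = O((1/\e) p_{j^*})$, since rounding only decreases processing times by at most a factor $(1-\e)$. Hence the migration factor of the algorithm is $O(1/\e)$, as required.

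For the matching lower bound, Lemma~\ref{lm:lbOLS_migration} constructs a family of instances where, upon the arrival of a single job of size $2^u$, Algorithm~\ref{OLSonline} triggers a cascade of Push operations that together migrate jobs of total size $\Omega((1/\e) 2^u) = \Omega((1/\e) p_{j^*})$, forcing the migration factor to be $\Omega(1/\e)$.

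There is no real obstacle here: every nontrivial piece has already been proved in the preceding lemmas, so the proof of the theorem is essentially one line combining them. The only minor bookkeeping step is the rescaling of $\e$ mentioned above, so that the competitive ratio is stated as $(1.7+\e)$ rather than $(1.7+O(\e))$, while preserving the $O(1/\e)$ migration bound.
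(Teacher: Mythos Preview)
Your proposal is correct and matches the paper's own proof, which consists of the single sentence ``By putting together Lemmas~\ref{lm:1.7-competitive}, \ref{OLS_migration} and \ref{lm:lbOLS_migration} we can conclude the following result.'' Your extra remark about rescaling $\e$ to turn $(1.7+O(\e))$ into $(1.7+\e)$ is a reasonable piece of bookkeeping that the paper leaves implicit.
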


\section{LPT online with migration $\tilde{O}(1/\e^3)$.}\label{sec:4/3online}

In this section we present our main contribution which is an approximate online adaptation of $\LPT$ using $\text{poly}(1/\e)$ migration factor. In order to analyze it, we will first show some structural properties of the solutions constructed by $\LPT$ and how they behave when the instance is perturbed by a new job. 

Algorithm~\ref{OLSonline} presented in Section~\ref{sec:JOpt} already gives some of the features and properties that our online version of $\LPT$ fulfills. However, now in the analysis we will crucially exploit the symmetry of instances rounded according to the procedure described in Section~\ref{sec:Rounding}, in particular the fact that the load of each machine is a multiple of some fixed value. Since $\LPT$ takes decisions based solely on the machine loads, having a bounded number of values for them allows us to accurately control the set of machines where the assignment of big jobs can be kept unchanged after the arrival of a big job while maintaining the structure of the solution.\wgcom{Revisar este párrafo, aquí hay que enfatizar la importancia de la sección 4}\jvcom{Me parece ok!} Unless stated otherwise, for the rest of this section machine loads are considered with respect to the rounded processing times $\tp_j$.

\paragraph*{\textbf{Load Monotonicity.}} Here we describe in more detail the useful structural properties of solutions constructed using $\LPT$.

\begin{definition}\label{LoaPro} Given a schedule $\Sc$, its \textbf{load profile}, denoted by $\load(\Sc)$, is an $\mathbb{R}_{\ge0}^m$-vector $(t_1,\ldots,t_m)$ containing the load of each machine sorted so that $t_1\le t_2 \le \ldots\le t_m$.\end{definition}

The following lemma shows that after the arrival of a job, the load profile of solutions constructed using $\LPT$ can only increase. This property only holds if the vector of loads is sorted, as it can be seen in Figure~\ref{fig:LPT_no_red}. This monotonicity property is essential for our analysis. To show the mentioned property the following rather technical lemma will help.

\begin{lemma}\label{loadprof_generalized} Let $x,y\in \mathbb{R}_+^n$, $x=(x_1,x_2,\dots,x_n)$, $y=(y_1,y_2,\dots,y_n)$ such that $x_1\le x_2 \le \dots\le x_n$, $y_1\le y_2\le \dots\le y_n$ and $x\le y$ coordinate-wise, and $\alpha, \beta\in \mathbb{R}$ such that $\alpha\le \beta$. If we consider the new vectors defined by replacing $x_i$ by $x_i + \alpha$ in $x$ and $y_i$ by $y_i + \beta$ in $y$ for some $i\in\{1,2,\dots,n\}$, and then we sort the coordinates non-decreasingly of the new vectors, obtaining $x'$ and $y'$, then $x' \le y'$ coordinate-wise. \end{lemma}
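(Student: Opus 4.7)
The proof reduces to a general, well-known monotonicity property of sorting: if two vectors $u,v \in \mathbb{R}^n$ satisfy $u \le v$ coordinate-wise, then their non-decreasing sortings $\mathrm{sort}(u)$ and $\mathrm{sort}(v)$ also satisfy $\mathrm{sort}(u) \le \mathrm{sort}(v)$ coordinate-wise. I would first establish this auxiliary fact, then apply it to the modified vectors $\tilde{x} = (x_1,\dots,x_{i-1},x_i+\alpha,x_{i+1},\dots,x_n)$ and $\tilde{y} = (y_1,\dots,y_{i-1},y_i+\beta,y_{i+1},\dots,y_n)$, whose sorted versions are exactly $x'$ and $y'$.

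To prove the auxiliary sorting-monotonicity fact, I would invoke the standard min-max representation of order statistics: for any vector $u \in \mathbb{R}^n$, the $k$-th smallest entry equals
\[
u_{(k)} \;=\; \min_{\substack{S\subseteq\{1,\dots,n\}\\ |S|=n-k+1}} \;\max_{j\in S} u_j.
\]
If $u \le v$ coordinate-wise, then for every subset $S$ we have $\max_{j\in S} u_j \le \max_{j\in S} v_j$, and taking the minimum over all $(n-k+1)$-subsets preserves the inequality, giving $u_{(k)} \le v_{(k)}$ for every $k$. Hence the sorted vectors are coordinate-wise comparable.

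Next I would check that the modified unsorted vectors $\tilde{x}$ and $\tilde{y}$ satisfy $\tilde{x} \le \tilde{y}$ coordinate-wise. For every index $j\neq i$ this is immediate from the hypothesis $x_j \le y_j$, while for the index $i$ it follows from
\[
\tilde{x}_i \;=\; x_i + \alpha \;\le\; y_i + \beta \;=\; \tilde{y}_i,
\]
which uses $x_i \le y_i$ together with $\alpha \le \beta$. Applying the sorting-monotonicity fact to $\tilde{x}$ and $\tilde{y}$ yields $x' = \mathrm{sort}(\tilde{x}) \le \mathrm{sort}(\tilde{y}) = y'$ coordinate-wise, as required.

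I do not foresee a real obstacle: the argument is almost entirely syntactic once the sorting-monotonicity property is in hand. The only mild subtlety is ensuring that the original sortedness of $x$ and $y$ is not needed beyond guaranteeing the hypothesis $x\le y$ is coordinate-wise (rather than after some matching permutation). Since the statement assumes both input vectors are already sorted non-decreasingly, no additional alignment is required, and the proof is clean.
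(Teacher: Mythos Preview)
Your proof is correct and takes a cleaner route than the paper. The paper argues directly by case analysis: it tracks the positions $\bar{i}$ and $\bar{j}$ where the modified entries $x_i+\alpha$ and $y_i+\beta$ land after re-sorting, notes that outside the interval between $\bar{i}$ and $\bar{j}$ the coordinates of $x'$ and $y'$ coincide with those of $x$ and $y$, and then verifies the inequality coordinate by coordinate inside that interval, splitting into the cases $\bar{i}<\bar{j}$ and $\bar{i}\ge\bar{j}$. Your approach instead abstracts away the specific perturbation entirely: once you check $\tilde{x}\le\tilde{y}$ coordinate-wise (which is immediate from $x_i\le y_i$ and $\alpha\le\beta$), the general sorting-monotonicity fact does all the work. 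This is shorter, more conceptual, and reusable; the paper's bare-hands argument is fully self-contained but more tedious. One small slip worth fixing: your min-max formula as written actually computes the $(n-k+1)$-th order statistic rather than the $k$-th, since the minimizing subset of size $n-k+1$ consists of the $n-k+1$ smallest entries, whose maximum is $u_{(n-k+1)}$; the intended formula has $|S|=k$. This does not damage the conclusion, because as $k$ ranges over $\{1,\ldots,n\}$ so does $n-k+1$, but the index should be corrected.
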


\begin{proof} Let $\bar{i}$ be the coordinate such that $x'_{\bar{i}}=x_i+\alpha$ and $\bar{j}$ such that $y'_{\bar{j}}=y_i+\beta$. For each coordinate $k<\min\{\bar{i},\bar{j}\}$ or $k>\max\{\bar{i},\bar{j}\}$ we have that $x'_k=x_k$ and $y'_k=y_k$, thus satisfying the desired inequality by hypothesis. In the remaining we have two cases:
	
	\begin{itemize} \item Assume that $\bar{i}<\bar{j}$, and let $k\in\{\bar{i},\bar{i}+1,\ldots,\bar{j}-1\}$. Then we have that
		\[
		x'_k \le x'_{k+1} = x_{k+1} \le y_{k+1} = y'_k 
		\]
		where the first inequality holds due to the monotonicity of the vectors and the second one because of the hypothesis. Similarly, for $k=\bar{j}$ we have that
		\[
		x'_k = x_k \le y_k = y'_{k-1} \le y'_k, 
		\]
		where the first inequality follows from the hypothesis. 
		
		\item Assume that $\bar{i}\ge \bar{j}$ and let $k\in\{\bar{j},\bar{j}+1,\ldots,\bar{i}\}$. Then,
		\begin{equation*}
		x'_k\le x'_{\bar{i}} \le y'_{\bar{j}} \le y'_k,
		\end{equation*}
		where, the first and third inequalities follows from the monotonicity of the vectors, and the second one from the fact that $x_i+\alpha \le y_i + \beta$. \qedhere \end{itemize} 
\end{proof}

\begin{lemma}\label{load} Let $(\J,\M)$ be a machine covering instance and $j^*\notin\J$ a job. Then, it holds that \( \load(\Sc_{\LPT}(\J,\M)) \le \load(\Sc_{\LPT}(\J',\M)),\) where the inequality is considered coordinate-wise and $\J' = \J\cup\{j^*\}$. \end{lemma}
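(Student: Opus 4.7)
The plan is to compare two parallel runs of $\LPT$, one on $\J$ and one on $\J'$, and inductively track the relation between their sorted load profiles using Lemma~\ref{loadprof_generalized}. Sort the jobs of $\J'$ non-increasingly; the inserted job $j^*$ appears at some position $k+1$, so that the first $k$ jobs of this ordering and the last $n-k$ jobs (where $n=|\J|$) coincide with an LPT ordering of $\J$. We process jobs following the $\J'$ ordering in both runs, with the convention that the $\J$-run does nothing at step $k+1$.

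Let $x^{(t)}$ and $y^{(t)}$ denote the sorted load profiles after $t$ steps of the runs on $\J$ and $\J'$, respectively. We prove by induction that $x^{(t)} \le y^{(t)}$ coordinate-wise for every $t \in \{0,1,\dots,n+1\}$. The base case $t=0$ is trivial since both profiles are the zero vector. For the inductive step, observe that $\LPT$ always assigns the current job to a least loaded machine, so in the sorted load profile this amounts to adding the job's size to the first coordinate (the minimum) and re-sorting. Consequently, Lemma~\ref{loadprof_generalized} applies with $i=1$: when $t \neq k+1$ both runs process the same job $j$, so we set $\alpha = \beta = \tp_j$; when $t = k+1$ only the $\J'$-run advances, so we set $\alpha = 0$ and $\beta = \tp_{j^*}$. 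In both cases the lemma yields $x^{(t)} \le y^{(t)}$.

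After $n+1$ steps, the $\J$-run has processed all of $\J$ and the $\J'$-run has processed all of $\J'$, so $x^{(n+1)} = \load(\Sc_{\LPT}(\J,\M))$ and $y^{(n+1)} = \load(\Sc_{\LPT}(\J',\M))$, which gives the conclusion. Note that although $\LPT$ may break ties arbitrarily when selecting a least loaded machine, this freedom does not affect the sorted load profile at each step, so the induction is independent of any such choice.

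The main conceptual obstacle is recognizing that the monotonicity holds only after sorting (as the example in Figure~\ref{fig:LPT_no_red} illustrates the failure at the level of individual machine assignments), and pinpointing that the only ``asymmetric'' step in the parallel execution is the insertion of $j^*$ itself. Once this is framed correctly, Lemma~\ref{loadprof_generalized} does all the work, because the minimum-load coordinate is the same index (namely $i=1$) in both sorted profiles at every step, and the ``extra'' step contributing $\tp_{j^*}$ to $y$ alone is covered by taking $\alpha=0$.
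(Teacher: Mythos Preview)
Your proof is correct and follows essentially the same strategy as the paper: both arguments run $\LPT$ in parallel on $\J$ and $\J'$ using a common non-increasing ordering, and invoke Lemma~\ref{loadprof_generalized} with $i=1$ at each step (taking $\alpha=\beta$ when both runs process the same job and $\alpha=0,\ \beta=\tp_{j^*}$ at the step where only the $\J'$-run advances). The only cosmetic difference is that you start the induction from the empty schedule, whereas the paper starts from the moment $j^*$ is about to be inserted; your presentation is, if anything, slightly cleaner.
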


\begin{proof} Let us first relabel the jobs in $\J$ so that $\tp_1\ge \tp_2\ge \ldots \ge \tp_n$. To simplify the argument we assume that both runs of $\LPT$ assign jobs in the order given by the labeling above $1,2,\ldots,n$, where in the run for $\J'$ the new job $j^*$ is inserted to the list in any position consistent with $\LPT$. This is without loss of generality since different tie breaking do not affect the load profiles of the solutions.

Consider the set of instances $(\J|_k,\M)$ for $k=r,\dots,n$, where $\J|_k \subseteq \J$ is the set of the $k$ largest jobs in $\J$, and $r$ is the maximal index such that $\tp_{r}\le \tp_{j^*}$. Similarly, let $\J'|_k = \J|_k\cup \{j^*\}$ for any $k\in \{r,\ldots,n\}$. We will show by induction that the lemma is true for each pair $(\J|_k,\M)$ and $(\J'|_k,\M)$. The base case $k=r$ follows easily from Lemma~\ref{loadprof_generalized} since $\Sc_{\LPT}(\J|_k,\M)$ and $\Sc_{\LPT}(\J'|_k\setminus\{j^*\},\M)$ assign to the same machines all jobs $\{1,\ldots,r\}$, and adding $j^*$ to the least loaded machine in $\Sc_{\LPT}(\J'|_k\setminus\{j^*\},\M)$ (and a job of size $0$ to the least loaded machine in $\Sc_{\LPT}(\J|_k,\M)$) is the same as adding $\tp_{j^*}$ to the first coordinate of $\load(\Sc_{\LPT}(\J'|_k\setminus\{j^*\},\M))$, and then the inequality holds.

Suppose now that $\load(\Sc_{\LPT}(\J|_k,\M)) \le \load(\Sc_{\LPT}(\J'|_k,\M))$. Showing that the inequality is true for $k+1$ is equivalent to show that when assigning job $k+1$ to a least loaded machine in $\Sc_{\LPT}(\J|_k,\M)$ and in $\Sc_{\LPT}(\J'|_k,\M)$, the resulting load profiles satisfy the inequality, which is precisely the statement of Lemma~\ref{loadprof_generalized} adding $\tp_{k+1}$ to the first coordinate of $\load(\Sc_{\LPT}(\J|_k,\M))$ and also to the first coordinate of $\load(\Sc_{\LPT}(\J'|_k,\M))$. \end{proof}

This lemma together with our rounding procedure allow us to show that the difference (in terms of the Hamming distance) of the load profiles of two consecutive solutions consisting purely of big jobs, is bounded by a small constant. This property will be important to obtain a $\text{poly}(1/\e)$ migration factor and here we crucially exploit the fact that the load of the machines is always multiple of a fixed value.

\begin{lemma}\label{PocaSub} Consider two instances $(\J,\M)$ and $(\J',\M)$ with $\J'=\J\cup\{j^*\}$, where $\J'$ contains only big or huge jobs w.r.t $\UB$. Then the vectors $\load(\Sc_{\LPT}(\J,\M))$ and $\load(\Sc_{\LPT}(\J',\M))$ differ in at most $\frac{\tp_{j^*}}{\e 2^{\ell}} \in O(1/\e^2)$ many coordinates.\end{lemma}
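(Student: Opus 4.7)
The plan is to combine the monotonicity of load profiles (Lemma~\ref{load}) with the divisibility properties from the rounding (Lemma~\ref{lm:rounding}). Let $t = \load(\Sc_{\LPT}(\J,\M))$ and $t' = \load(\Sc_{\LPT}(\J',\M))$. By Lemma~\ref{load}, $t' \ge t$ coordinate-wise, so each difference $t'_i - t_i$ is non-negative. Since the sum of all loads on the machines equals the total processing time of all jobs, we obtain the identity
\begin{equation*}
	\sum_{i=1}^m (t'_i - t_i) \;=\; \sum_{j\in\J'} \tp_j - \sum_{j\in\J} \tp_j \;=\; \tp_{j^*}.
\end{equation*}

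Next I would invoke the second part of Lemma~\ref{lm:rounding}: every big or huge job has a rounded processing time that is an integer multiple of $\e 2^\ell$. Since $\J$ and $\J'$ contain only such jobs, the load of every machine in both $\Sc_{\LPT}(\J,\M)$ and $\Sc_{\LPT}(\J',\M)$ is also a multiple of $\e 2^\ell$. In particular, each difference $t'_i - t_i$ is a non-negative multiple of $\e 2^\ell$, so whenever it is non-zero it is at least $\e 2^\ell$.

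Combining these two facts, the number of indices $i$ for which $t'_i \neq t_i$ is bounded by
\begin{equation*}
	\bigl|\{i : t'_i \neq t_i\}\bigr| \;\le\; \frac{\tp_{j^*}}{\e 2^\ell},
\end{equation*}
which gives the first claim. For the $O(1/\e^2)$ bound, I would use that $\tp_{j^*}$ is big or huge: if $j^*$ is big then $\tp_{j^*} < 2^{u+1} \le 2\UB$ by definition of $\tilde{I}(\UB)$, and since $2^\ell \ge \e\UB$ we get $\tp_{j^*}/(\e 2^\ell) \le 2/\e^2$. I do not expect a serious obstacle here, since the whole argument is essentially a counting one once monotonicity and divisibility are in place; the only point requiring some care is to make sure that both load profiles (before and after) consist of machine loads which are indeed multiples of $\e 2^\ell$, which is precisely guaranteed by the hypothesis that $\J'$ contains only big or huge jobs.
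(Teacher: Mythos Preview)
Your proposal is correct and follows essentially the same approach as the paper's own proof: invoke Lemma~\ref{load} for coordinate-wise monotonicity, observe that the $\ell_1$-distance of the two load profiles equals $\tp_{j^*}$, and use Lemma~\ref{lm:rounding} to conclude that every nonzero coordinate difference is at least $\e 2^{\ell}$. The only cosmetic difference is in the final constant for the $O(1/\e^2)$ bound (the paper writes $\tp_{j^*}\le 2^u\le \UB\le 2^{\ell}/\e$ to get $1/\e^2$, while you use $\tp_{j^*}<2^{u+1}\le 2\UB$ to get $2/\e^2$), which is immaterial.
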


\begin{proof} We have that $\load(\Sc_{\LPT}(\J,\M))=(t_1,\ldots,t_m)\le (t_1',\ldots,t'_m)=\load(\Sc_{\LPT}(\J',\M))$ thanks to Lemma~\ref{load}. Also, if $t_i< t_i'$ for some $i$, then $t_i'\ge t_i + \e 2^{\ell}$ since all values $t_{j},t_{j'}$ are integer multiples of $\e 2^{\ell}$ because of Lemma~\ref{lm:rounding}. Since $||\load(\Sc_{\LPT}(\J',\M))-\load(\Sc_{\LPT}(\J,\M))||_1= \tp_{j^*}$, we obtain that the number of coordinates in which the load profiles differ is at most $\frac{\tp_{j^*}}{\e 2^{\ell}}$.  Finally, recalling that $j^*$ is big, then $\tp_{j^*}\le 2^u \le \UB \le 2^{\ell}/\e$, and we can bound the number of different coordinates by $\frac{\tp_{j^*}}{\e 2^{\ell}} \le 1/\e^2$. 
\end{proof}

\paragraph*{\textbf{Description of Online LPT.}}

Consider two instances $(\J,\M)$ and $(\J',\M)$ such that $\J'=\J\cup\{j^*\}$, and let $\OPT$ and $\OPT'$ be their optimal values respectively. In what follows, for a given list-scheduling algorithm, we will refer to a tie-breaking rule as a rule that decides a particular machine for assigning a job when faced with multiple least loaded machines. We say that an assignment is an LPT-solution if there is some tie-breaking rule such that LPT yields such assignment. We will compute an upper bound $\UB$ on $\OPT'$ by computing an LPT-solution and duplicating the value of its minimum load. For this upper bound, we compute its respective set $\tP$ with \eqref{eq:deftP1} and \eqref{eq:deftP2}. In the algorithm, we will label elements in $\tP=\{q_1,\ldots,q_{|\tP|}\}$ such that $q_1 > q_2 > \cdots> q_{|\tP|}$. 
Let $\J_h\subseteq \J$ (respectively $\J_h'\subseteq \J'$) be the set of jobs of size $q_h$ in $\J$ (respectively $\J'$), for $q_h\in \tP$. Similarly, we define $\J_0$ (resp. $\J'_0$) to be the set of jobs in $\J$ (resp. $\J'$) of sizes larger than $q_1$, that is, all huge jobs in $\J$ (resp. $\J'$). 
Also, let $\Sc_{h}$ (resp. $\Sc_h'$) be the solution $\Sc$ (resp. $\Sc'$) restricted to jobs of size $q_h$ or larger. Finally, $\Sc_0$ and $\Sc_0'$ are the respective solutions restricted to jobs in $\J_0$.

In what follows, $x_+$ denotes the positive part of $x\in\mathbb{R}$, i.e., $x_+=\max\{x,0\}$. To understand the algorithm, it is useful to have the following observation in mind.

\begin{observation}\label{obs:LScharac}
	Consider a solution $\Sc$ for jobs in $\J$ and let $\mathcal{K}$ be a set of jobs with $\J\cap \mathcal{K}=\emptyset$ and all jobs in $\mathcal{K}$ have the same size $p$. Consider a solution $\Sc_{LS}$ constructed by adding the jobs from $\mathcal{K}$ in $\Sc$ using list-scheduling, and let $\lambda=\ell_{\min}(\Sc_{LS})$. Notice that $\lambda$ is independent of the tie-breaking rule used in list-scheduling. Consider any solution $\Sc'$ that is constructed starting from $\Sc$ and adding jobs in $\mathcal{K}$ in some arbitrary way. Then, $\Sc'$ corresponds to a solution obtained by adding jobs from $\mathcal{K}$ with a list-scheduling procedure (for some tie-breaking rule) if and only if the number of jobs in $\mathcal{K}$ added to each machine $i$ is: (i) $\left\lceil \tfrac{(\lambda-\ell_i(\Sc))_+}{p}\right\rceil$ if $\tfrac{(\lambda-\ell_i(\Sc))_+}{p}$ is not an integer, and either $\tfrac{(\lambda-\ell_i(\Sc))_+}{p}$ or $\tfrac{(\lambda-\ell_i(\Sc))_+}{p}+1$ if $\tfrac{(\lambda-\ell_i(\Sc))_+}{p}$ is a non-negative integer.
\end{observation}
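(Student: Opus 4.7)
The plan is to establish two ingredients: first that the final minimum load $\lambda$ is independent of the tie-breaking rule in list-scheduling, and then the ``iff'' characterization of the count vector $(n_i)_{i\in\M}$. For the first, since all jobs in $\mathcal{K}$ share the common size $p$, an exchange argument shows that swapping the choice between two equally minimum-loaded machines keeps the multiset of loads unchanged at every subsequent step. A direct induction on $|\mathcal{K}|$ formalizes this, so $\lambda=\lmin(\Sc_{LS})$ depends only on $\Sc$, $|\mathcal{K}|$, and $p$.

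For the ``only if'' direction, the key invariant is that the minimum machine load during list-scheduling is non-decreasing. Hence if $\Sc'$ is a list-scheduling extension of $\Sc$ placing $n_i$ jobs of $\mathcal{K}$ on machine $i$, I obtain (a) $\ell_i(\Sc)+n_i p\ge \lambda$ from the fact that the final minimum load equals $\lambda$, and (b) if $n_i\ge 1$, that $\ell_i(\Sc)+(n_i-1)p\le \lambda$, because right before the last insertion on $i$ the machine was least loaded and that minimum was at most $\lambda$. Combining the two inequalities confines $n_i$ to the length-one interval $[(\lambda-\ell_i(\Sc))/p,\,(\lambda-\ell_i(\Sc))/p+1]$. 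Intersecting with $\mathbb{Z}$ yields the stated dichotomy: the unique integer in this interval is $\lceil(\lambda-\ell_i(\Sc))_+/p\rceil$ when $(\lambda-\ell_i(\Sc))/p$ is not an integer, and both endpoints are admissible when it is a non-negative integer. The case $n_i=0$ (with $\ell_i(\Sc)\ge \lambda$) is captured by the positive-part convention.

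For the ``if'' direction I proceed by induction on the number of jobs placed. At each step, select a machine $i^*$ with still-pending insertions (i.e., whose current count $k_{i^*}$ is strictly below its target $n_{i^*}$) and smallest current load $\ell_{i^*}(\Sc)+k_{i^*}p$ among such machines, and place the next job on $i^*$. The key verification is that $i^*$ is also a globally minimum-loaded machine in the current partial schedule. This uses both sides of the characterization: any machine $j$ with pending work satisfies $\ell_j(\Sc)+k_j p\le \ell_j(\Sc)+(n_j-1)p\le \lambda$ by the upper bound on $n_j$, while any $j$ without pending work satisfies $\ell_j(\Sc)+k_j p=\ell_j(\Sc)+n_j p\ge \lambda$ by the lower bound; so a machine attaining the current global minimum is necessarily one with pending work, and among those $i^*$ is minimum by construction. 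Finally, the recursive subproblem (with one remaining job on $i^*$ removed) has the same $\lambda$ by the well-definedness of the first paragraph, and its constraints are preserved under decrementing $n_{i^*}$, closing the induction. The main obstacle is precisely this global-minimality check, which requires invoking both bounds of the characterization simultaneously.
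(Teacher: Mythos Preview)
The paper states this result as an Observation and gives no proof, so there is nothing in the paper to compare your argument against directly. Your treatment of the well-definedness of $\lambda$ and of the ``only if'' direction is correct: the running minimum is non-decreasing, so the last insertion on machine $i$ shows $\ell_i(\Sc)+(n_i-1)p\le\lambda$ whenever $n_i\ge 1$, while the final load bound gives $\ell_i(\Sc)+n_ip\ge\lambda$; combined with the case $n_i=0$ (which forces $\ell_i(\Sc)\ge\lambda$) this lands $n_i$ in the stated set.

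There is, however, a genuine gap in your ``if'' direction, and it reflects an imprecision in the Observation itself. Your key step asserts that every machine $j$ with pending work satisfies $\ell_j(\Sc)+(n_j-1)p\le\lambda$ ``by the upper bound on $n_j$''. But the upper bound only gives $(n_j-1)p\le(\lambda-\ell_j(\Sc))_+$, hence $\ell_j(\Sc)+(n_j-1)p\le\max\{\ell_j(\Sc),\lambda\}$; when $\ell_j(\Sc)>\lambda$ the positive part vanishes, the characterization still permits $n_j=1$, yet your inequality fails. Concretely, with three machines of initial loads $0,0,10$, $p=1$ and $|\mathcal K|=5$, one gets $\lambda=2$; the vector $(n_1,n_2,n_3)=(2,2,1)$ meets every per-machine constraint in the Observation and sums to $5$, but no list-scheduling run ever touches the third machine. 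The statement (and your argument) becomes correct if the integer clause is read as requiring $(\lambda-\ell_i(\Sc))/p$ itself---not its positive part---to be a non-negative integer; under that reading your global-minimality check goes through verbatim. This edge case is harmless for the paper, since its uses of the Observation (in the proof of Lemma~\ref{lm:LPT-solution}) rely only on the upper bound coming from the ``only if'' direction.
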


Our main procedure is called every time that we get a new job $j^*$ (where $\J' = \J \cup\{j^*\})$ and receives as input the current solution $\Sc$ for $(\J,\M)$. If $\J=\emptyset$, then $\Sc$ is trivially initialized as empty. The exact description is given in Algorithm~\ref{LPTonline}. 

Broadly speaking, the algorithm works in phases $h\in\{0,\ldots,|\tP|\}$, where for each $h$ it assigns jobs in $\J'_{h}$. First, we assign jobs exactly as in $\Sc_{h}$ for machines in which the assignment of $\Sc_{h-1}$ and $\Sc'_{h-1}$ coincide. The set of such machines is denoted by $\Me_{h-1}$ and the set of remaining machines is denoted by $\Mne_{h-1}$. As we will see, this is consistent with LPT by the previous observation and Lemma~\ref{load}. The remaining jobs in $\J'_h$ are assigned using list-scheduling. Crucially, we will break ties in favor of machines where the assignment of $\Sc_{h-1}$ and $\Sc'_{h-1}$ differ. This is necessary to avoid creating new machines with different assignments. After assigning huge and big jobs, small jobs are added exactly as in $\Sc$ in machines where the assignment of big jobs in $\Sc$ and $\Sc'$ coincides. The rest of small jobs are added greedily. In the last part, the algorithm rebalances small jobs by moving them from machines of load higher than $\ell_i(\Sc')+2^{\ell}$ to the least loaded machines.

\begin{algorithm}
	\caption{Online LPT}
	\label{LPTonline}
	\algorithmicrequire{ \parbox[t]{12.75cm}{ 
			Instances $(\J,\M)$ and $(\J',\M)$ such that $\J'=\J\cup\{j^*\}$; a schedule $\Sc(\J,\M)$.}}
	
	\begin{algorithmic}[1]
		\State{run LPT on input $\J'$ and let $\tau$ be the minimum load of the constructed solution. Set $\UB\gets 2\tau$. Define $\tP, \ell$, and $u$ based on this upper bound $\UB$ using \eqref{eq:deftP1} and \eqref{eq:deftP2}.}
		\State{set $\Me_{-1}\gets\M$ and $\Mne_{-1}\gets\emptyset$.}
		
		\For{$h=0,1,\ldots,|\tP|$} \Comment{Assignment of big and huge jobs}
		\State{for each machine $i\in \Me_{h-1}$, assign all jobs in $\J_h\cap \Sc^{-1}(i)$ to $i$ in $\Sc'$.}\label{st:bigMe}
		\State{for jobs in $\J'_h$ still not assigned in $\Sc'$, apply list-scheduling (with an arbitrary order of jobs). If there is more than one least loaded machine break ties in favor of machines in $\Mne_{h-1}$.}
		\label{st:bigMne}
		\State{define $\Me_h$ as the set of machines $i$ such that $\Sc_h^{-1}(i)=\Sc_h'^{-1}(i)$ and $\Mne_h\gets\M \setminus \Me_h$.}
		\EndFor			
		\For{machines $i\in \Me_{|\tP|}$}\Comment{Assignment of small jobs} 
		\State{assign all small jobs w.r.t to $\UB$ in $\J\cap \Sc^{-1}(i)$ to $i$ in $\Sc'$.}
		\EndFor
		
		\State{\label{st:listSmall}assign the remaining jobs using list-scheduling}.
		\State{set $\overline{\M}$ to be the set of machines containing a small job w.r.t $\UB$.} 
		
		\While{there exists $i\in \overline{\M}$ s.t. $\ell_i(\Sc')> \ell_{\min}(\Sc')+2^{\ell}$}\label{st:reassignSmall}
		\State{consider a machine $i\in \overline{\M}$ of maximum load. Reassign the smallest job in $\Sc'^{-1}(i)$ to any least loaded machine.}
		\State{update $\overline{\M}$ to be the set of machines containing a small job w.r.t $\UB$.}
		\EndWhile
		\State \Return $\Sc'$.
	\end{algorithmic}
\end{algorithm} 

We can bound the competitive ratio of the algorithm in a very similar way to Lemma~\ref{lm:1.7-competitive}. First we prove the following auxiliary lemma.

\begin{lemma}\label{lm:LPT-solution} If $\Sc'$ is the output of the algorithm then $\Sc'_{|\tP|}$ is an LPT-solution. \end{lemma}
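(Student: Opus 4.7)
The plan is to prove the lemma by a double induction. The outer induction runs over arrivals and maintains the invariant that every input schedule $\Sc$ satisfies: $\Sc_{|\tP|}$ is an LPT-solution (trivially so in the base case where $\Sc$ is empty). Under this hypothesis, the inner induction proceeds on $h \in \{-1,0,\ldots,|\tP|\}$ with the claim that $\Sc'_h$ is an LPT-solution for $\bigcup_{k\le h} \J'_k$; the case $h = -1$ is trivial. For the inductive step, assume $\Sc'_{h-1}$ is an LPT-solution. Running canonical LPT on $\Sc'_{h-1}$ with all jobs of $\J'_h$ produces an LPT-solution for $\bigcup_{k\le h}\J'_k$ with some minimum load $\lambda'_h$. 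By Observation~\ref{obs:LScharac}, it suffices to verify that, for every machine $i$, the number of $q_h$-jobs placed on $i$ by the algorithm is a valid LPT count with respect to $\lambda'_h$ and $\ell_i(\Sc'_{h-1})$.

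The first step is to show that step~\ref{st:bigMe} never overloads a machine: for $i \in \Me_{h-1}$, the placed count $k_i^\Sc := |\Sc^{-1}(i) \cap \J_h|$ is a valid LPT count with respect to $\lambda_h$ (the min load of canonical LPT on $\Sc_{h-1}$ with $\J_h$), by the outer inductive hypothesis applied to $\Sc_h$. Lemma~\ref{load} gives $\lambda_h \le \lambda'_h$ because $\J \subseteq \J'$; coupled with $\ell_i(\Sc_{h-1}) = \ell_i(\Sc'_{h-1})$ (the defining identity of $\Me_{h-1}$) and the monotonicity in $\lambda$ of the maximum valid count $\lfloor(\lambda-\ell_i)_+/q_h\rfloor+1$, this yields $k_i^\Sc \le \lfloor(\lambda'_h-\ell_i(\Sc'_{h-1}))_+/q_h\rfloor+1$. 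The second step is a water-filling argument for the list-scheduling of step~\ref{st:bigMne}: no machine has been filled beyond its maximum valid LPT cap after step~\ref{st:bigMe}, and the remaining $|\J'_h|-\sum_{i \in \Me_{h-1}} k_i^\Sc$ jobs match exactly the water needed by canonical LPT to fill the under-loaded machines up to $\lambda'_h$. Thus list-scheduling reaches minimum load $\lambda^* = \lambda'_h$, and a short case analysis confirms that the final count on each machine is valid: for $i \in \Mne_{h-1}$ it equals $\lceil(\lambda'_h-\ell_i(\Sc'_{h-1}))_+/q_h\rceil$ (or one more in an integer tie); for $i \in \Me_{h-1}$ already saturated in step~\ref{st:bigMe} the count remains $k_i^\Sc$ (which is then between the minimum and maximum valid counts, as $\ell_i(\Sc'_{h-1})+k_i^\Sc q_h \ge \lambda'_h$ forces $k_i^\Sc$ above the min valid); and in the remaining case step~\ref{st:bigMne} tops the count up to $\lceil(\lambda'_h-\ell_i(\Sc'_{h-1}))/q_h\rceil$.

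The principal technical obstacle is justifying the equality $\lambda^* = \lambda'_h$. The subtlety is that step~\ref{st:bigMe} can push some machines in $\Me_{h-1}$ above $\lambda'_h$: when $k_i^\Sc$ attains the integer-case maximum valid count, the machine is left at load $\lambda'_h + q_h$. One must verify that this apparent surplus is exactly compensated by the deficits on machines in $\Mne_{h-1}$ (not yet loaded at all) and on machines in $\Me_{h-1}$ whose $k_i^\Sc$ sits strictly below the max valid count, so that the residual number of jobs after step~\ref{st:bigMe} is precisely the amount required to bring all under-filled machines to level $\lambda'_h$ via water-filling, and no further. Carrying out this balance through Observation~\ref{obs:LScharac} and the inequality $\lambda_h \le \lambda'_h$ closes the argument.
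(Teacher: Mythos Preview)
Your proposal follows essentially the same route as the paper: an outer induction over arrivals (so that the input $\Sc$ restricted to big jobs is already an LPT-solution), and an inner induction over $h$ that combines Observation~\ref{obs:LScharac} with the monotonicity $\lambda_h \le \lambda'_h$ from Lemma~\ref{load} to argue that Step~\ref{st:bigMe} never exceeds the maximum valid count, after which Step~\ref{st:bigMne} completes to an LPT-solution. The paper's proof is terser at the end---it simply writes ``This implies that after adding jobs in Step~\ref{st:bigMne} we obtain an LPT-solution''---whereas you explicitly surface the water-filling subtlety about $\lambda^* = \lambda'_h$; but this is the same argument at a different level of detail, not a different approach.

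One small point you gloss over that the paper handles: the set $\tP$ (and hence the notion of ``big'') is recomputed at each arrival from the new $\UB$, so the outer inductive hypothesis literally says that $\Sc$ restricted to the \emph{old} threshold is an LPT-solution. The paper observes that $\UB$ can only increase (again via Lemma~\ref{load}), hence $\min\tP_0 \le \min\tP$, so every job that is big for the new $\tP$ was already big (or huge) for the old one, and any prefix-restriction of an LPT-solution is itself an LPT-solution. You implicitly use this when you invoke ``the outer inductive hypothesis applied to $\Sc_h$,'' so it is worth stating.
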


\begin{proof} We show the proof inductively. Consider a run of the algorithm with input assignment~$\Sc$. If $\Sc$ is empty then it is clearly an LPT-solution. Otherwise, $\Sc$ is the output of a run of the algorithm. 
	We can assume inductively that $\Sc_{|\tP_0|}$ is an LPT-solution (and thus also any restriction of $\Sc_{|\tP_0|}$ to jobs of sizes at least $p$, for any $p\ge0$). Notice that $\UB_0\le \UB$, by Lemma~\ref{load}, and thus $\min \tP_0\le \min \tP$ and $\max \tP_0\le \max \tP$.
	
	We use a second induction to show that, for every $h\in\{0,\ldots,|\tP|\}$, $\Sc_h'$ is an LPT-solution.  
	
	To show the base case ($h=0$), consider jobs in $J_0'$, which are all larger than $\UB\ge \OPT'$. Hence there are at most $m$ of them, and the algorithm assigns them each to a different machine (this, again, follows inductively). Thus, the base case holds.	 
	
	
	Consider $h\ge 1$ and let us assume that $\Sc_{h-1}'$ is an LPT-solution.  Let $\Sc_{\text{LPT},h}$ be an LPT-solution for jobs in $\J_0\cup \ldots \cup \J_{h}$, and similarly $\Sc'_{\text{LPT},h}$ for jobs in $\J'_0\cup \ldots \cup \J'_{h}$. First observe that the load profile vector  $\load(\Sc'_{\text{LPT},h})$ is independent of the tie-breaking rule. Consider the target value $\lambda = \ell_{\min}(\Sc_{\text{LPT},h})$ and $\lambda'=\ell_{\min}(\Sc'_{\text{LPT},h})$. Notice that, by Lemma~\ref{load}, $\lambda\le \lambda'$.
	
	Since $\Sc_{h-1}$ is an LPT-solution, then $\Sc'_h$ is an LPT-solution if jobs in $\J'_h$ are added using list-scheduling. By Observation~\ref{obs:LScharac} the following characterizes this fact: for all machines $i\in M$, the number of jobs assigned in $\Sc'_h$ to $i$ is: (i) $\lceil (\lambda'-\ell_i(\Sc'_{h-1}))_+/q_h\rceil$ if $(\lambda'-\ell_i(\Sc'_{h-1}))_+/q_h$ is not an integer, and either $(\lambda'-\ell_i(\Sc'_{h-1}))_+/q_h$ or $(\lambda'-\ell_i(\Sc'_{h-1}))_+/q_h+1$ if $(\lambda'-\ell_i(\Sc'_{h-1}))_+/q_h$ is an integer. Since $\lambda \le \lambda'$, and $\Sc_h$ is an LPT-solution, then the number of jobs assigned in Step~\ref{st:bigMe} is never more than $\lceil (\lambda'-\ell_i(\Sc'_{h-1}))_+/q_h\rceil$ if $(\lambda'-\ell_i(\Sc'_{h-1}))_+/q_h$ is not an integer, and never more than $(\lambda'-\ell_i(\Sc'_{h-1}))_+/q_h+1$ if $(\lambda'-\ell_i(\Sc'_{h-1}))_+/q_h$ is integer. This implies that after adding jobs in Step~\ref{st:bigMne} we obtain an LPT-solution. \end{proof}

Now we can argue about the approximation guarantee of the obtained solution.

\begin{lemma}\label{lm:4/3-competitive}
	When considering instances of Machine Covering such that $|\M|=m$, Algorithm~\ref{LPTonline} is $(\frac{4m-2}{3m-1}+O(\e))$-competitive. 
\end{lemma}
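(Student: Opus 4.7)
The strategy is to show that the output $\Sc'$ of Algorithm~\ref{LPTonline} is a $(k,k)$-relaxed version of some LPT-solution on the rounded instance $\J'$ for a constant $k\le 4$, and then combine this with the tight $\frac{4m-2}{3m-1}$-approximation bound for LPT due to Csirik et al.~\cite{CKW92} via Lemma~\ref{RobGree}. Structurally this mirrors the proof of Lemma~\ref{lm:1.7-competitive} for Algorithm~\ref{OLSonline}.

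First I will invoke Lemma~\ref{lm:LPT-solution} to conclude that $\Sc'_{|\tP|}$, the restriction of $\Sc'$ to big and huge jobs, is an LPT-solution on the rounded instance restricted to those jobs. Iterating Observation~\ref{obs:LScharac} over the small jobs in non-increasing order of size, this can be extended to a full LPT-solution $\Sc_\Alg$ on all of $\J'$ by appending list-scheduling passes (with an appropriate tie-breaking rule). Taking $k = 2^\ell/(\e\OPT')$, the first condition of Definition~\ref{k-rel} holds by construction, since the final while loop only moves small jobs and therefore the big/huge assignment of $\Sc'$ coincides with that of $\Sc_\Alg$. The second condition is exactly the exit criterion of the while loop: upon termination, every machine containing a small job satisfies $\ell_i(\Sc') \le \lmin(\Sc')+2^{\ell} = \lmin(\Sc')+k\e\OPT'$. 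Moreover $k \le 2\e\UB/(\e\OPT') \le 4$.

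Next I will establish termination of the while loop by a standard lexicographic argument: each iteration moves the smallest job of a strictly over-loaded machine to a least loaded machine, which strictly lex-increases the sorted load vector; since loads are bounded and only small jobs are involved, the procedure halts in polynomially many steps. Having established that $\Sc'$ is a $(k,k)$-relaxed version of $\Sc_\Alg$, Lemma~\ref{RobGree} applied with $\alpha=\frac{4m-2}{3m-1}$ yields that the minimum load of $\Sc'$ is at least $(\alpha+O(\e))^{-1}$ times the rounded optimum of $\J'$. Because $\tp_j \ge (1-\e)p_j$, the rounded optimum is at least $(1-\e)\OPT(\J',\M)$, which after absorbing the $(1-\e)^{-1}$ factor into $O(\e)$ gives the claimed competitiveness of $\frac{4m-2}{3m-1}+O(\e)$.

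The main obstacle I foresee is verifying the extension step cleanly: checking via Observation~\ref{obs:LScharac} that the machine-by-machine counts allowed by list-scheduling leave enough flexibility, at each small-job size class in turn, to place every remaining small job on top of $\Sc'_{|\tP|}$ so that $\Sc_\Alg$ is genuinely an LPT-solution on the full rounded instance. Once this is settled, the remaining ingredients (loop termination, the Csirik et al. bound, Lemma~\ref{RobGree}, and the rounding loss) plug together mechanically.
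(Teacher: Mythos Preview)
Your proposal is correct and follows essentially the same approach as the paper's proof: invoke Lemma~\ref{lm:LPT-solution}, set $k=2^{\ell}/(\e\OPT')\le 4$, verify the $(k,k)$-relaxed conditions via the while-loop exit criterion, and conclude with Lemma~\ref{RobGree} together with the Csirik et al.\ bound (the paper also absorbs the rounding loss implicitly, as in Section~\ref{sec:Rounding}). The extension step you flag as the main obstacle is in fact immediate and needs no size-class-by-size-class verification via Observation~\ref{obs:LScharac}: since LPT processes jobs in non-increasing order of size, any LPT-solution on the big/huge jobs is automatically the restriction of some full LPT-solution on $\J'$---just continue list-scheduling the small jobs starting from $\Sc'_{|\tP|}$---and Definition~\ref{k-rel} only constrains the assignment of jobs of size at least $k\e\OPT'$, so the particular placement of small jobs in $\Sc_{\Alg}$ is irrelevant.
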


\begin{proof}
	We will use the previous lemma to show that $\Sc'$ is a $(k,k)$-relaxed version of $\Sc_{\LPT}(\J',\M)$ for some $k\le 4$, which is enough to conclude the claim due to Lemma~\ref{RobGree} and the result from Csirik et al.~\cite[Theorem~$3.5$]{CKW92}. Indeed, let $k = 2^{\ell}/(\e\OPT')$. Then, by the previous lemma all jobs larger than $k\e\OPT'=2^{\ell}$ are assigned with LPT. Also, the while loop at Step~\ref{st:reassignSmall} ensures that the output of the algorithm is a $(k,k)$-relaxed version of $\Sc_{\LPT}(\J',\M)$. The lemma follows since $k\le 4$ as shown in Lemma~\ref{lm:1.7-competitive}.
\end{proof}

\paragraph*{\textbf{Bounding the migration factor.}} To analyze the migration factor of the algorithm, we will show that $|\Mne_{|\tP|}|$ is upper bounded by a constant. This will be done inductively by first bounding $|\Mne_{h}\setminus \Mne_{h-1}|$ for each $h$ and then using the fact that $|\tP|\in O((1/\e)\log(1/\e))$. A description of the overall idea can be found in Figure~\ref{fig:migration}.

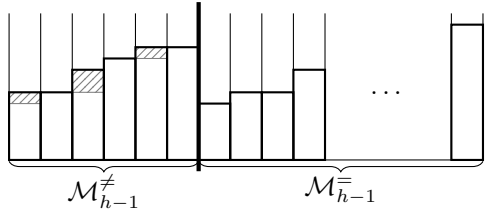
\begin{figure}
	\centering
\begin{tikzpicture}[yscale=0.3, xscale=0.7]


\draw[fill=gray, color=gray, pattern=north east lines, pattern color=gray] (0,2.5) rectangle (0.6,3);
\draw[fill=gray, color=gray, pattern=north east lines, pattern color=gray] (1.2,3) rectangle (1.8,4);
\draw[fill=gray, color=gray, pattern=north east lines, pattern color=gray] (2.4,4.5) rectangle (3,5);


\draw[thick] (0,0) rectangle (0.6,3);
\draw[thick] (0.6,0) rectangle (1.2,3);
\draw[thick] (1.2,0) rectangle (1.8,4);
\draw[thick] (1.8,0) rectangle (2.4,4.5);
\draw[thick] (2.4,0) rectangle (3,5);
\draw[thick] (3,0) rectangle (3.6,5);
\draw[thick] (3.6,0) rectangle (4.2,2.5);
\draw[thick] (4.2,0) rectangle (4.8,3);
\draw[thick] (4.8,0) rectangle (5.4,3);
\draw[thick] (5.4,0) rectangle (6,4);
\draw[thick] (8.4,0) rectangle (9,6);


\draw (0,6.5) -- (0,0) -- (9,0) -- (9,6.5);
\draw (0.6,0) -- (0.6,6.5);
\draw (1.2,0) -- (1.2,6.5);
\draw (1.8,0) -- (1.8,6.5);
\draw (2.4,0) -- (2.4,6.5);
\draw (3,0) -- (3,6.5);
\draw (3.6,0) -- (3.6,6.5);
\draw (4.2,0) -- (4.2,6.5);
\draw (4.8,0) -- (4.8,6.5);
\draw (5.4,0) -- (5.4,6.5);
\draw (6,0) -- (6,6.5);
\draw (8.4,0) -- (8.4,6.5);


\draw[ultra thick] (3.6,-0.5) -- (3.6,7);
\draw [decorate,decoration={brace,amplitude=4pt,raise=0.5pt}]
(3.6,0) -- (0,0);
\draw (1.8,-0.2) node[anchor=north] {\small $\Mne_{h-1}$};
\draw [decorate,decoration={brace,amplitude=4pt,raise=0.5pt}]
(9,0) -- (3.6,0);
\draw (6.3,-0.3) node[anchor=north] {\small $\Me_{h-1}$};
\draw (7.2,3) node {\small $\dots$};

\end{tikzpicture}
	\caption{\footnotesize{Depiction of a possible situation at the end of iteration $h-1$. The machines on the right side correspond to machines in $\Me_{h-1}$ and therefore process the same jobs in $\Sc_{h-1}$ and $\Sc'_{h-1}$. 
			Assume, possibly erroneously and just as a thought experiment, that the machines in $\Mne_{h-1}$ can be sorted non-decreasingly by load for $\Sc_{h-1}$ and $\Sc'_{h-1}$ simultaneously. The two solutions are depicted simultaneously in the picture, where the difference of loads on machines in $\Mne_{h-1}$ corresponds to the dashed area. The total dashed load equals to $\tp_{j^*}$, which is spread in only constantly many machines by Lemma~\ref{PocaSub}. When assigning jobs in $\J_h$, the algorithm first assigns a number of jobs to each machine in $\Me_{h-1}$ (Step~\ref{st:bigMe}), and then fills machines in $\Mne_{h-1}$. Notice that while the algorithm does not assign another job to a machine in $\Me_{h-1}$, no new machine will enter $\Mne_{h}\setminus \Mne_{h-1}$. On the other hand, the number of such jobs can be bounded by a number proportional to $\tp_{j^*}$ (and $1/\e$), which then also bounds the number of machines in $\Mne_{h}\setminus \Mne_{h-1}$. In reality, however, it is not true that the machines in $\Mne_{h-1}$ can be sorted non-decreasingly on the loads for $\Sc_{h-1}$ and $\Sc'_{h-1}$ simultaneously. This provokes a number of technical difficulties that we avoid by using a different permutation of machines for each solution and invoking Lemma~\ref{load}.}}\label{fig:migration}
\end{figure}

Let us consider huge jobs w.r.t $\UB$ (i.e. jobs in $\J'_0$). Notice that all these jobs are larger than $\OPT'\ge \OPT$, and hence in $\Sc'_{0}$ each one is assigned alone to one machine. The same situation happens in solution $\Sc$ restricted to jobs in $\J_0$. Thus, none of these jobs are migrated. Hence, we can assume w.l.o.g. for the sake of the analysis of the migration that all jobs are big or small w.r.t $\UB$ (including $j^*$). Additionally, we can assume that $j^*$ is not small, since otherwise there is no migration.

Let $\Je_h$ be the set of jobs assigned by Step~\ref{st:bigMne} to machines in $\Me_{h-1}$. Notice that the jobs in $\Je_h$ correspond to the jobs in $\J'_{h}$ that $\Sc'$ assigns to a machine in $\Me_{h-1}$ but $\Sc$ processes in $\Mne_{h-1}$. Our strategy will be to bound the cardinality of set $\Je_h$ and then use this to upper bound $|\Mne_h \setminus \Mne_{h-1}|$. First we prove two auxiliary lemmas that help to upper bound $|\Je_h|$.

\begin{lemma}
	\label{lm:LowerBoundAssignmentSingleMachine}
	Assume that $\Je_h\neq \emptyset$. For each machine $i\in \Mne_{h-1}$, if $\lambda-\ell_i(\Sc_{h-1}')\ge 0$ solution $\Sc'_{h}$ assigns to $i$ at least $\left\lfloor\tfrac{(\lambda-\ell_i(\Sc_{h-1}'))_+}{q_h}\right\rfloor +1 $ many jobs from $\J_h$.
\end{lemma}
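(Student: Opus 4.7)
The proof will combine a careful look at the tie-breaking rule in Step~\ref{st:bigMne} with the LPT characterization of $\Sc_h$ from Lemma~\ref{lm:LPT-solution}. Because $\Je_h\neq\emptyset$, there is some $i^*\in\Me_{h-1}$ that receives at least one job during Step~\ref{st:bigMne}; let $T$ denote the instant in that list-scheduling procedure just before $i^*$ receives its first such job. Since Step~\ref{st:bigMne} breaks ties in favor of $\Mne_{h-1}$, the only way a job can land on $i^*$ at time $T$ is for $i^*$ to be \emph{strictly} the least loaded machine at that moment; otherwise list-scheduling would have placed the job on some machine in $\Mne_{h-1}$.

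Next I would pin down the load of $i^*$ at time $T$. Because $i^*\in\Me_{h-1}$, the input schedules agree on $i^*$ (so $\ell_{i^*}(\Sc_{h-1})=\ell_{i^*}(\Sc_{h-1}')$), and Step~\ref{st:bigMe} then adds on $i^*$ exactly the jobs of $\J_h\cap\Sc^{-1}(i^*)$, bringing the load up to $\ell_{i^*}(\Sc_h)$. By the choice of $T$, no job has yet been inserted on $i^*$ during Step~\ref{st:bigMne}, so its load at time $T$ is still $\ell_{i^*}(\Sc_h)$. Lemma~\ref{lm:LPT-solution} guarantees that $\Sc_h$ is an LPT-solution for $(\J,\M)$ restricted to jobs of size at least $q_h$, and hence $\ell_{i^*}(\Sc_h)\geq\lambda$.

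Now fix any $i\in\Mne_{h-1}$ satisfying $\lambda-\ell_i(\Sc_{h-1}')\geq 0$. Machines in $\Mne_{h-1}$ receive no jobs in Step~\ref{st:bigMe}, so at time $T$ the load of $i$ is exactly $\ell_i(\Sc_{h-1}')+n_iq_h$, where $n_i$ is the number of size-$q_h$ jobs placed on $i$ so far by Step~\ref{st:bigMne}. The two previous paragraphs force $\ell_i(\Sc_{h-1}')+n_iq_h>\ell_{i^*}(\Sc_h)\geq\lambda$, and since $n_i$ is a nonnegative integer this rearranges to $n_i\geq\lfloor(\lambda-\ell_i(\Sc_{h-1}'))/q_h\rfloor+1$. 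The final count of size-$q_h$ jobs assigned to $i$ in $\Sc'_h$ can only be at least $n_i$ (no jobs are ever removed by list-scheduling), which gives the claimed bound because the hypothesis $\lambda-\ell_i(\Sc_{h-1}')\geq 0$ makes the subscript $+$ vacuous. The one delicate step is the careful choice of $T$ as the instant just before $i^*$'s first insertion in Step~\ref{st:bigMne}: this synchronization is what allows the strict tie-breaking rule (giving the strict inequality $\ell_i(T)>\ell_{i^*}(T)$ for all $i\in\Mne_{h-1}$) to be combined with the fact that $i^*$ still has its pristine LPT-load $\ell_{i^*}(\Sc_h)\geq\lambda$, and so is what drives the whole estimate.
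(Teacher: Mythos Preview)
Your argument is correct and takes a genuinely different route from the paper's proof. The paper works globally via Observation~\ref{obs:LScharac}: it introduces the target value $\lambda'=\ell_{\min}(\Sc'_{\text{LPT},h})$ and splits into the cases $\lambda'=\lambda$ and $\lambda'>\lambda$, invoking the tie-breaking rule only in the first case to rule out the possibility that some $i\in\Mne_{h-1}$ receives exactly $\tfrac{(\lambda'-\ell_i(\Sc'_{h-1}))_+}{q_h}$ jobs. You instead take a local ``snapshot'' just before the first insertion on a witness machine $i^*\in\Me_{h-1}$, observe that by tie-breaking every $i\in\Mne_{h-1}$ must then already have load strictly above $\ell_{i^*}(\Sc_h)\ge\lambda$, and read off the bound directly. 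Your approach avoids the case split and never needs $\lambda'$, making the dependence on the hypothesis $\Je_h\neq\emptyset$ and on the tie-breaking rule completely transparent; the paper's approach, on the other hand, fits more naturally with the list-scheduling characterization it has already set up.

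Two minor remarks. First, when you say $i^*$ must be ``strictly the least loaded machine'', this is slightly stronger than what the tie-breaking actually gives: $i^*$ may be tied with other machines in $\Me_{h-1}$. What you really need (and what the rule does guarantee) is only that every machine in $\Mne_{h-1}$ has strictly larger load than $i^*$ at time $T$; your later use is consistent with this. Second, you invoke Lemma~\ref{lm:LPT-solution} to conclude that $\Sc_h$ is an LPT-solution; the lemma as stated speaks about $\Sc'_{|\tP|}$, but since $\Sc$ is itself the output of a previous run (and any restriction of an LPT-solution to jobs above a threshold is again an LPT-solution, as noted in the proof of Lemma~\ref{lm:LPT-solution}), this is justified.
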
   
\begin{proof} 	 
	We consider two cases. If $\lambda'=\lambda$, then the number of jobs in $\J_h$ assigned to machine $i$ is at least $\left\lfloor\tfrac{(\lambda'-\ell_i(\Sc_{h-1}'))_+}{q_h}\right\rfloor +1$. Indeed, if $\tfrac{(\lambda'-\ell_i(\Sc_{h-1}'))_+}{q_h}$ is fractional then the number of jobs must be $\left\lceil\tfrac{(\lambda'-\ell_i(\Sc_{h-1}'))_+}{q_h}\right\rceil = \left\lfloor\tfrac{(\lambda'-\ell_i(\Sc_{h-1}'))_+}{q_h}\right\rfloor +1$. If, on the other hand, $\tfrac{(\lambda'-\ell_i(\Sc_{h-1}'))_+}{q_h}$ is integral, then the algorithm might assign to $i\in \Mne_{h-1}$ only $\tfrac{(\lambda'-\ell_i(\Sc_{h-1}'))_+}{q_h}=\left\lfloor\tfrac{(\lambda-\ell_i(\Sc_{h-1}'))_+}{q_h}\right\rfloor$ many jobs. However, if this is the case the tie-breaking rule in Step~\ref{st:bigMne} implies that $\Je_h= \emptyset$, which contradicts our hypothesis. Then the number of assigned jobs is exactly $\left\lfloor\tfrac{(\lambda'-\ell_i(\Sc_{h-1}'))_+}{q_h}\right\rfloor +1$, and thus the claim holds.
	
	If $\lambda'>\lambda$, then $\lambda'>\ell_i(\Sc_{h-1}') $ and the number of jobs in $\J_h$ assigned to machine $i$ is at least $\left\lceil\tfrac{(\lambda'-\ell_i(\Sc_{h-1}'))_+}{q_h}\right\rceil = \left\lceil\tfrac{\lambda'-\ell_i(\Sc_{h-1}')}{q_h}\right\rceil\ge \left\lfloor\tfrac{(\lambda-\ell_i(\Sc_{h-1}'))}{q_h}\right\rfloor +1=\left\lfloor\tfrac{(\lambda-\ell_i(\Sc_{h-1}'))_+}{q_h}\right\rfloor +1$ and hence the claim holds. 	 
\end{proof}

\begin{lemma}\label{lm:LoadRemoveEntry} Let $x,y\in \mathbb{R}_+^n$, $x=(x_1,x_2,\dots,x_n)$, $y=(y_1,y_2,\dots,y_n)$ such that $x_1\le x_2 \le \dots\le x_n$, $y_1\le y_2\le \dots\le y_n$ and $x\le y$ coordinate-wise. Assume that $x_j=y_i$ for some indices $i,j$. If $x_{-i}$ denotes the $(n-1)$-dimensional vector obtained by removing the $i$-th entry of $x$, and $y_{-j}$ is the vector obtained by removing the $j$-th entry of $y$, then $x_{-i}\le y_{-j}$.\end{lemma}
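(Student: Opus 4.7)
\bigskip

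\noindent\textbf{Proof plan for Lemma~\ref{lm:LoadRemoveEntry}.} The plan is to first extract from the hypothesis a qualitative relation between the indices $i$ and $j$, and then carry out a case analysis on the coordinate $k$ in the resulting $(n-1)$-dimensional vectors.

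First I would observe that $i\le j$. Indeed, since $x\le y$ coordinate-wise we have $x_i\le y_i$; combined with $x_j=y_i$ this gives $x_i\le x_j$, which by the monotonicity $x_1\le\cdots\le x_n$ yields $i\le j$. (Equivalently, $y_i=x_j\le y_j$ and the monotonicity of $y$ give the same conclusion.) This is the only consequence of the hypothesis $x_j=y_i$ that will be used, but it is crucial.

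Next I would compute the $k$-th coordinate of the two truncated vectors explicitly: $(x_{-i})_k=x_k$ if $k<i$ and $(x_{-i})_k=x_{k+1}$ if $k\ge i$, and similarly $(y_{-j})_k=y_k$ if $k<j$ and $(y_{-j})_k=y_{k+1}$ if $k\ge j$. Using $i\le j$, the range $1\le k\le n-1$ splits naturally into three (possibly empty) regimes: $k<i$, $k\ge j$, and the intermediate band $i\le k<j$. In the first regime both removed indices lie above $k$, so $(x_{-i})_k=x_k\le y_k=(y_{-j})_k$ directly by hypothesis. In the second regime both removed indices lie at or below $k$, so $(x_{-i})_k=x_{k+1}\le y_{k+1}=(y_{-j})_k$, again by hypothesis.

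The only delicate case is the intermediate band $i\le k<j$, where $(x_{-i})_k=x_{k+1}$ while $(y_{-j})_k=y_k$, and we need $x_{k+1}\le y_k$. Here I would use the equation $x_j=y_i$ as a bridge: from $k+1\le j$ and the monotonicity of $x$ we get $x_{k+1}\le x_j=y_i$, and from $i\le k$ and the monotonicity of $y$ we get $y_i\le y_k$; chaining these gives $x_{k+1}\le y_k$ as required. I do not anticipate a serious obstacle beyond carefully bookkeeping the indices in this intermediate band; the rest is straightforward.
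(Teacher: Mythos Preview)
Your overall approach---compute the coordinates of $x_{-i}$ and $y_{-j}$ explicitly and verify the inequality in three index ranges---is more direct than the paper's, which reduces the statement to Lemma~\ref{loadprof_generalized} through a case analysis on the sign of $i-j$ and an auxiliary vector $z$. However, there is one genuine logical error.

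The step ``$x_i\le x_j$, which by the monotonicity $x_1\le\cdots\le x_n$ yields $i\le j$'' is invalid: monotonicity gives the implication $i\le j\Rightarrow x_i\le x_j$, not its converse. The alternative argument via $y$ has the same flaw. Concretely, take $x=y=(1,1)$ with $j=1$ and $i=2$: then $x_j=1=y_i$ but $i>j$. So the claim that $i\le j$ always holds is simply false.

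The repair is painless and fits your framework. If $i>j$, then $y_i=x_j\le y_j\le y_i$ forces $y_j=y_i$, and $x_j\le x_i\le y_i=x_j$ forces $x_i=x_j$; hence all of $x_j,\dots,x_i$ and $y_j,\dots,y_i$ equal the common value, so $x_{-i}=x_{-j}$ and $y_{-j}=y_{-i}$, and you may assume $i\le j$ without loss of generality. Even more simply, just run your coordinate computation in the case $i>j$ directly: the intermediate band becomes $j\le k<i$, where $(x_{-i})_k=x_k\le y_k\le y_{k+1}=(y_{-j})_k$, which does not even require the hypothesis $x_j=y_i$. Once $i\le j$ is secured, the remainder of your argument is correct---though note that you do invoke the equation $x_j=y_i$ again in the bridge step for the band $i\le k<j$, so it is not quite accurate to say that $i\le j$ is ``the only consequence'' of that hypothesis you use.
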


\begin{proof} Notice first that if $i=j$ then the result is a direct consequence of Lemma~\ref{loadprof_generalized}: by taking $\alpha=\beta=-x_i$ and coordinate $i$, we get new vectors $\tilde{x}$ and $\tilde{y}$ satisfying $\tilde{x} \le \tilde{y}$ and $\tilde{x}_1 = \tilde{y}_1 = 0$, and hence we can conclude that $x_{-i} \le y_{-j}$ because $x_{-i}$ (resp $y_{-j}$) corresponds to the last $n-1$ coordinates of $\tilde{x}$ (resp. $\tilde{y}$).
	
	We now distinguish two cases: if $i<j$, we have that $y_j = x_i \le x_j \le y_j$, hence $x_k = y_j$ for every $k=i,i+1,\dots,j$. This implies that $x_{-i} = x_{-(i+1)} = \dots = x_{-j}$, and then we can conclude that $x_{-i} \le y_{-j}$ by applying the previous observation for $x_{-j}$ and $y_{-j}$. On the other hand, if $j<i$, we define vector $z$ equal to $y$ but replacing coordinates $j, j+1, \dots, i$ by $y_j$. It is not difficult to see that $x \le z \le y$ coordinate-wise, and also $z_{-j} = z_{-(j+1)} = \dots = z_{-i}$. If we apply the first observation for $x$ and $z$ using coordinate $i$ we have that $x_{-i}\le z_{-i}$, and applying it to $z$ and $y$ using coordinate $j$ we get that $z_{-j} \le y_{-j}$. Merging both inequalities and using the fact that $z_{-i} = z_{-j}$, we conclude that $x_{-i} \le y_{-j}$. \end{proof}

\begin{lemma}\label{lm:Jeq}
	It holds that $|\Je_h|\in O( \frac{\tilde{p}_{j^*}}{\e2^{\ell}})$.
\end{lemma}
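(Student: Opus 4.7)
Split into cases according to the index $h^*$ with $q_{h^*}=\tilde{p}_{j^*}$. For $h<h^*$, the assignments $\Sc_{h-1}$ and $\Sc'_{h-1}$ coincide because $j^*$ has not yet been processed; hence $\Mne_{h-1}=\emptyset$ and every job in $\J'_h=\J_h$ is placed by Step~\ref{st:bigMe}, giving $\Je_h=\emptyset$. For $h=h^*$, only the single job $j^*$ is list-scheduled in Step~\ref{st:bigMne}, so $|\Je_{h^*}|\le 1$, well within the claimed bound since $\tilde{p}_{j^*}\ge 2^{\ell}$.

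For $h>h^*$, I would use the identity
\[
|\Je_h| \;=\; \sum_{i\in\Mne_{h-1}}\bigl(N_i - N'_i\bigr),
\]
where $N_i$ and $N'_i$ denote the number of size-$q_h$ jobs placed on $i$ by $\Sc_h$ and $\Sc'_h$ respectively (the identity follows by counting $\J'_h$-jobs on $\Me_{h-1}$ versus $\Mne_{h-1}$, noting that Step~\ref{st:bigMe} copies exactly $N_i$ jobs to each $i\in\Me_{h-1}$ and using $|\J'_h|=|\J_h|$ for $h>h^*$). By Observation~\ref{obs:LScharac}, $N_i$ and $N'_i$ depend only on the loads $\ell_i(\Sc_{h-1})$, $\ell_i(\Sc'_{h-1})$ and the targets $\lambda$, $\lambda'$: specifically $N_i\le (\lambda-\ell_i(\Sc_{h-1}))_+/q_h + 1$. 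Assuming $\Je_h\neq\emptyset$ (otherwise the bound is trivial), Lemma~\ref{lm:LowerBoundAssignmentSingleMachine} yields $N'_i\ge \lfloor(\lambda-\ell_i(\Sc'_{h-1}))/q_h\rfloor+1$ for every machine in $\Mne_{h-1}$ with $\ell_i(\Sc'_{h-1})\le\lambda$.

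Because the individual loads of machines in $\Mne_{h-1}$ are not pointwise comparable between $\Sc_{h-1}$ and $\Sc'_{h-1}$, I would then apply Lemma~\ref{lm:LoadRemoveEntry} iteratively to the comparison $\load(\Sc_{h-1})\le\load(\Sc'_{h-1})$ (valid by Lemma~\ref{load}), removing the matching entries contributed by $\Me_{h-1}$-machines. This yields sorted restricted profiles $L_{h-1}\le L'_{h-1}$ on $\Mne_{h-1}$ satisfying $\|L'_{h-1}-L_{h-1}\|_1 = \tilde{p}_{j^*}$, and by Lemma~\ref{PocaSub} (applicable to the restricted LPT schedules since these only involve big and huge jobs), differing in at most $\tilde{p}_{j^*}/(\e 2^{\ell})$ coordinates. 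Rewriting $\sum_{i\in\Mne_{h-1}}(N_i-N'_i)$ as a sum indexed by the sorted profile yields two contributions: a ``continuous'' part bounded via the $1$-Lipschitz property of $v\mapsto(\lambda-v)_+$ by $\|L'_{h-1}-L_{h-1}\|_1/q_h\le \tilde{p}_{j^*}/2^{\ell}$; and a ``discrete'' part, coming from the integer rounding in Observation~\ref{obs:LScharac} and controlled by $|\{k:L_{h-1,k}\le\lambda\}|-|\{k:L'_{h-1,k}\le\lambda\}|$, which is at most the number of differing coordinates $\tilde{p}_{j^*}/(\e 2^{\ell})$. Together these estimates give $|\Je_h|\in O(\tilde{p}_{j^*}/(\e 2^{\ell}))$.

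\textbf{Main obstacle.} The essential difficulty, highlighted in the caption of Figure~\ref{fig:migration}, is that the machines in $\Mne_{h-1}$ cannot in general be sorted so that $\ell_i(\Sc_{h-1})$ and $\ell_i(\Sc'_{h-1})$ are simultaneously nondecreasing in $i$: the relative order of individual machines may swap even though the sorted load profiles remain coordinate-wise comparable. Direct per-machine comparisons therefore fail, and the bound must be extracted entirely from the sorted restricted profiles obtained via Lemma~\ref{lm:LoadRemoveEntry}, with a careful separation of the Lipschitz (continuous) contribution from the discrete rounding error.
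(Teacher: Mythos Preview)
Your proposal is correct and follows essentially the same approach as the paper: both pass to sorted restricted load profiles on $\Mne_{h-1}$ via Lemmas~\ref{load} and~\ref{lm:LoadRemoveEntry}, lower-bound the $\Sc'$-side with Lemma~\ref{lm:LowerBoundAssignmentSingleMachine} and upper-bound the $\Sc$-side with Observation~\ref{obs:LScharac}, and then separate a Lipschitz contribution $\tilde p_{j^*}/q_h$ from a discrete contribution controlled by the number of differing coordinates (the paper's set $T_{\neq}$). Your preliminary case split on $h$ versus $h^*$ is a harmless convenience that the paper absorbs into the single ``$+1$'' term $|\J'_h\setminus\J_h|\le 1$.
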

\begin{proof}
	Assume, w.l.o.g., that $\Mne_{h-1}=\{1,\ldots,m'\}$ and that $\ell_1(\Sc'_{h-1})\le \ell_2(\Sc'_{h-1}) \le \ldots \le \ell_{m'}(\Sc'_{h-1})$. Consider also a permutation $\sigma:\Mne_{h-1}\rightarrow \Mne_{h-1}$ such that $\ell_{\sigma(1)}(\Sc_{h-1})\le \ell_{\sigma(2)}(\Sc_{h-1}) \le \ldots \le \ell_{\sigma(m')}(\Sc_{h-1})$. By Lemma~\ref{load} the sorted vector of loads (over all machines) of solution $\Sc_{h-1}$ is upper bounded by the sorted vector of loads of $\Sc'_{h-1}$. Applying Lemma~\ref{lm:LoadRemoveEntry} iteratively to remove machines in $\Me_{h-1}$ one by one (which have the same assignment in both solutions), it holds that $\ell_{\sigma(i)}(\Sc_{h-1}) \le \ell_{i}(\Sc'_{h-1})$ for all $i\in \Mne_{h-1}$. Let us consider sets \begin{eqnarray*} T_{-}&=&\{i\in \Mne_{h-1}\,:\, \ell_i(\Sc'_{h-1})\le\lambda\}, \text{ and }\\ T_{+}&=&\{i\in \Mne_{h-1}\,:\, \ell_{\sigma(i)}(\Sc_{h-1})\le \lambda \text{ and } \ell_i(\Sc'_{h-1})>\lambda\}.	 \end{eqnarray*}
	Lemma~\ref{lm:LowerBoundAssignmentSingleMachine} implies that the total number of jobs from $\J_h'$ assigned by $\Sc'_{h}$ to machines in $\Mne_{h-1}$ is at least
	\begin{eqnarray*}
		& & \sum_{i\in T_-} \left(\left\lfloor \tfrac{(\lambda - \ell_i(\Sc'_{h-1}))_+}{q_h} \right\rfloor+1\right) \\
		& = &\sum_{i\in T_-} \left(\left\lfloor \tfrac{(\lambda - \ell_{\sigma(i)
			}(\Sc_{h-1}))_+}{q_h} \right\rfloor+1\right) +\sum_{i\in T_-} \left\lfloor \tfrac{(\lambda - \ell_i(\Sc'_{h-1}))_+}{q_h} \right\rfloor -\left\lfloor \tfrac{(\lambda - \ell_{\sigma(i)}(\Sc_{h-1}))_+}{q_h} \right\rfloor\\
		& = &\sum_{i\in T_-\cup T_+} \left(\left\lfloor \tfrac{(\lambda - \ell_{\sigma(i)
			}(\Sc_{h-1}))_+}{q_h} \right\rfloor+1\right) - \sum_{i\in T_+} \left(\left\lfloor \tfrac{(\lambda - \ell_{\sigma(i)
			}(\Sc_{h-1}))_+}{q_h} \right\rfloor+1\right)\\
		& &+ \sum_{i\in T_-} \left\lfloor \tfrac{(\lambda - \ell_i(\Sc'_{h-1}))_+}{q_h} \right\rfloor -\left\lfloor \tfrac{(\lambda - \ell_{\sigma(i)}(\Sc_{h-1}))_+}{q_h} \right\rfloor.	 
	\end{eqnarray*}
	Notice that the set $T_-\cup T_+$ contains all indices $i\in\Mne_{h-1}$ such that $\ell_{\sigma(i)}(\Sc_{h-1})\le \lambda$. Hence, the first sum in the last expression upper bounds the number of jobs in $\J_{h}$ that solution $\Sc$ assigns to machines in $\Mne_{h-1}$. That way, since $|\J'_h\setminus \J_h | \le 1$, it holds that 
	
	\begin{eqnarray*}
		|\Je_h| &\le& 1 + \sum_{i\in T_+} \left(\left\lfloor \tfrac{(\lambda - \ell_{\sigma(i)
			}(\Sc_{h-1}))_+}{q_h} \right\rfloor+1\right)
		+ \sum_{i\in T_-} \left\lfloor \tfrac{(\lambda - \ell_{\sigma(i)}(\Sc_{h-1}))_+}{q_h} \right\rfloor-\left\lfloor \tfrac{(\lambda - \ell_i(\Sc'_{h-1}))_+}{q_h} \right\rfloor \\
		&\le& 1 + |T_+|+\sum_{i\in T_+} \left\lfloor \tfrac{(\lambda - \ell_{\sigma(i)
			}(\Sc_{h-1}))_+}{q_h} \right\rfloor
		+ \sum_{i\in T_-} \left\lfloor \tfrac{(\lambda - \ell_{\sigma(i)}(\Sc_{h-1}))_+}{q_h} \right\rfloor-\left\lfloor \tfrac{(\lambda - \ell_i(\Sc'_{h-1}))_+}{q_h} \right\rfloor \\
		&\le& 1 + |T_+|+\underbrace{\sum_{i\in T_+} \left\lfloor \tfrac{(\lambda - \ell_{i
				}(\Sc'_{h-1}))_+}{q_h} \right\rfloor}_{=0}
		+ \sum_{i\in T_-\cup T_+} \left\lfloor \tfrac{(\lambda - \ell_{\sigma(i)}(\Sc_{h-1}))_+}{q_h} \right\rfloor-\left\lfloor \tfrac{(\lambda - \ell_i(\Sc'_{h-1}))_+}{q_h} \right\rfloor. 
	\end{eqnarray*}
	
	Let us now consider $T_{\neq}=\{i\in \Mne_{h-1}: \ell_{\sigma(i)}(\Sc_{h-1})\neq \ell_{i}(\Sc'_{h-1})\}$. Thus, the last expression is at most 
	\begin{eqnarray*}
		|\Je_h| & \le& 1 + |T_+| + \sum_{i\in (T_-\cup T_+)\cap T_{\neq}} \left\lfloor \tfrac{(\lambda - \ell_{\sigma(i)}(\Sc_{h-1}))_+}{q_h} \right\rfloor-\left\lfloor \tfrac{(\lambda - \ell_i(\Sc'_{h-1}))_+}{q_h} \right\rfloor \\	 
		&\le& 1 + |T_+| + \sum_{i\in T_{\neq}} \left( \tfrac{(\lambda - \ell_{\sigma(i)}(\Sc_{h-1}))_+}{q_h} - \tfrac{(\lambda - \ell_i(\Sc'_{h-1}))_+}{q_h} +1 \right)\\
		&\le& 1 + |T_+| + |T_{\neq}| + \sum_{i\in T_{\neq}} \tfrac{\ell_i(\Sc'_{h-1})- \ell_{\sigma(i)}(\Sc_{h-1})}{q_h}\\
		&\le& 1 + 2|T_{\neq}| + \tfrac{\tilde{p}_{j^*}}{q_h}.
	\end{eqnarray*}	 
	Also, Lemma~\ref{PocaSub} can be applied and thus $|T_{\neq}|\le \frac{\tilde{p}_{j^*}}{\e2^{\ell}}$. 
	The lemma finally follows since $q_h\ge 2^{\ell}$ by definition.
\end{proof}

Notice that jobs in $\J_h^{=}$ are the only jobs assigned in a given iteration $h$ that can cause one new machine to have different assignments in $\Sc_{h}$ and $\Sc'_h$. Thus, $|\Mne_{h}\setminus \Mne_{h-1}|\le |\J_h^{=}|$ and the following lemma holds.

\begin{lemma}\label{lm:Mne_variation}
	For all $h\in\{1,\ldots,|\tP|\}$ it holds that $|\Mne_{h}\setminus \Mne_{h-1}|\in O( \frac{\tilde{p}_{j^*}}{\e2^{\ell}})$.
\end{lemma}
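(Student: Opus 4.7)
The plan is to combine the observation stated just before the lemma with Lemma~\ref{lm:Jeq}, making precise why $|\Mne_{h}\setminus \Mne_{h-1}|\le |\Je_h|$ and then invoking the bound on $|\Je_h|$.

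First I would fix any machine $i\in \Me_{h-1}$ that ends up in $\Mne_h$, and trace through what happens to $i$ during iteration $h$ of the outer for-loop. Since $i\in \Me_{h-1}$, the sets $\Sc_{h-1}^{-1}(i)$ and $\Sc_{h-1}'^{-1}(i)$ coincide. In Step~\ref{st:bigMe} the algorithm assigns exactly the jobs of $\J_h\cap \Sc^{-1}(i)$ to $i$ in $\Sc'$, which by construction matches the assignment of $\Sc$ restricted to machine $i$ and jobs of size $q_h$. Therefore, if no further jobs were placed on $i$, we would still have $\Sc_h^{-1}(i)=\Sc_h'^{-1}(i)$ and hence $i\in \Me_h$. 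Consequently, for $i$ to move to $\Mne_h$, it must receive at least one additional job from $\J'_h$ in the list-scheduling Step~\ref{st:bigMne}.

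Now I would recall the definition of $\Je_h$: it is precisely the set of jobs of $\J_h'$ that $\Sc'$ assigns in Step~\ref{st:bigMne} to a machine in $\Me_{h-1}$ (equivalently, jobs that $\Sc'$ places in $\Me_{h-1}$ but $\Sc$ places in $\Mne_{h-1}$). By the previous paragraph, every machine in $\Mne_h\setminus \Mne_{h-1}$ must receive at least one job from $\Je_h$. Since different machines necessarily receive distinct jobs, this yields
\[
|\Mne_{h}\setminus \Mne_{h-1}|\ \le\ |\Je_h|.
\]

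Finally I would apply Lemma~\ref{lm:Jeq}, which gives $|\Je_h|\in O(\tilde{p}_{j^*}/(\e 2^{\ell}))$, to conclude the claim. The real work has been done in Lemma~\ref{lm:Jeq}; the only potential pitfall here is to make sure that jobs assigned in Step~\ref{st:bigMe} never create new mismatches on machines already in $\Me_{h-1}$, which follows immediately from the fact that Step~\ref{st:bigMe} copies the $\Sc$-assignment verbatim for those machines.
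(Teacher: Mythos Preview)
Your proposal is correct and takes essentially the same approach as the paper: the paper's proof is the one-sentence observation preceding the lemma, namely that only jobs in $\Je_h$ can create new mismatches on machines of $\Me_{h-1}$, whence $|\Mne_{h}\setminus \Mne_{h-1}|\le |\Je_h|$, and then Lemma~\ref{lm:Jeq} is invoked. You have simply fleshed out that observation with the explicit verification that Step~\ref{st:bigMe} cannot by itself break equality on a machine of $\Me_{h-1}$.
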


Putting all the discussed ideas together, we prove the following result.

\begin{theorem}\label{thm:4/3+mig}
	When considering instances of Machine Covering such that $|\M|=m$, Algorithm Online LPT is a polynomial time $(\frac{4m-2}{3m-1}+O(\e))$-competitive algorithm with $O((1/\e^3)\log(1/\e))$ migration factor.
\end{theorem}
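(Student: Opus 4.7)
The plan is to combine the competitiveness guarantee from Lemma~\ref{lm:4/3-competitive} with a migration bound obtained by telescoping Lemma~\ref{lm:Mne_variation} over all phases. The $(\tfrac{4m-2}{3m-1}+O(\e))$-competitiveness is already in hand, and polynomial running time follows by inspection: each of the $O(|\tP|) = O((1/\e)\log(1/\e))$ phases assigns at most $n$ jobs using list-scheduling, and the rebalancing loop terminates because every iteration moves a small job off a distinct maximum-load machine (so the sequence of maximum loads in $\overline{\M}$ is strictly decreasing).

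The first genuine step is to bound $|\Mne_{|\tP|}|$. Since $\Mne_{-1}=\emptyset$, telescoping yields
\[
|\Mne_{|\tP|}| \;\le\; \sum_{h=0}^{|\tP|}|\Mne_{h}\setminus \Mne_{h-1}|.
\]
Lemma~\ref{lm:Mne_variation} bounds each summand by $O(\tilde p_{j^*}/(\e\,2^{\ell}))$, and Lemma~\ref{lm:rounding} gives $|\tP|\in O((1/\e)\log(1/\e))$, so
\[
|\Mne_{|\tP|}|\;\in\;O\!\Bigl(\tfrac{\tilde p_{j^*}}{\e^{2}\,2^{\ell}}\,\log(1/\e)\Bigr).
\]

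Next I would bound the total migrated load. For big jobs, machines in $\Me_{|\tP|}$ keep exactly their previous big-job assignment (by construction), and huge jobs are alone on their machine in both $\Sc$ and $\Sc'$ (so they never move), hence only big jobs sitting on $\Mne_{|\tP|}$-machines contribute. As observed in Section~\ref{sec:Rounding}, any machine with no huge job has load at most $2\UB$, so the total migrated big-job load is at most $2\UB\cdot|\Mne_{|\tP|}|$. Using $\UB\le 2^{\ell}/\e$ from the definition of $\ell$ and dividing by $\tilde p_{j^*}$ gives a migration-factor contribution of $O((1/\e^{3})\log(1/\e))$. For the small jobs, those on $\Me_{|\tP|}$-machines stay put by Step~8, so only small jobs originally on $\Mne_{|\tP|}$-machines (plus possibly $j^*$ itself, when small) are removed and reinserted via list-scheduling and rebalancing; their initial total load is again at most $2\UB\cdot|\Mne_{|\tP|}|$, which yields the same asymptotic bound.

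The step I expect to require most care is accounting for the rebalancing loop, because a small job could in principle be moved more than once. The key observation is that once a small job is moved in Step~\ref{st:reassignSmall} to a least loaded machine it will not be moved again in the same call (rebalancing only removes jobs from machines exceeding $\ell_{\min}(\Sc')+2^{\ell}$, which are never least loaded once touched), so the total moved load inside this phase is bounded by the small-job load carried by all initially overloaded machines — a quantity already majorised by $2\UB\cdot|\Mne_{|\tP|}|$ together with $p_{j^*}$. Combining the big- and small-job contributions and dividing by $\tilde p_{j^*}$ yields the claimed $O((1/\e^{3})\log(1/\e)) = \tilde O(1/\e^{3})$ migration factor, completing the proof.
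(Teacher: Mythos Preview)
Your overall approach mirrors the paper's: competitiveness from Lemma~\ref{lm:4/3-competitive}, telescoping Lemma~\ref{lm:Mne_variation} to bound $|\Mne_{|\tP|}|$, then charging big and small migrated load to $2\UB\cdot|\Mne_{|\tP|}|$ and using $\UB\le 2^{\ell}/\e$. Two steps in your sketch are looser than the paper's and deserve attention.

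First, your termination argument for the rebalancing loop is not quite right: the same machine can be the maximum-load machine in $\overline{\M}$ in consecutive iterations (it may carry several small jobs), and ties among loads mean the maximum need not be strictly decreasing. The paper instead observes that $\ell_{\min}(\Sc')$ is non\-decreasing during the loop, so a job moved to a least loaded machine is never moved again; this immediately gives polynomial time.

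Second, and more importantly, your bound on the load moved in the rebalancing loop asserts that the small-job load on ``initially overloaded'' machines is majorised by $2\UB\cdot|\Mne_{|\tP|}|$, but you do not justify why overloaded machines lie in $\Mne_{|\tP|}$. This is precisely the content of the paper's Claim: using that $\Sc$ was itself an output of Online~LPT, machines in $\overline{\M}\cap\Me_{|\tP|}$ differ in load by at most $2^{\ell}$ (property~(P1)), so if any job in Step~\ref{st:listSmall} lands on $\Me_{|\tP|}$ the loop is never entered, and otherwise every job reassigned in the loop (except possibly the last) is taken from a machine in $\Mne_{|\tP|}$. Without this argument the bound on the rebalancing migration does not follow.
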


\begin{proof} 	We first argue that the algorithm runs in polynomial time. Indeed, it suffices to show that the algorithm enters the while loop in Step~\ref{st:reassignSmall} a polynomial number of times. This follows easily as the quantity $\ell_{\min}(\Sc')$ is non-decreasing, and hence a job can be reassigned to a least loaded machine at most once. Notice that the competitive ratio of the algorithm follows from Lemma~\ref{lm:4/3-competitive}.
	
	Let us now bound the migration factor. We do this in two steps. First consider solution~$\Sc'$ before entering Step~\ref{st:reassignSmall}. We first bound the volume of jobs migrated between $\Sc$ and $\Sc'$, and then bound the total volume of jobs reassigned in the while loop in Step~\ref{st:reassignSmall}. 
	
	For the first bound, by the previous lemma and since $\Mne_{-1}=\emptyset$, it holds that $|\Mne_{|\tP|}|\le |\tP|\cdot O( \frac{\tilde{p}_{j^*}}{\e2^{\ell}})\le O((1/\e)\log(1/\e) \frac{\tilde{p}_{j^*}}{\e2^{\ell}})$. The load of jobs in $\Sc'_{|\tP|}$ that are migrated is upper bounded by $\sum_{i\in\Mne_{|\tP|}} \ell_i(\Sc_{|\tP|}')\le|\Mne_{|\tP|}|\max_{i\in \Mne_{|\tP|}}\ell_i(\Sc_{|\tP|}')$. On the other hand, since we are assuming (w.l.o.g) that there is no huge job, the total load of each machine is at most $2\UB$ as argued in Section~\ref{sec:Rounding}. We conclude that the big jobs migrated have a total load of at most $2\UB\cdot|\Mne_{|\tP|}| =\UB\cdot O((1/\e)\log(1/\e) \frac{\tilde{p}_{j^*}}{\e2^{\ell}}).$ Finally, notice that small jobs migrated (before entering Step~\ref{st:reassignSmall}) are the ones assigned to machines in $\Mne_{|\tP|}$ by $\Sc$. Since $\Sc$ is the output of Online LPT, then the total load of these jobs is at most $(\ell_{\min}(\Sc')+2^{\ell})\cdot \Mne_{|\tP|}\le 2\UB\cdot \Mne_{|\tP|}\le\UB\cdot O((1/\e)\log(1/\e) \frac{\tilde{p}_{j^*}}{\e2^{\ell}})$. We conclude that the total load migrated is at most $\UB\cdot O((1/\e)\log(1/\e) \frac{\tilde{p}_{j^*}}{\e2^{\ell}})$.
	
	It remains to bound the volume migrated in the while loop of Step~\ref{st:reassignSmall}. For this we will show the following claim.
	
	\noindent\textbf{Claim:} Let $\Sc'$ be the solution constructed before  entering Step~\ref{st:reassignSmall}. Then all reassigned jobs in the while loop, except possibly the one reassigned last, are assigned to a machine in $\Mne_{|\tP|}$ by $\Sc'$.
	
	Assume the claim holds and let us consider the solution $\Sc'$ as output by the algorithm. Then the total volume of reassigned jobs is bounded by $|\Mne_{|\tP|}|\max_{i\in \Mne_{|\tP|}}\ell_i(\Sc')$. Since by construction the load of a machine that process a job smaller than $2^{\ell}$ is at most $\ell_{\min}(\Sc')+2^{\ell}\le 2\UB$, the total volume migrated will be at most $\UB\cdot O((1/\e)\log(1/\e) \frac{\tilde{p}_{j^*}}{\e2^{\ell}})$ as before. Hence, the migration factor is upper bounded by 
	\[
	O\left(\frac{\UB}{\tilde{p}_{j^*}}\cdot(1/\e)\log(1/\e) \frac{\tilde{p}_{j^*}}{\e2^{\ell}}\right) = O((1/\e^3)\log(1/\e)).	
	\]
	
	To show the claim, consider $\Sc'$ before entering Step~\ref{st:listSmall} together with the corresponding set $\overline{\M}$ of machines that process some small job. Since $\Sc$ is the output of Online LPT, then the difference between the maximum an minimum loads of machines in $\overline{\M}\cap\Me_{|\tP|}$ for solution $\Sc'$ is at most $2^{\ell}$. We call this property (P1). Also, notice that $\overline{\M}\cap \Mne_{|\tP|} =\emptyset$, and hence, the maximum load difference of two machines in this set is at most $2^{\ell}$, vacuously. We refer to this property as (P2). 
	Notice that (P1) and (P2) 
	hold iteratively throughout the later steps of the algorithm. Additionally, if some job is assigned to a machine $\Me_{|\tP|}$ in Step~\ref{st:listSmall}, the algorithm does not enter the while loop and we are done. Otherwise, the minimum load is achieved at $\Mne_{|\tP|}$. Hence, if there is a job migrated from a machine in $\Me_{|\tP|}$ to $\Mne_{|\tP|}$ then the algorithm finishes. The claim follows. \end{proof}

\subsection{A note on geometric v/s arithmetic rounding}\label{sec:geom_round}

One of the main reasons to use our rounding procedure to multiples of $\e2^\ell$ instead of the geometric rounding (i.e., down to the nearest power of $(1+\e)$) is because the same arguments used in this work cannot be applied to geometric rounded instances. It is crucial in the analysis that the number of possible loads is $\text{poly}\left(1/\e\right)$, while for geometric rounded instances that is not true as the following lemmas show.

\begin{lemma}\label{DifLoa} Let $\e \in \mathbb{Q}_+, \e < 1$. Given a machine covering instance $(\J,\M)$, let $\tilde{\J}$ be the set of jobs obtained by rounding geometrically jobs with processing time $p_j \in [\e\OPT,\OPT]$. If $C_1, C_2 \subseteq \tilde{J}$ are two different multi-sets of jobs with processing times at least $\e \OPT$ such that $\sum_{j\in C_i}{p_j} \in [\e\OPT,\OPT]$, $i=1,2$, then $\sum_{j\in C_1}{p_j} \neq \sum_{j\in C_2}{p_j}$. \end{lemma}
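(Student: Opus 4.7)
My plan is to pass from the sum equation to a polynomial identity over $\mathbb{Q}$ and exploit that $1+\e$ is rational to constrain coefficient differences via Gauss's lemma. Write $\e = p/q$ in lowest terms with $\gcd(p,q)=1$ and $1\le p<q$; then $r := 1+\e = (p+q)/q$ has minimal polynomial $qx-(p+q)$ over $\mathbb{Q}$, which is \emph{primitive} in $\mathbb{Z}[x]$. After geometric rounding, the possible job sizes form a set $V = \{v_1 < v_2 < \cdots < v_t\}$ with $v_j = r^{k_0+j-1}$ for some integer $k_0$, and the hypothesis that the jobs in $C_i$ have processing times at least $\e\OPT$ guarantees $v_j \ge \e\OPT$ for every relevant~$j$.

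Next, suppose for contradiction that two distinct multi-sets $C_1 \neq C_2$ have the same sum. Let $n_j, n_j'$ be the multiplicities of $v_j$ in $C_1, C_2$, and set $c_j := n_j - n_j' \in \mathbb{Z}$; the equality of sums yields $\sum_j c_j v_j = 0$. Dividing by $v_1 = r^{k_0}$, the polynomial $P(x) := \sum_{j=1}^{t} c_j \, x^{j-1} \in \mathbb{Z}[x]$ satisfies $P(r)=0$. Because $qx-(p+q)$ is primitive and equal to the minimal polynomial of $r$ over $\mathbb{Q}$, Gauss's lemma yields a factorization $P(x) = (qx - (p+q))\, Q(x)$ with $Q \in \mathbb{Z}[x]$.

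If $Q \not\equiv 0$, let $s$ be the smallest index with nonzero coefficient $q_s$ of $Q$; comparing coefficients of $x^s$ in $P(x)=(qx-(p+q))Q(x)$ immediately yields $c_{s+1} = -(p+q)\, q_s$, so in particular $(p+q) \mid c_{s+1}$ and $|c_{s+1}| \ge p+q$. On the other hand, since each multi-set has total sum at most $\OPT$ and every $v_j \ge \e\OPT$, the multiplicities satisfy $n_j, n_j' \le \OPT/v_j \le 1/\e = q/p$, and therefore $|c_j| \le q/p < p+q$ for every $j$ (the strict inequality $q/p < p+q$ holds because $p\ge 1$ implies $q < p^2 + pq$). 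This contradicts $|c_{s+1}| \ge p+q$, forcing $Q \equiv 0$, hence $c_j = 0$ for all $j$, hence $C_1 = C_2$, which contradicts the hypothesis.

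The main obstacle is identifying the right algebraic structure: recognising that $r = 1+\e$ has primitive minimal polynomial $qx-(p+q)$ and invoking Gauss's lemma is exactly what forces every nonzero coefficient of the difference polynomial to be divisible by~$p+q$; once this is in place, the combinatorial bound $|c_j| < p+q$ coming from the sum constraint $\le\OPT$ closes the contradiction in one line. A minor subtlety to handle is the slight slack from floor-rounding (rounded values can be marginally smaller than $\e\OPT$), but this only weakens the bound to $|c_j| < (p+q)/p \le p+q$, which is still strictly less than $p+q$ for $p \ge 1$, so the argument goes through unchanged.
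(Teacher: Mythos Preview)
Your proof is correct and follows essentially the same route as the paper's: form the integer-coefficient difference polynomial vanishing at $r=1+\e$, invoke Gauss's lemma to factor out the primitive linear polynomial $qx-(p+q)$, and obtain a divisibility constraint on the lowest nonzero coefficient that contradicts the cardinality bound $|c_j|\le 1/\e$. The only cosmetic difference is that the paper first passes to a minimal counterexample so that $C_1$ and $C_2$ have disjoint supports, whereas you work directly with $c_j=n_j-n_j'$; your version is slightly cleaner, and your explicit check that $q/p<p+q$ makes the final inequality more transparent than in the paper.
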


\begin{proof} Assume w.l.o.g. $\OPT=1$. Hence the possible processing times are $(1+\e)^i$, with $i$ such that $\e \le (1+\e)^i \le 1$ (a finite family of such possible values). Suppose by contradiction that there are two different non-empty multi-sets $C_1, C_2$ with the same total load, and assume they are minimal, i.e. that there is no other pair of non-empty multi-sets with the same total load but with smaller total load. For $k=1,2$, let $C_k(j)$ be the number of jobs with processing time $(1+\e)^j$ in set $C_k$. Since the pair $C_1$, $C_2$ is minimal, we have that $C_1(j) = 0$ or $C_2(j) = 0$ for every $j$. $C_1$ and $C_2$ having the same total load means that \begin{equation*} \displaystyle\sum_{j=-k}^{0}{C_1(j) (1+\e)^j} = \displaystyle\sum_{j=-k}^{0}{C_2(j) (1+\e)^j}, \end{equation*} where $k = -\lfloor \log_{1+\e}(\e)\rfloor$. This last equality can be rephrased as the existence of a non-zero polynomial $p(x) = b_0 + b_1 x + \dots + b_k x^k$, with $\lvert b_j \rvert \in \{C_1(j),C_2(j)\}$ (i.e. with integer coefficients), that has $(1+\e)$ as one of its roots. Since $\e= \frac{c}{d} > 0$ for some co-primes $c$ and $d$, then $1+\e= \frac{c+d}{d}$. Dividing $p(x)$ by $(dx-(c+d))$ leads to a polynomial $q(x)=a_0 + a_1 x + \dots + a_{k-1}x^{k-1}$ which, thanks to Gauss lemma, has integer coefficients too. Let $b_i$ be the first coefficient of $p$ different from zero. Then \begin{equation*} \lvert b_i \rvert = \lvert (c+d) a_i + d a_{i-1}\rvert= \lvert\left(c + \frac{c}{\e}\right)a_i + \frac{c}{\e}a_{i-1} \rvert > \frac{1}{\e}, \end{equation*} implying, since the size of each job is at least $\e$, that the total load of the multi-sets is at least $b_i \e>1$, which is a contradiction. \end{proof}

\begin{lemma}\label{ExpConf} Given $0<\e<1$, the number of different multi-sets of jobs with processing time at least $\e\OPT$ with total load at most $\OPT$ for a geometrically rounded instance is $2^{\Omega\left(\frac{1}{\e}\right)}$. \end{lemma}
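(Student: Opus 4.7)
The plan is to exhibit an explicit family of $2^{\Omega(1/\e)}$ multi-sets, each consisting of distinct geometrically-rounded sizes, whose total load is at most $\OPT$.

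Assume without loss of generality that $\OPT=1$. After geometric rounding the available large sizes are of the form $(1+\e)^j$ with $(1+\e)^j\in[\e,1]$. I would restrict attention to the \emph{smallest} tier of available sizes, namely those lying in $[\e,2\e]$. The number of such sizes is
\[
N \;:=\; |\{j\in\mathbb Z : \e\le (1+\e)^j\le 2\e\}| \;=\; \lfloor\log_{1+\e} 2\rfloor \;=\; \Bigl\lfloor\tfrac{\ln 2}{\ln(1+\e)}\Bigr\rfloor.
\]
Using $\ln(1+\e)\le \e$ one gets $N\ge \ln(2)/\e - 1 = \Omega(1/\e)$, and one also has the crude upper bound $N< 1/\e$ for every $\e$ small enough.

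Having fixed these $N$ candidate sizes, I would form multi-sets by taking any \emph{subset} $S$ of them (each size used $0$ or $1$ times). Because every chosen size is at most $2\e$, the total load of such a multi-set is at most $|S|\cdot 2\e$, which is $\le 1=\OPT$ whenever $|S|\le \lfloor 1/(2\e)\rfloor$. Since $N<1/\e$, we have $\lfloor N/2\rfloor\le\lfloor 1/(2\e)\rfloor$, so \emph{every} subset of size exactly $\lfloor N/2\rfloor$ is admissible. The number of such subsets is
\[
\binom{N}{\lfloor N/2\rfloor}\;\ge\;\frac{2^N}{N+1}\;=\;2^{\Omega(1/\e)},
\]
and distinct subsets of the $N$ sizes give distinct multi-sets, completing the bound.

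There is really no hard step here: the key observation is simply that the \emph{very bottom} of the geometric scale already contains $\Theta(1/\e)$ distinct rounded sizes, all small enough that an exponential number of them can be combined without exceeding $\OPT$. The only mild technicality is estimating $\log_{1+\e}(2)=\Theta(1/\e)$ via $\ln(1+\e)\le\e$ in order to conclude both that $N=\Omega(1/\e)$ and that $N\le 1/\e$ (so that $\lfloor N/2\rfloor$ fits under the cardinality budget $1/(2\e)$).
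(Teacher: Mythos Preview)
Your argument is correct and considerably simpler than the paper's. You isolate the $\Theta(1/\e)$ distinct rounded sizes in the bottom octave $[\e,2\e]$ and count subsets of cardinality $\lfloor N/2\rfloor$; the central binomial coefficient immediately gives $2^{\Omega(1/\e)}$. The only imprecision is the equality $N=\lfloor\log_{1+\e}2\rfloor$, which may be off by one, but the two-sided estimate $N=\Theta(1/\e)$ is all you use and is clearly valid.

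The paper proceeds quite differently: it first coarsens to powers of~$2$, then sets up a recurrence $C_0=1$, $C_{i+1}=1+C_i(C_i+1)/2$ for the number of multi-sets whose total load is \emph{exactly} $2^{\ell+i}$, obtained by either taking a single job of that size or merging two multi-sets of load $2^{\ell+i}$. Unrolling the doubly-exponential recurrence over $u-\ell\approx\log_2(1/\e)$ levels yields $2^{\Omega(1/\e)}$. Your approach avoids the recurrence and the reduction to powers of~$2$ altogether; it is more elementary and self-contained. The paper's route, on the other hand, produces multi-sets that all share the same total load (before invoking Lemma~\ref{DifLoa}), which is not something your construction gives, though for the stated lemma this extra structure is not needed.
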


\begin{proof} Let $u=\lfloor \log_2 \OPT \rfloor$ and $\ell = \lceil \log_2 (\e\OPT)\rceil$. We will give a lower bound on the number of different sets with total load $2^u$ when the jobs are rounded to powers of $2$, which implies that for $0<\e<1$ the same bound holds for processing times rounded to powers of $(1+\e)$. Let $\mathcal{C}_i$ be the number of different multi-sets with total load $2^{\ell + i}$. This number is characterized by the recurrence \begin{align*} C_0 & = 1 \\ C_{i+1} & = 1 + \frac{C_i(C_i+1)}{2}. \end{align*} This last term comes from the fact that a multi-set with total load $2^{\ell+i+1}$ can be constructed using only one job of size $2^{\ell+i+1}$, or merging two multi-sets of size $2^{\ell+i}$ (there are $\binom{C_i}{2} + C_i = \frac{C_i(C_i+1)}{2}$ such pairs). 
	
	Since recurrence $a_0=1$, $a_i = \frac{a_i^2}{2}$ satisfies $a_i \ge 2^{2^{i}}$, we conclude that $C_{u-\ell} \ge 2^{2^{\log\frac{1}{\e}}} \ge 2^{\Omega\left(\frac{1}{\e}\right)}$. \end{proof}

Because of these two lemmas, if we use geometrically rounded instances we cannot make sure that, when a new jobs arrives to the system, the load profile changes only by $\text{poly}\left(1/\e\right)$ coordinates since there are $2^{\Omega\left(1/\e\right)}$ number of possible different loads.

\subsection{An improved lower bound for the competitive ratio with constant migration factor}\label{sec:lowerbound}

In opposition to online makespan minimization with migration, where competitive ratio arbitrarily close to one can be achieved using a constant migration factor \cite{SSS09}, the online machine covering problem does not allow it. Until now, the best lower bound known for this ratio is $\frac{20}{19}$ \cite{SV10}, which we now improve to $\frac{17}{16}$ using similar ideas.

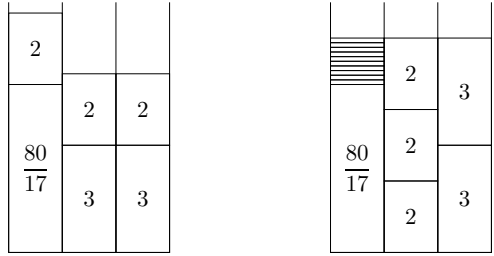
\begin{figure}
	\centering
	\resizebox{!}{95pt}{
\begin{tikzpicture}[scale=0.6]


\draw (2.5,7) -- (2.5,0) -- (-2,0) -- (-2,7);
\draw (1,7) -- (1,0);
\draw (-0.5,7) -- (-0.5,0);


\draw  (-0.5,0) rectangle (1,3);
\draw (0.25,1.5) node {$3$};
\draw  (1,0) rectangle (2.5,3);
\draw (1.75,1.5) node {$3$};
\draw  (-0.5,3) rectangle (1,5);
\draw (0.25,4) node {$2$};
\draw  (1,3) rectangle (2.5,5);
\draw (1.75,4) node {$2$};
\draw  (-2,4.7) rectangle (-0.5,6.7);
\draw (-1.25,5.7) node {$2$};
\draw  (-2,0) rectangle (-0.5,4.7);
\draw (-1.25,2.35) node {$\dfrac{80}{17}$};


\draw (11.5,7) -- (11.5,0) -- (7,0) -- (7,7);
\draw (10,7) -- (10,0);
\draw (8.5,7) -- (8.5,0);


\draw  (10,0) rectangle (11.5,3);
\draw (10.75,1.5) node {$3$};
\draw  (10,3) rectangle (11.5,6);
\draw (10.75,4.5) node {$3$};
\draw  (8.5,0) rectangle (10,2);
\draw (9.25,1) node {$2$};
\draw  (8.5,2) rectangle (10,4);
\draw (9.25,3) node {$2$};
\draw  (8.5,4) rectangle (10,6);
\draw (9.25,5) node {$2$};
\draw  (7,0) rectangle (8.5,4.7);
\draw (7.75,2.35) node {$\dfrac{80}{17}$};


\draw (7,4.7) rectangle (8.5,4.83);
\draw (7,4.83) rectangle (8.5,4.96);
\draw (7,4.96) rectangle (8.5,5.09);
\draw (7,5.09) rectangle (8.5,5.22);
\draw (7,5.22) rectangle (8.5,5.35);
\draw (7,5.35) rectangle (8.5,5.48);
\draw (7,5.48) rectangle (8.5,5.61);
\draw (7,5.61) rectangle (8.5,5.74);
\draw (7,5.74) rectangle (8.5,5.87);
\draw (7,5.87) rectangle (8.5,6);

\end{tikzpicture}}
	\caption{Left: Unique $(17/16)$-approximate solution before the arrival of small jobs. Right: Unique $(17/16)$-approximate solution after the arrival of small jobs.}
	\label{cotainf}
\end{figure} 

\begin{lemma}\label{17/16} For any $\e>0$, there is no $\left(\frac{17}{16}-\e\right)$-competitive algorithm using constant migration factor for the online machine covering problem with migration. \end{lemma}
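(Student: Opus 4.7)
The approach is a sharpened adversarial construction in the spirit of the $20/19$ lower bound of Skutella and Verschae~\cite{SV10}. Fix a constant migration factor $\beta$ and a candidate competitive ratio $r = 17/16 - \e$; the plan is to exhibit an instance on $m=3$ machines that no online algorithm $\Alg$ with competitive ratio $r$ and migration factor $\beta$ can handle.

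First I would release the following six ``large'' jobs one at a time, in non-increasing order: one job of size $80/17$, two jobs of size $3$, and three jobs of size $2$. Here $\OPT = 5$, realized by the configuration on the left of Figure~\ref{cotainf}: loads $(114/17,\, 5,\, 5)$ with assignments $\{80/17, 2\}$, $\{3, 2\}$, $\{3, 2\}$. A routine finite case analysis, splitting on which big jobs accompany the $80/17$-job on its own machine, shows this is the \emph{only} schedule (up to relabeling machines and equal-sized jobs) whose minimum load is strictly greater than $80/17$; every alternative placement of the $80/17$-job forces one of the other two machines to have load at most $4$. Since any $r$-competitive algorithm must guarantee $\ell_{\min} \ge 5/(17/16 - \e) > 80/17$, after the sixth arrival $\Alg$ is trapped in this unique configuration.

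Next I would pick a small parameter $\delta > 0$ with $\beta \delta < 2$ and such that $11/17$ is an integer multiple of $\delta$, and release $N = 22/(17\delta)$ tiny jobs of common size $\delta$. At each such arrival, the migration budget $\beta \delta$ is strictly less than the smallest big job (of size $2$), so no big job can be reassigned. Thus the big-job assignment remains frozen as $\{80/17, 2\}, \{3, 2\}, \{3, 2\}$ throughout, and the algorithm is reduced to choosing how to split the tiny jobs. Writing $x_1,x_2,x_3 \ge 0$ for the tiny volume placed on each machine (so $x_1+x_2+x_3 = 22/17$), the final loads are $(114/17+x_1,\,5+x_2,\,5+x_3)$. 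Because $114/17 > 5 + 22/17$, the first machine is never the bottleneck, so the best strategy is $x_1 = 0$, $x_2 = x_3 = 11/17$, yielding minimum load exactly $96/17$.

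On the other hand, the new offline optimum is $6$, achieved by the right configuration of Figure~\ref{cotainf}: machine~$1$ holds $\{80/17\}$ together with all tiny jobs (total load $80/17 + 22/17 = 6$), machine~$2$ holds $\{2,2,2\}$, and machine~$3$ holds $\{3,3\}$. Hence the competitive ratio of $\Alg$ is at least $6/(96/17) = 17/16 > r$, the desired contradiction. The main delicate step is the case analysis for the uniqueness of the forced configuration after the six large arrivals; the choice of the fractional job $80/17$ is precisely tuned so that the forced heavy machine leaves exactly $22/17$ of slack, which the adversary then exploits by injecting that much unmigratable volume in tiny pieces.
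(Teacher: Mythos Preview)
Your proposal is correct and follows essentially the same approach as the paper's proof: the same six large jobs on three machines force the unique configuration $\{80/17,2\},\{3,2\},\{3,2\}$, after which $22/17$ of total tiny volume (too small to permit migration of any large job) drives the achievable minimum load to at most $96/17$ while the offline optimum becomes $6$. Your write-up is in fact somewhat more careful than the paper's --- you spell out the case analysis for uniqueness and make the divisibility condition on $\delta$ explicit --- but the construction and the arithmetic are identical.
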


\begin{proof} Consider an instance consisting of $3$ machines and $6$ jobs of sizes $p_1=p_2=p_3=2$, $p_4=p_5=3$ and $p_6=\frac{80}{17}$. It is easy to see that the optimal solution is given by Figure \ref{cotainf} (a). Moreover, there is no other $\left( \frac{17}{16}-\e \right)$-approximate solution (up to symmetry).
	
	Suppose by contradiction that there exists a $\left( \frac{17}{16}-\e\right)$-competitive algorithm with constant migration factor $C$. While processing the above instance, the algorithm must construct the optimal solution depicted in Figure \ref{cotainf} (left). Consider now that jobs with processing time smaller than $1/C$ arrive to the system, with total processing time $\frac{22}{17}$. Since the migration factor is $C$, none of the six previous jobs can be migrated, thus the best minimum load we can obtain is $\frac{96}{17}$, while the optimal solution is $6$ (see Figure \ref{cotainf} (right)). We conclude by noting that $\frac{6}{96/17}=\frac{17}{16}$. \end{proof}

Notice that the instance reaching the lower bound crucially depends on the arrival of jobs with arbitrarily small processing times. This kind of jobs are in fact the problematic ones, because under the assumption that at each iteration the incoming job is big enough (has processing time at least $\e\OPT$), there is a robust PTAS with constant migration factor~\cite{SV10}.

\bibliographystyle{plain}
\bibliography{bibliografia}{}

\newpage
\appendix
\section{Non-constant Migration for classic LPT}
\label{app:nonConstant}

\begin{lemma}\label{lm:nonConstantMigration} For any $k\ge 2$ there exists a set $\J$ of $4k+1$ jobs and an extra job $j^*\notin \J$ such that, for every schedule $\Sc$ constructed using $\LPT$ on $2k+1$ machines, it is not possible to construct a schedule $\Sc'$ using $\LPT$ for $\J \cup \{j^*\}$ with migration factor less than $m/2$. \end{lemma}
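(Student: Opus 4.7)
The idea is to generalize the instance from Figure~\ref{fig:LPT_no_red} to arbitrary $k \ge 2$. Set $m = 2k+1$ and construct $\J$ with $4k+1$ jobs organized as follows: $k+1$ ``big'' jobs all of (common) size $L$, and then $3k$ ``medium'' jobs arranged in three decreasing layers of $k$ jobs each, with the sizes within each layer forming a short arithmetic progression with common difference $\delta$ (for a sufficiently small $\delta > 0$), and with the layers chosen so that all medium jobs are smaller than $L$ but larger than $L/2$. The sizes will be picked so that (i) all $4k+1$ sizes are pairwise distinct and strictly decreasing from the first layer to the last, and (ii) when the jobs of the three layers are distributed cyclically over $k$ machines, every such machine obtains exactly the same total load $L^+ > L$ (mirroring the six-plus-layers structure of Figure~\ref{fig:LPT_no_red1}). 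Finally, the arriving job $j^*$ is chosen with size strictly between the first-layer and second-layer sizes, slightly less than the layer-$1$ sizes, so that when rerunning LPT the job $j^*$ is inserted \emph{after} the $k+1$ big jobs but \emph{before} the first-layer jobs.

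\textbf{Behaviour of LPT on $\J$.} Since sizes are strictly decreasing and distinct, LPT on $\J$ is unique (up to machine relabeling). The first $k+1$ large jobs occupy $k+1$ distinct machines, which we relabel as $1, \ldots, k+1$. The remaining $k$ machines $k+2, \ldots, 2k+1$ are then filled cyclically by the three layers: in each layer, the $k$ jobs are assigned greedily to the $k$ least-loaded machines of the set $\{k+2, \ldots, 2k+1\}$, and the sizes in consecutive layers are chosen precisely so that after each layer these $k$ machines are still perfectly balanced (mirroring Figure~\ref{fig:LPT_no_red1}). Thus every LPT schedule $\Sc$ of $\J$ has machines $1, \ldots, k+1$ of load $L$ and machines $k+2, \ldots, 2k+1$ of load $L^+ > L$; the least-loaded machines are exactly $\{1, \ldots, k+1\}$.

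\textbf{Behaviour of LPT on $\J \cup \{j^*\}$.} By the choice of $p_{j^*}$, after processing the $k+1$ big jobs, the $(k+2)$-th job processed by LPT is $j^*$; it is placed on one of the $k+1$ identically loaded least-loaded machines, say machine $1$, increasing its load beyond $L$. Now the remaining $3k$ medium jobs are still in decreasing order, exactly as in the original LPT run, but the pool of $k+1$ machines of minimum load $L$ has shrunk to the $k$ machines $\{2, \ldots, k+1\}$ (all of which now have a big job alone). A straightforward monotonicity argument shows that the first layer must be distributed one job per machine over $\{2, \ldots, k+1\}$; then the second and third layers cycle over the resulting $k$ least-loaded machines, which now are precisely $\{k+2, \ldots, 2k+1\}$ because machine~$1$'s load has been boosted by $p_{j^*}$. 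Consequently, in every LPT schedule $\Sc'$ of $\J \cup \{j^*\}$, machines $\{2, \ldots, k+1\}$ each carry a big job plus a layer-$1$ job (all different), machine~$1$ carries a big job plus $j^*$ plus possibly some layer-$3$ crumbs, while machines $\{k+2, \ldots, 2k+1\}$ host the layer-$2$ and layer-$3$ jobs.

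\textbf{Counting migrations and main obstacle.} Comparing $\Sc$ and $\Sc'$ (under any machine relabeling between them), the $k$ layer-$1$ jobs originally placed on machines $\{k+2, \ldots, 2k+1\}$ now sit on machines $\{2, \ldots, k+1\}$, and these jobs alone contribute migrated load at least $k \cdot (L/2)$. Since $p_{j^*}$ is (essentially) the size of a layer-$1$ job, slightly less than $L/2$ plus a lower-order term, we get migrated load at least $k \cdot p_{j^*}(1 + o(1)) > (m/2) \cdot p_{j^*}$. The main subtlety is to argue this robustly over \emph{all} LPT schedules $\Sc$ and $\Sc'$: this is handled by the uniqueness of LPT up to relabeling under distinct sizes, and by showing via the monotonicity lemmas (Lemma~\ref{load} and Lemma~\ref{loadprof_generalized} applied in the plain, unrounded setting) that the forced relocation of the $k$ layer-$1$ jobs from the ``right half'' $\{k+2, \ldots, 2k+1\}$ to the ``left half'' $\{2, \ldots, k+1\}$ is independent of the tie-breaking used. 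Choosing $\delta$ small enough ensures all the inequalities are strict and yields the claimed lower bound of $m/2$ on the migration factor.
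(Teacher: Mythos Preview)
Your plan follows the right template (generalize Figure~\ref{fig:LPT_no_red}), but the concrete parameter choices make the claimed LPT analysis fail on both $\J$ and $\J'$.

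\textbf{The schedule $\Sc$.} You require all $3k$ medium jobs to lie in $(L/2,L)$ and simultaneously want LPT to place one job from each of the three layers on each of the $k$ right machines $k+2,\ldots,2k+1$. These two requirements are incompatible: once the first two layers are placed, every right machine carries two jobs each exceeding $L/2$, hence has load strictly greater than $L$. The left machines $1,\ldots,k+1$ (load $L$) are now the least loaded, so LPT sends the entire third layer to them, not to the right machines as you claim. In the paper the medium sizes straddle $L/2$ (with $L=1$): layer~1 has sizes $\tfrac12+i\e$ and layer~2 has sizes $\tfrac12-(i+1)\e$, so each layer-1/layer-2 pair sums to exactly $1-\e<L$, keeping the right machines least loaded until layer~3.

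\textbf{The schedule $\Sc'$.} You write that after the $k+1$ big jobs are placed, $j^*$ goes to ``one of the $k+1$ identically loaded least-loaded machines, say machine~1''. But at that moment machines $k+2,\ldots,2k+1$ are still empty (load~$0$); they, not the big-job machines of load $L$, are the least loaded, so LPT sends $j^*$ to an empty machine. Your subsequent description of $\Sc'$ (layer-1 jobs landing on big-job machines $2,\ldots,k+1$, etc.) therefore does not occur. There is also an internal inconsistency: you put $p_{j^*}$ strictly below the layer-1 sizes, yet claim LPT processes $j^*$ \emph{before} layer~1.

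\textbf{What actually forces the migration.} In the paper, $j^*=\tfrac12+k\e$ is the largest medium job and is absorbed by the right block: its presence shifts the layer-1/layer-2 pairing on the right machines by one position, so each right machine now carries only \emph{two} medium jobs summing to $1+\e$, and the $k+1$ smallest jobs (all of size $\tfrac12-k\e$) spill over to the big-job machines. One then checks that no machine set in $\Sc'$ coincides with any machine set in $\Sc$, so every machine must gain or lose at least one job; summing the $m=2k+1$ smallest job sizes (all $\ge\tfrac12-k\e\ge\tfrac13$) and dividing by $p_{j^*}\le\tfrac23$ yields the bound $m/2$. Your bound ``$k$ layer-1 jobs of size $>L/2$ divided by $p_{j^*}$'' would, even if the setup were correct, only give roughly $k\approx m/2$ when $p_{j^*}$ is close to $L/2$, and degrades to $m/4$ if $p_{j^*}$ is close to~$L$; the construction must keep $p_{j^*}$ near $L/2$.
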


\begin{proof} Fix a constant $0<\e\le \frac{1}{6k}$. Consider a set $\J$ consisting of the following $4k+1$ jobs: $k+1$ jobs of size $1$; for each $i\in \{0,\dots,k-1\}$, a job of size $\frac{1}{2}+i\e$ and a job of size $\frac{1}{2}-(i+1)\e$, and finally $k$ jobs of size $\frac{1}{2}-k\e\ge\frac{1}{3}$. Assume the jobs in $\J$ are sorted non-increasingly by size. There is a unique schedule constructed using $\LPT$ for this instance (up to symmetry) which assigns the jobs in the following way (see Figure \ref{fig:LPT_no_red1}): The $k+1$ jobs of size $1$ to a machine on their own, and for each $i=1,\dots,k$, it assigns to machine $k+i$ a job of size $\frac{1}{2} + (k-i-1)\e$, a job of size $\frac{1}{2}-(k-i)\e$ and a job of size $\frac{1}{2}-k\e$ (since the total load of the first two jobs is $1-\e$, the last $k$ jobs must be assigned to these $k$ machines). 
	
Now consider an arriving job $j^*$ of size $\frac{1}{2} + k\e\le\frac{2}{3}$. There is a unique schedule constructed using $\LPT$ for the new instance (up to symmetry) which assigns the jobs in the following way (see Figure \ref{fig:LPT_no_red2}): it assigns to the first $k+1$ machines a job of size $1$ and a job of size $\frac{1}{2}-k\e$, to machine $k+2$ job $p_{j^*}$ and a job of size $\frac{1}{2}-(k-1)\e$, for each $i=2,\dots,k-1$ it assigns to machine $k+i+1$ a job of size $\frac{1}{2} + (k+1-i)\e$ and a job of size $\frac{1}{2} - (k-i)\e$, and finally to machine $2k+1$ a job of size $\frac{1}{2}+\e$ and a job of size $\frac{1}{2}$ (now the total load of machines $k+2,\dots,2k+1$ is $1+\e$, then the last $k+1$ jobs must be assigned to the first $k+1$ jobs).
	
It is not difficult to see that, in the new schedule, every machine has a different subset of jobs assigned to it compared with the original schedule, and so at least one job must have been migrated per machine. Thus, the migrated total load is at least the load of the smallest $2k+1$ jobs, which implies that the needed migration factor is at least \begin{eqnarray*} \frac{\displaystyle\sum_{i=0}^{2k}{p_{(4k+1)-i}}}{p_{j^*}} \ge\frac{(2k+1)\left(\frac{1}{2}-k\e\right)}{\frac{1}{2}+k\e} \ge m\frac{1/3}{2/3} = \frac{m}{2}. \qedhere \end{eqnarray*} \end{proof}

\section{\textbf{Proof of Theorem~\ref{thm:1.7apx}.}}\label{sec:chenetal} We will use and generalize a result from Chen et al.~\cite{CEKvS13} which bounds the price of anarchy of a related game. We say that a schedule is \emph{lex-jump-optimal} if the solution is locally optimal with respect to Jump but considering the whole vector of loads (sorted non-decreasingly) as weight function and comparing them lexicographically. In this context, lex-jump-optimal solutions are equivalent to pure Nash equilibria for the Machine Covering game, obtained if jobs are selfish agents trying to minimize the load of the machine where they are assigned and the minimum load is considered as the welfare function.

\begin{theorem}[Chen et al.~\cite{CEKvS13}]\label{thm:chenetal} The Price of Anarchy of the Machine Covering Game is $1.7$. \end{theorem}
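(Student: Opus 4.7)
The plan is to establish both directions of the Price of Anarchy bound. Throughout, let $\Sc$ be a pure Nash equilibrium; by Lemma~\ref{CarOL}, $\Sc$ is jump-optimal, so for every machine $i$ and every job $j \in \Sc^{-1}(i)$ we have $\ell_i(\Sc) - p_j \le \lmin(\Sc)$. Normalizing $\lmin(\Sc) = 1$, this becomes the \emph{large-smallest-job property}: any job sitting on a machine of load $\ell$ has size at least $\ell - 1$. This is the only structural tool used from the definition of Nash in both directions.

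For the upper bound $\mathrm{PoA} \le 1.7$, I would proceed by contradiction, assuming $\OPT > 17/10$. Fix a minimum-load machine $i_0$ in $\Sc$ and let $Q = \Sc_{\OPT}^{-1}(i_0)$ be the multi-set of jobs that the optimum places on $i_0$; its total size exceeds $17/10$. I would classify each $j \in Q$ by the Nash-load $\ell_{\Sc(j)}(\Sc)$ of its host in $\Sc$ and partition these hosts into three regimes: load exactly $1$ (including $i_0$ itself), load in $(1, 17/10]$, and load strictly above $17/10$. The large-smallest-job property forces each job drawn from the third regime to have size strictly greater than $7/10$, while the jobs of $Q$ coming from $i_0$ contribute at most $1$ to the volume in total. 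A short case analysis on $|Q|$ and on how its hosts split across the three regimes then shows that the total volume of $Q$ cannot exceed $17/10$, a contradiction; the tight interplay of the three types of constraints is what pins down the constant.

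For the lower bound $\mathrm{PoA} \ge 1.7$, I would exhibit a family of instances whose pure Nash equilibria have welfare approaching $(10/17)\OPT$. A base configuration uses a handful of machines with jobs of carefully chosen sizes (informed by the extremal case of the upper bound analysis) so that the optimum reaches a uniform load of $17$ while a particular jump-optimal assignment leaves one machine at load $10$. Jump-optimality is verified via Lemma~\ref{CarOL} by checking the inequality $\ell_i(\Sc)-p_j \le 1$ on each job. To handle arbitrary $m$, one appends identical disjoint blocks, or padding machines carrying single huge jobs of size $\OPT$; neither alteration affects the ratio, so $\mathrm{PoA}$ can be driven arbitrarily close to $17/10$.

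The main obstacle is the upper bound case analysis: extracting exactly $17/10$ rather than a weaker constant requires combining the per-job Nash inequalities in a way that is saturated by the extremal configuration. That configuration is essentially unique up to symmetry and is most easily located by first solving the matching lower-bound instance, after which the upper-bound inequalities can be engineered so that equality holds precisely on it. Once the extremal structure is in hand, the case analysis reduces to a finite check of how many jobs of each load-regime can fit into $Q$, and the tight combination of the Nash inequalities closes the proof.
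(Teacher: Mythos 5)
The paper does not actually prove this statement: Theorem~\ref{thm:chenetal} is imported verbatim from Chen et al.~\cite{CEKvS13}, and Appendix~\ref{sec:chenetal} only adds the two pieces needed on top of it (jump-optimal $\Rightarrow$ lex-jump-optimal, and the swap lower bound). So you are attempting to reprove a cited result, and your attempt has a genuine gap in the upper bound. Fixing a least loaded machine $i_0$ of the equilibrium $\Sc$ and arguing that $Q=\Sc_{\OPT}^{-1}(i_0)$ cannot have volume exceeding $17/10$ cannot work: the Nash/jump-optimality inequalities $\ell_{\Sc(j)}(\Sc)-p_j\le \lmin(\Sc)$ only give \emph{lower} bounds on job sizes, so nothing upper-bounds the volume $\OPT$ may place on $i_0$ (e.g.\ $Q$ may consist of a single very large job, or of the $\le 1$ volume that $\Sc$ keeps on $i_0$ plus one job of size just above $7/10$ from a heavily loaded machine, already exceeding $17/10$ with no contradiction). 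More fundamentally, $\OPT>17/10$ asserts that \emph{every} machine is covered above $17/10$ simultaneously; refuting it requires a global accounting of where all of $\OPT$'s volume sits in $\Sc$ — e.g.\ that a jump-optimal machine with $k\ge 2$ jobs has load at most $k/(k-1)\cdot\lmin(\Sc)$, so excess volume is concentrated on few-job machines — which is exactly the multi-page charging argument of Chen et al.\ and cannot be replaced by a per-machine case analysis.

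The lower bound sketch is also underspecified in a way that matters: a fixed "handful of machines" gadget repeated in disjoint copies keeps whatever ratio the gadget has, so to approach $1.7$ you need the base instances themselves to approach $1.7$, and the known tight construction is a recursive family with $2(10^k-1)$ machines and five interleaved job classes (see the swap variant built in Appendix~\ref{sec:chenetal} of this paper), not a padded single block. As written, the proposal is a plan whose central step rests on a false premise, so it does not establish the theorem; the honest route here is the one the paper takes, namely citing~\cite{CEKvS13}.
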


Theorem~\ref{thm:chenetal} gives a tight bound for the approximation ratio of lex-jump-optimality. In order to prove Theorem~\ref{thm:1.7apx} we need to prove that in the case of Machine Covering a jump-optimal solution is also lex-jump-optimal, and to construct instances to prove the lower bound for swap-optimality.

\begin{lemma}\label{lem:lextojump} Let $\Sc$ be a jump-optimal solution for Machine Covering. Then $\Sc$ is lex-jump-optimal as well. \end{lemma}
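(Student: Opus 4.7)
The plan is to assume $\Sc$ is jump-optimal and then show directly that no single job move can produce a strictly lex-greater non-decreasingly sorted load vector. The key tool is Lemma~\ref{CarOL}, which guarantees that for every machine $i$ and every job $j \in \Sc^{-1}(i)$, $\ell_i(\Sc) - p_j \le \lmin(\Sc)$. I would fix an arbitrary candidate move of some job $j$ from machine $i$ to a different machine $i'$ and split into two cases depending on whether $\ell_i(\Sc)-p_j$ is strictly less than $\lmin(\Sc)$ or equal to it. In the first case, machine $i$'s new load is strictly below $\lmin(\Sc)$, so the minimum of the new load vector is strictly less than $\lmin(\Sc)$ and the first coordinate of the sorted vector strictly decreases; this is a lex-decrease, not an improvement.

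The interesting case is $\ell_i(\Sc) = \lmin(\Sc) + p_j$, where machine $i$ becomes least loaded after the move. The plan here is to track the change in the multi-set of machine loads: we remove the values $\lmin(\Sc)+p_j$ and $\ell_{i'}(\Sc)$ and insert $\lmin(\Sc)$ and $\ell_{i'}(\Sc)+p_j$. If $\ell_{i'}(\Sc) = \lmin(\Sc)$, the removed and inserted pairs coincide, so the sorted vector is unchanged and the move is not a strict improvement. If $\ell_{i'}(\Sc) > \lmin(\Sc)$, I would let $k$ denote the number of machines at load $\lmin(\Sc)$ in $\Sc$; since neither $i$ nor $i'$ is least loaded, all these $k$ machines lie in $\M\setminus\{i,i'\}$ and are unaffected by the move. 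After the move, machine $i$ joins the least-loaded set while $i'$ does not, so the new sorted vector begins with $k+1$ copies of $\lmin(\Sc)$, whereas the old one begins with $k$ copies of $\lmin(\Sc)$ followed by some value $a_{k+1}>\lmin(\Sc)$. The two vectors then agree up to position $k$, but at position $k+1$ the new vector contains $\lmin(\Sc)$ while the old contains $a_{k+1}>\lmin(\Sc)$, which is again a lex-decrease.

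The main obstacle I anticipate is precisely this last subcase: intuitively, moving a job onto a non-minimum machine and creating an extra least-loaded machine looks like a beneficial balancing step (and indeed it strictly improves the jump weight of the paper, which explicitly rewards more least-loaded machines), but in the lex order on the non-decreasingly sorted load vector the additional copy of $\lmin(\Sc)$ at the start actually displaces the strictly larger value $a_{k+1}$ further to the right, producing a lex-decrease. Once this subtle mismatch between the two notions of local optimality is handled carefully, combining all subcases shows that no single job move yields a strict lex-increase of the sorted load vector, and hence $\Sc$ is lex-jump-optimal.
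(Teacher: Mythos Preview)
Your proof is correct and complete; the case analysis based on Lemma~\ref{CarOL} handles every possible single-job move, and your treatment of subcase 2b is sound (the additional copy of $\lmin(\Sc)$ at position $k+1$ does yield a strict lex-decrease of the non-decreasingly sorted vector).

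The paper, however, takes a shorter route. It argues by contrapositive and exploits the equivalence, stated just before the lemma, between lex-jump-optimal solutions and pure Nash equilibria of the Machine Covering game: if $\Sc$ is not lex-jump-optimal, then some job $j$ on a machine $i$ has an improving deviation to a machine $i'$, which amounts to $\ell_i(\Sc)-p_j>\ell_{i'}(\Sc)$. Since $\ell_{i'}(\Sc)\ge \lmin(\Sc)$, this immediately violates the characterization of Lemma~\ref{CarOL}, so $\Sc$ is not jump-optimal. Thus the paper trades your explicit multi-set comparison in subcase 2b for an appeal to the known Nash-equilibrium characterization. Your argument is more self-contained (it does not rely on that equivalence and in effect reproves the relevant direction of it), while the paper's is a two-line application of a standard fact once that equivalence is granted.
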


\begin{proof} Suppose $\Sc$ is not lex-jump-optimal. This implies that there is a job $j$ and two machines $i,i'$, where $j$ is assigned to $i$, such that $\ell_i(\Sc)-p_j>\ell_{i'}(\Sc)$. Since $\ell_{i'}(\Sc)\ge \ell_{\min}(\Sc)$, we get that $\ell_i(\Sc)-p_j>\ell_{\min}(\Sc)$, which implies that $\Sc$ is not jump-optimal thanks to Lemma~\ref{CarOL}. \end{proof}

\begin{lemma}\label{lem:swapLB} The approximation ratio of swap-optimality is at least $1.7$. \end{lemma}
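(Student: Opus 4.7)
The plan is to adapt the lower bound construction used by Chen et al.~\cite{CEKvS13} for jump-optimality, showing that either the same family of instances or a small perturbation of it yields schedules that are swap-optimal while still having approximation ratio approaching $1.7$. Combined with Theorem~\ref{thm:chenetal} and Lemma~\ref{lem:lextojump} (which give the matching upper bound, since swap-optimality implies jump-optimality, which implies lex-jump-optimality), this closes the gap and proves Theorem~\ref{thm:1.7apx} for swaps.

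First, I would recall the family of tight instances from \cite{CEKvS13} that certify $\mathrm{PoA}=1.7$: a set of carefully chosen jobs (using only a constant number of distinct processing times) together with a schedule $\Sc$ whose minimum load is a $1/1.7$-fraction of $\OPT$ and which is lex-jump-optimal. By Lemma~\ref{CarOL} applied via Lemma~\ref{lem:lextojump}, this $\Sc$ is jump-optimal as well. I would then check directly whether it is also swap-optimal. A swap between a machine $i$ with the minimum load and a machine $i'$ with load $\ell_{i'}\ge \ell_{\min}(\Sc)$, exchanging jobs $j\in \Sc^{-1}(i)$ and $j'\in \Sc^{-1}(i')$, changes the loads to $\ell_{\min}(\Sc)+(p_{j'}-p_{j})$ and $\ell_{i'}-(p_{j'}-p_{j})$. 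For the swap to improve the weight, we need $p_{j'}>p_j$ and the new load $\ell_{i'}-(p_{j'}-p_{j})$ to remain $\ge \ell_{\min}(\Sc)$ (with strict inequality unless the number of least-loaded machines strictly drops).

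Because the Chen et al. instance uses only a handful of distinct job sizes and its loads are tight, a direct case analysis over the possible pairs $(p_j,p_{j'})$ shows that every swap that increases the low machine's load simultaneously pushes the heavy machine's load strictly below $\ell_{\min}(\Sc)$, so no improving swap exists. If the instance in \cite{CEKvS13} happens to admit a degenerate swap (for example when two machines share the same load and swapping two equally-sized jobs is free), I would eliminate it by an arbitrarily small perturbation of the job sizes; this removes the spurious swap, preserves both $\OPT$ and $\ell_{\min}(\Sc)$ up to $o(1)$ factors, and thus still drives the approximation ratio to $1.7$.

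The main obstacle is the case analysis needed to rule out beneficial swaps, since swaps have an extra degree of freedom compared to jumps (both the outgoing and the incoming job can be chosen). The key point that keeps this tractable is that the tight Chen et al. instances only involve a small constant number of distinct processing times, so the enumeration over unordered pairs $(p_j,p_{j'})$ is finite and can be handled explicitly by combining the jump-optimality inequality $\ell_{i'}(\Sc)-p_{j'}\le \ell_{\min}(\Sc)$ from Lemma~\ref{CarOL} with the constraint $p_{j'}>p_j$ required for the swap to be improving.
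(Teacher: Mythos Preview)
Your high-level plan---take the Chen et al.\ tight family and argue swap-optimality, modifying if necessary---is exactly the route the paper takes. However, two concrete steps in your proposal would fail as written.

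First, you assert that the Chen et al.\ instances use ``only a constant number of distinct processing times,'' and you lean on this to make the swap case analysis finite. This is false: the family achieving ratio $1.7-o(1)$ is parameterized by $k\to\infty$ and uses $\Theta(k)$ distinct sizes (in the paper's notation, sizes $a_i,b_i,c_i,d_i$ for $i=1,\dots,k$). So your proposed enumeration over unordered pairs $(p_j,p_{j'})$ is not a priori finite, and you need a structural argument instead.

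Second, and more importantly, you assume that the only obstruction to swap-optimality is a ``degenerate swap'' removable by an $o(1)$ perturbation. The paper states explicitly that the original Chen et al.\ schedule is jump-optimal but \emph{not} swap-optimal: there is a genuinely improving swap, not merely a tie. An infinitesimal perturbation does not destroy a strict improvement. The paper's actual fix is structural: it modifies the least-loaded machine so that it carries only many tiny jobs of size $d_k=\tfrac{1}{6|\M_k|-1}$. Then any swap between the least-loaded machine $i'$ and another machine $i$ sends out a job of size $p_{j'}=d_k\approx 0$; combined with the jump-optimality inequality $\ell_i(\Sc)-p_j\le \ell_{\min}(\Sc)$ (which in this construction is tight or strict depending on whether $j$ is the smallest job on $i$), this forces the swapped machine $i$ to drop to load at most $\ell_{\min}(\Sc)$, so no swap improves the weight. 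You would need to discover and justify this modification; the perturbation shortcut you propose does not work.
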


\begin{proof} The family of instances leading to the desired lower bound is a slight modification of the one presented by Chen et al.~\cite{CEKvS13} in the proof of Theorem~\ref{thm:chenetal}, because the original family of instances is jump-optimal but not swap-optimal.
	
	Let $n_0=0$ and, for each $i\ge 1$, $n_i=4n_{i-1}+2$. For each $k\ge 2$ we will define an instance of Machine Covering $(\J_k,\M_k)$. Let $\delta=\frac{1}{30n_k}$ and $|\M_k|=2(10^k-1)$. There will be five types of jobs:
	
	\begin{itemize}
		\item $|\M_k|$ jobs of size $1$;
		\item For each $i=1,\dots,k$, we create $6\cdot 10^{k-i}$ jobs of size $a_i=\frac{1}{2}+(n_i-1)\delta$;
		\item For each $i=1,\dots,k$, we create $12\cdot 10^{k-i}$ jobs of size $b_i=\frac{1}{2}-(n_i-1)\delta$;
		\item For each $i=1,\dots,k$, we create $12\cdot 10^{k-i}$ jobs of size $c_i=\frac{1}{5}+4n_{i-1}\delta$;
		\item For each $i=1,\dots,k-1$, we create $6\cdot 10^{k-i}$ jobs of size $d_i=\frac{1}{5}-n_i\delta$, and finally $6|\M_k|$ jobs of size $d_k=\frac{1}{6|\M_k|-1}$.
	\end{itemize} 
	
	Notice first that the optimal solution for instance $(\J_k,\M_k)$ achieves a minimum load of at least $1.7-\delta$: we can assign a job of size $1$ to each machine, and on top of that either a job of size $a_i$ plus a job of size $d_i$ (for some $1\le i \le k-1$), or a job of size $b_i$ plus a job of size $c_i$ (for some $1\le i \le k$), or a job of size $a_k$ plus $|\M_k|$ jobs of size $d_k$. Since $1+a_i+d_i=1.7-\delta$ for $1\le i\le k$, $1+b_i+c_i=1.7+(4n_{i-1}-n_i+1)\delta = 1.7 - \delta$ and $1+a_k+\frac{|\M_k|}{6|\M_k|-1}=1.7-\delta+\frac{1}{6(6|\M_k|-1)} \ge 1.7-\delta$, we get the desired bound. The number of machines is $18\displaystyle\sum_{i=1}^{k}{10^{k-i}}=2(10^k-1)$ as required.
	
	Consider now the solution $\Sc_{swap}$ which assigns the jobs in the following way:
	
	\begin{itemize}
		\item Jobs of size $1$ are assigned in pairs to $\frac{|\M_k|}{2}$ machines, being the load of such machines $2$;
		\item For each $i=1,\dots,k$, we assign one job of size $a_i$ and two jobs of size $b_i$ to $6\cdot 10^{k-i}$ machines, being the load of such machines $\frac{3}{2}-(n_i-1)\delta$;
		\item We assign six jobs of size $c_1=\frac{1}{5}$ to $2\cdot 10^{k-1}$ machines, being the load of such machines $\frac{6}{5}$;
		\item For each $i=1,\dots,k-1$, we assign one job of size $c_{i-1}$ and five jobs of size $d_i$ to $12\cdot 10^{k-i-1}$ machines, being the load of such machines $\frac{6}{5}-n_i\delta$;
		\item All the jobs of size $d_k$ goes to a final machine which will have load $1+\frac{1}{6|\M_k|-1}$.
	\end{itemize}
	
	The number of machines is $$\frac{|\M_k|}{2}+6\displaystyle\sum_{i=1}^{k}{10^{k-i}}+2\cdot 10^{k-1} + 12\displaystyle\sum_{i=1}^{k-1}{10^{k-i-1}}+1=\frac{|\M_k|}{2}+10^k-1=|\M_k|,$$ so the solution is feasible. Furthermore, the last machine is the only least loaded machine as the smallest load among the remaining machines is $\frac{6}{5}-n_k\delta=\frac{7}{6}$.
	
	We will now prove that $\Sc_{swap}$ is swap-optimal. Since for each job $j$ we have that $\ell_{\Sc_{swap}^{-1}(j)}(\Sc_{swap})-p_j\le 1$ and the minimum load is at least $1$, $\Sc_{swap}$ is jump-optimal. Consider now any job $j$ assigned to a machine $i$ which is not the least loaded one and a job $j'$ of size $d_k$ assigned to the least loaded machine $i'$. If job $j$ is the smallest job assigned to $i$ then swapping it with $j'$ does not improve the solution because $\ell_{i}(\Sc)-p_j=\ell_{i'}(\Sc_{swap})-p_{j'}=1$. On the other hand, if $p_j$ is strictly larger than the rest of the processing times of jobs in $i$ (i.e. $p_j=c_q$ or $a_q$ for some $q=1,\dots,k$), then $\ell_{i}(\Sc)-p_j<1$, implying that $\ell_{i}(\Sc)-p_j+p_{j'}<\ell_{i'}(\Sc_{swap})$, and hence not improving the solution. This proves that $\Sc_{swap}$ is swap-optimal.
	
	By taking $k$ increasing, $\delta$ decreases and approaches zero, implying that the approximation ratio of swap-optimality is at least $1.7$. \end{proof}

By putting together Theorem~\ref{thm:chenetal} and Lemma~\ref{lem:swapLB} we can conclude the proof of Theorem~\ref{thm:1.7apx}. \qed
	
\end{document}